\PassOptionsToPackage{hidelinks}{hyperref}
\documentclass[AMS,Times1COL]{WileyNJDv5_arx} %STIX1COL,STIX2COL,STIXSMALL

\usepackage{silence}
\articletype{Article Type}%

\usepackage{amsmath}
\usepackage{amsfonts}

\usepackage{newtxtext,newtxmath}
\usepackage{color}
\usepackage{textcomp}
\usepackage{graphicx}
\usepackage{epsfig}
\usepackage{color,soul}
\usepackage{mathrsfs}
\usepackage{xcolor}
\usepackage[breakable]{tcolorbox}
\usepackage{mwe}
\usepackage{subcaption}
\usepackage{tabularx}
\usepackage{placeins}

\newcommand{\T}{^{\mbox{\tiny T}}}
\newcommand{\Tcc}{^{\vphantom{\mbox{\tiny T}}}}
\usepackage{url}

\newcommand{\eps}{\varepsilon}

\DeclareMathOperator{\Ker}{Ker}
\DeclareMathOperator{\Span}{span}
\newcommand{\R}{\mathbb{R}}

\newcommand{\N}{\mathbb{N}}
\let\leq\leqslant
\let\geq\geqslant

\newenvironment{proof}[1][Proof]%
{\par\noindent\textit{#1:\ }}%
{\hspace*{\fill} \rule{6pt}{6pt}}

\newenvironment{proof*}[1][Proof]%
{\par\noindent\textit{#1:\ }}{}

\renewcommand{\theenumi}{\arabic{enumi}}

\DeclareMathOperator{\im}{Im}

\IfFileExists{wasysym.sty}%
{
  
  \usepackage{wasysym}%
  }%
{}
\newenvironment{system}[1]%
{\setlength{\arraycolsep}{0.5mm}\left\{ \; \begin{array}{#1}}%
    {\end{array} \right.}
\newenvironment{system*}[1]%
{\setlength{\arraycolsep}{0.5mm} \begin{array}{#1}}%
  {\end{array}}

\usepackage{tikz}
\usetikzlibrary{shapes,calc,arrows,patterns,decorations.pathmorphing
  ,decorations.markings}
\usetikzlibrary{arrows.meta}

\received{Date Month Year}
\revised{Date Month Year}
\accepted{Date Month Year}
\journal{Journal}
\volume{00}
\copyyear{2025}
\startpage{1}

\makeatletter
\providecommand{\oddfoot@titlepage@info}{}
\makeatother

\begin{document}

\title{Scale-Free \texorpdfstring{\boldmath $\delta$-Level}\ \
  Coherent Output Synchronization of Multi-Agent Systems with Adaptive
  Protocols and Bounded Disturbances}
 
\author[4]{Anton A. Stoorvogel}
\author[1]{Ali Saberi}
\author[1]{Donya Nojavanzadeh*}

\authormark{A.A. Stoorvogel et al.}

\address[4]{\orgdiv{Department of Electrical Engineering, Mathematics
    and Computer Science}, \orgname{University of Twente},
  \orgaddress{\state{Enschede}, \country{The Netherlands}}} 
\address[1]{\orgdiv{School of Electrical Engineering and Computer
    Science}, \orgname{Washington State University},
  \orgaddress{\state{Washington}, \country{USA}}}

\corres{*Donya Nojavanzadeh. \email{donya.nojavanzadeh@wsu.edu}}

\presentaddress{ATTN: Donya Nojavanzadeh, Franklin Ave, Tustin, CA, 92780, USA}

\abstract[Abstract]{In this paper, we investigate scale-free
  $\delta-$level coherent output synchronization for multi-agent
  systems (MAS) operating under bounded disturbances/noises. We
  introduce an adaptive scale-free framework designed solely based on
  the knowledge of agent models and completely agnostic to both the
  communication topology and the size of the network. We define the
  level of coherency for each agent as the norm of the weighted sum of
  the disagreement dynamics with its neighbors.
  
  We define each agent’s coherency level as the norm of a weighted sum
  of its disagreement dynamics relative to its neighbors. The goal is
  to ensure that the network’s coherency level remains below a
  prescribed threshold $\delta$, without requiring any \emph{a-priori}
  knowledge of the disturbance.}

\keywords{$\delta$-level coherent output synchronization; adaptive
  scale-free protocols; disturbances}

\jnlcitation{\cname{\author{Stoorvogel A. A.},
    \author{Saberi A.}, and  \author{Nojavanzadeh D.}} (\cyear{2025}), 
  \ctitle{Scale-free $\delta$-level output synchronization of
    multi-agent systems with adaptive protocols and bounded
    disturbances} }  
\maketitle

\section{Introduction}

Synchronization and consensus problems in multi-agent systems (MAS)
have become a major research focus in recent years, driven by their
broad applications in cooperative control, including robot networks,
autonomous vehicles, distributed sensor networks, and power
systems. The goal of synchronization in MAS is to achieve asymptotic
agreement on a common state or output trajectory through local
interactions between agents; see
\cite{bai-arcak-wen,mesbahi-egerstedt,ren-book,wu-book,%
  liu-nojavanzedah-saberi-2022-book,saberi-stoorvogel-zhang-sannuti}
and the references therein.

In a network with full-state coupling, each agent has access to a
linear combination of its own state relative to the states of its
neighbors. If this linear combination involves only a subset of the
states relative to the corresponding states of the neighboring agents,
it is called partial-state coupling. For state synchronization under
diffusive full-state coupling, research has evolved from simple
single- and double-integrator models (\emph{e.g.},
\cite{saber-murray2,ren,ren-beard}) to more general agent dynamics
(\emph{e.g.}, \cite{scardovi-sepulchre,tuna1,wieland-kim-allgower}). State
synchronization under diffusive partial-state coupling has also been
widely investigated, including static controller design
\cite{liu-stoorvogel-saberi-nojavanzadeh-ijrnc19,liu-zhang-saberi-stoorvogel-auto},
dynamic controller design
\cite{kim-shim-back-seo,seo-back-kim-shim-iet,seo-shim-back,su-huang-tac,tuna3},
and approaches based on localized information exchange
\cite{chowdhury-khalil,scardovi-sepulchre}. More recently, we have
developed a new class of scale-free protocols for synchronization and
almost synchronization of homogeneous and heterogeneous MAS under
various practical challenges, including external disturbances, input
saturation, communication delays, and input delays; see, for example,
\cite{liu-nojavanzadeh-saberi-saberi-stoorvogel-scl,liu-donya-dmitri-saberi-stoorvogel-ccdc2020,liu-saberi-stoorvogel-nojavanzadeh-cdc19,donya-liu-saberi-stoorvogel-ACC2020}.

Synchronization and almost synchronization in the presence of external
disturbances have been widely studied in the literature. Three main
classes of disturbances are typically considered: \vspace{-3mm}
\begin{enumerate}
\item Disturbances and measurement noises with known frequencies
\item Deterministic disturbances with finite power
\item Stochastic disturbances with bounded variance
\end{enumerate}

For disturbances and measurement noise with known frequencies, it has
been shown in
\cite{zhang-saberi-stoorvogel-ACC2015,zhang-saberi-stoorvogel-CDC2016}
that exact synchronization can be achieved. Specifically,
\cite{zhang-saberi-stoorvogel-ACC2015} establishes this result for
heterogeneous MAS composed of minimum-phase, non-introspective agents
operating over time-varying directed communication graphs. The work in
\cite{zhang-saberi-stoorvogel-CDC2016} further extends these results
to non-minimum-phase agents by employing localized information
exchange.
For deterministic disturbances with finite power, the notion of
$H_\infty$ almost synchronization was introduced by Peymani \emph{et al.},
for homogeneous MAS with non-introspective agents, using additional
communication exchange \cite{peymani-grip-saberi}. The objective of
$H_\infty$ almost synchronization is to attenuate the effect of
disturbances on the synchronization error to an arbitrarily small
level, quantified in the $H_\infty$ norm. This line of work was later
extended in
\cite{peymani-grip-saberi-wang-fossen,zhang-saberi-grip-stoorvogel,zhang-saberi-stoorvogel-sannuti2}
to heterogeneous MAS with non-introspective agents, without requiring
additional communication and accommodating time-varying communication
graphs. $H_\infty$ almost synchronization via static protocols is
studied in \cite{stoorvogel-saberi-liu-nojavanzadeh-ijrnc19} for MAS
with passive and passifiable agents. More recently,
\cite{stoorvogel-saberi-zhang-liu-ejc} established necessary and
sufficient conditions for the solvability of $H_\infty$ almost
synchronization in homogeneous networks with non-introspective agents
and without extra communication exchange. Finally, we have developed a
scale-free framework for $H_\infty$ almost state synchronization in
homogeneous networks, leveraging suitably designed localized
information exchange
\cite{liu-saberi-stoorvogel-donya-almost-automatica}.

For stochastic disturbances with bounded variance, the concept of
stochastic almost synchronization was introduced in
\cite{zhang-saberi-stoorvogel-stochastic2}, where both stochastic
disturbances and disturbances with known frequencies may be
present. The goal of stochastic almost synchronization is to make the
stochastic RMS norm of the synchronization error arbitrarily small,
despite the presence of colored stochastic disturbances modeled as
outputs of linear time-invariant systems driven by white noise with
unit power spectral intensity. By augmenting the disturbance model
with the agent dynamics, one may equivalently treat the stochastic
disturbances as white noise with unit power spectral intensity. Under
linear protocols, the stochastic RMS norm of the synchronization error
then corresponds to the $H_2$ norm of the transfer function from the
disturbance to the synchronization error. Consequently, the stochastic
almost synchronization problem can be reformulated as a deterministic
$H_2$ almost synchronization problem, where the objective is to make
this $H_2$ norm arbitrarily small. These two formulations are
equivalent. Recent work on the $H_2$ almost synchronization problem
includes \cite{stoorvogel-saberi-zhang-liu-ejc}, which provides
necessary and sufficient conditions for solvability in homogeneous
networks with non-introspective agents and without additional
communication exchange. $H_2$ almost synchronization using static
protocols has also been studied in
\cite{stoorvogel-saberi-liu-nojavanzadeh-ijrnc19} for MAS composed of
passive and passifiable agents.

The above $H_\infty$ and $H_2$ almost synchronization approaches for
MAS, however, suffer from the following disadvantages: \vspace{-3mm}
\renewcommand\labelitemi{{\boldmath$\bullet$}}
\begin{itemize}
\item \textit{Tuning requirement:} The protocols developed for
  $H_\infty$ and $H_2$ almost synchronization are parameterized by a
  tuning parameter. Although the $H_\infty$ or $H_2$ norm of the
  transfer function from the external disturbances can be made
  arbitrarily small through appropriate tuning, this relationship
  depends heavily on the structure of the communication graph.
\item \textit{Dependency on disturbance size:} In $H_\infty$ and $H_2$
  almost synchronization, the size of the synchronization error scales
  linearly with the magnitude of the disturbance. Therefore, to
  guarantee a desired coherence level, prior knowledge of the
  disturbance magnitude is required to properly select the tuning
  parameter $\delta$.
\end{itemize}
On the other hand, in
\cite{donya-liu-saberi-stoorvogel-arxiv-delta} we have studied scale-free $\delta$-level coherent state synchronization for homogeneous MAS under bounded
disturbances/noises. This approach ensures
a guaranteed level of coherency—\emph{ i.e.}, a specified degree of
synchronization—without requiring prior knowledge of the network
topology or the magnitude of the disturbances. In this paper, we
consider scale-free $\delta$-level coherent \textbf{output} synchronization for
homogeneous MAS under bounded disturbances/noises. The contribution of
this work is threefold.  \vspace{-3mm}
\begin{enumerate}
\item The protocols are designed solely based on the knowledge of the
  agent models, using no information about the communication network
  such as bounds on the spectrum of the associated Laplacian matrix or
  the number of agents. That is to say, universal nonlinear protocols
  are scale-free and work for any communication network. Note that
  there is no need to impose restrictions on the connectivity of the
  graph.
\item We achieve scale-free $\delta-$level coherent output
  synchronization for MAS in the presence of bounded
  disturbances/noises such that, for any given $\delta$, one can
 restrict the level of coherency of the network to $\delta$. The only
  assumption is that the disturbances are bounded (which is a very
  reasonable condition). However, the protocol is independent of the bound
  and does not require any other knowledge about the disturbances.
\item We propose two types of adaptive protocols, \emph{i.e.,}
  collaborative and noncollaborative protocols to achieve
  $\delta$-level coherent output synchronization where in the case of
  collaborative protocols we significantly release the assumptions on
  the agent models.
\end{enumerate}

Note that we only consider disturbances to the different agents and
not measurement noises. 

\subsection*{Preliminaries on graph theory}

Given a matrix $M\in \mathbb{R}^{m\times n}$, $M\T$ denotes its
conjugate transpose. A square matrix $M$ is said to be Hurwitz stable
if all its eigenvalues are in the open left half complex plane.
$\im M$ denotes the image of matrix $M$. $M_a\otimes M_b$ depicts the
Kronecker product of $M_a$ and $M_b$. $I_n$ denotes the
$n$-dimensional identity matrix and $0_n$ denotes $n\times n$ zero
matrix; sometimes we drop the subscript if the dimension is clear from
the context. For a signal $u$, we denote the $L_2$ norm by $||u||$ or
$||u||_2$, and the $L_\infty$ norm by $||u||_\infty$.

To describe the information flow among the agents we associate a
\emph{weighted graph} $\mathcal{G}$ to the communication network. The
weighted graph $\mathcal{G}$ is defined by a triple
$(\mathcal{V}, \mathcal{E}, \mathcal{A})$ where
$\mathcal{V}=\{1,\ldots, N\}$ is a node set, $\mathcal{E}$ is a set of
pairs of nodes indicating connections among nodes, and
$\mathcal{A}=[a_{ij}]\in \mathbb{R}^{N\times N}$ is the weighted
adjacency matrix with non negative elements $a_{ij}$. Each pair in
$\mathcal{E}$ is called an \emph{edge}, where $a_{ij}>0$ denotes an
edge $(j,i)\in \mathcal{E}$ from node $j$ to node $i$ with weight
$a_{ij}$. Moreover, $a_{ij}=0$ if there is no edge from node $j$ to
node $i$. We assume there are no self-loops, \emph{i.e.} we have
$a_{ii}=0$. A \emph{path} from node $i_1$ to $i_k$ is a sequence of
nodes $\{i_1,\ldots, i_k\}$ such that $(i_j, i_{j+1})\in \mathcal{E}$
for $j=1,\ldots, k-1$. A \emph{directed tree} is a subgraph (a subset of
nodes and edges) in which every node has exactly one parent node
except for one node, called the \emph{root}, which has no parent
node. A \emph{directed spanning tree} is a subgraph which is a
directed tree containing all the nodes of the original graph. If a
directed spanning tree exists, the root has a directed path to every
other node in the tree \cite{royle-godsil}.

For a weighted graph $\mathcal{G}$, the matrix
$L=[\ell_{ij}]$ with
\[
\ell_{ij}=
\begin{system}{cl}
\sum_{k=1}^{N} a_{ik}, & i=j,\\
-a_{ij}, & i\neq j,
\end{system}
\]
is called the \emph{Laplacian matrix} associated with the graph
$\mathcal{G}$. The Laplacian matrix $L$ has all its eigenvalues in the
closed right half plane and at least one eigenvalue at zero associated
with right eigenvector $\textbf{1}$ \cite{royle-godsil}. Moreover, if
the graph contains a directed spanning tree, the Laplacian matrix $L$
has a single eigenvalue at the origin and all other eigenvalues are
located in the open right-half complex plane \cite{ren-book}.
In the absence of a directed spanning tree, the Laplacian matrix of
the graph has an eigenvalue at the origin with a multiplicity $k$
larger than $1$. This implies that it is a $k$-reducible matrix and
the graph has $k$ basic bi-components.  The book \cite[Definition
2.19]{wu-book} shows that, after a suitable permutation of the nodes,
a Laplacian matrix with $k$ basic bi-components can be written in the
following form:
\begin{equation}\label{Lstruc}
	L=\begin{pmatrix}
		L_0 & L_{01}     & L_{02} & \cdots  & L_{0k} \\
		0   & L_1        & 0      & \cdots  & 0 \\
		\vdots & \ddots  & \ddots & \ddots  & \vdots \\
		\vdots &         & \ddots & L_{k-1} & 0 \\
		0      & \cdots & \cdots  & 0       & L_k
	\end{pmatrix}
\end{equation}
where $L_1,\ldots, L_k$ are the Laplacian matrices associated to the
$k$ basic bi-components $\{ \mathcal{B}_1,\ldots, \mathcal{B}_k\}$ in
our network. These matrices have a simple eigenvalue in $0$ because
they are associated with a strongly connected component. On the other
hand, $L_0$ contains all non-basic bi-components and is a grounded
Laplacian with all eigenvalues in the open right-half plane. After
all, if $L_0$ would be singular then the network would have an
additional basic bi-component.

\section{Problem formulation}

Consider a MAS consisting of $N$ identical linear
agents
\begin{equation}\label{agent-g-noise}
  \begin{system*}{ccl}
    \dot{x}_i &=& Ax_i+Bu_i+Ew_i,\quad i=1,\ldots, N \\
    y_i &=& C x_i
  \end{system*}
\end{equation}
where $x_i\in\mathbb{R}^n$, $u_i\in\mathbb{R}^m$,
$y_i\in\mathbb{R}^p$ and
$w_i\in\mathbb{R}^w$ are state, input, output and external disturbance/noise, respectively.

The communication network is such that each agent observes a weighted
combination of its own output relative to the output of other agents,
\emph{i.e.}, for the protocol of agent $i$, the signal
\begin{equation}\label{zeta1noise}
  \zeta_i=\sum_{j=1}^{N}a_{ij}(y_i-y_j)
\end{equation}
is available where $a_{ij}\geq0$ and $a_{ii}=0$. As explained before,
the matrix $\mathcal{A}=[a_{ij}]$ is the weighted adjacency matrix of a directed
graph $\mathcal{G}$. This matrix describes the communication
topology of the network, where the nodes of network correspond to the
agents. We can also express the dynamics in terms of an associated
Laplacian matrix $L=[\ell_{ij}]_{N\times N}$, such that the signal
$\zeta_i$ in \eqref{zeta1noise} can be rewritten in the following form
\begin{equation}\label{zetanoise}
  \zeta_i= \sum_{j=1}^{N}\ell_{ij}y_j.
\end{equation}
The size of $\zeta_i(t)$ can be viewed as the level of coherency at agent $i$.

We define the set of communication graphs considered in this paper as follows.

\begin{definition}\label{def1} 
  $\mathbb{G}^N$ denotes the set of directed graphs of $N$ agents
\end{definition}

Next, in the following definition, we define the concept of
$\delta$-level-coherent output synchronization for the MAS with agents
\eqref{agent-g-noise} and communication information \eqref{zetanoise}.

\begin{definition}\label{delta-level}
  For any given $\delta>0$, the MAS \eqref{agent-g-noise} and
  \eqref{zetanoise} achieves $\delta$-level coherent output
  synchronization if there exists a $T>0$ such that
  \[
    \|\zeta_i(t)\| \le\delta,
  \]
  for all $t>T$, for all $i\in\{1,\ldots, N\}$, and for any bounded disturbance.
\end{definition}

\begin{problem}\label{prob_x}
  Consider a MAS \eqref{agent-g-noise} with associated network
  communication \eqref{zetanoise} and a given parameter
  $\delta>0$. The \textbf{scale-free $\delta$-level-coherent output
    synchronization in the presence of bounded external
    disturbances} is to find, if possible, a fully distributed
  nonlinear noncollaborative protocol using only knowledge of agent
  models, \emph{i.e.}, $(A, B, C)$, and $\delta$ of the form
  \begin{equation}\label{out_dyn}
    \begin{system}{cl}
      \dot{x}_{i,c}&=f(x_{i,c},\zeta_i),\\
      u_i&=g(x_{i,c},\zeta_i),
    \end{system}
  \end{equation}
  where $x_{i,c}\in\mathbb{R}^{n_c}$ is the state of protocol, such
  that the MAS with the above protocol achieves
  $\delta$-level-coherent output synchronization in the presence of
  disturbances with any size of the network $N$ and for any graph
  $\mathscr{G}\in\mathbb{G}^N$. In other words, the MAS achieves $\delta$-level
  coherent output synchronization as defined in Definition
  \ref{delta-level}.
\end{problem}

Next, we have Problem \ref{prob_xcol}, where for collaborative
protocols we assume there is an additional communication available
over the same network between the protocols of the different
agents. In particular, for the protocol of agent $i$ besides
\eqref{zetanoise}, they can also use
\begin{equation}\label{zetanoiseu}
  \tilde{\zeta}_i= \sum_{j=1}^{N}\ell_{ij} x_{i,c}
\end{equation}
which communicates the input signals of the different agents over the network.
This extra communication significantly reduces the number of
assumptions we need to make on our agent model.

\begin{definition}\label{delta-level2}
  For any given $\delta>0$, the MAS \eqref{agent-g-noise} and
  \eqref{zetanoise} achieves collaborative $\delta$-level-coherent output
  synchronization if there exist a $T>0$ such that
  \[
    \|\zeta_i(t)\| \le\delta,\qquad    \| C\tilde{\zeta}_i(t)\| \le\delta,
  \]
  for all $t>T$, for all $i\in\{1,\ldots, N\}$, and for any bounded disturbance.
\end{definition}

\begin{problem}\label{prob_xcol}
  Consider a MAS \eqref{agent-g-noise} with associated network
  communication \eqref{zetanoise}, \eqref{zetanoiseu} and a given
  parameter $\delta>0$. The \textbf{scale-free $\delta$-level-coherent
    output synchronization in the presence of bounded external
    disturbances} is to find, if possible, a fully distributed
  nonlinear collaborative protocol using only knowledge of agent
  models, \emph{i.e.}, $(A, B, C)$, and $\delta$ of the form
  \begin{equation}\label{out_dyncol}
    \begin{system}{cl}
      \dot{x}_{i,c}&=f(x_{i,c},\zeta_i,\tilde{\zeta}_i),\\
      u_i&=g(x_{i,c},\zeta_i,\tilde{\zeta}_i),
    \end{system}
  \end{equation}
  where $x_{i,c}\in\mathbb{R}^{n_c}$ is the state of protocol, such
  that the MAS with the above protocol achieves collaborative
  $\delta$-level-coherent output synchronization in the presence of
  disturbances with any size of the network $N$ and for any
  graph $\mathscr{G}\in\mathbb{G}^N$. In other words, the MAS achieves
  collaborative $\delta$-level coherent output synchronization as
  defined in Definition \ref{delta-level2}.
\end{problem}

\section{Noncollaborative Protocol design}

In this section, we design an adaptive protocol to achieve the
objectives of Problem \ref{prob_x}. We make the following assumption. 
\begin{assumption}\label{ass}\mbox{}
\vspace{-3mm}
  \begin{enumerate}
  \item\label{1.1} $(A,B)$ is stabilizable and $(C,A)$ is detectable.
  \item\label{1.2}  $\im E \subset \im B$.
  \item $(A,B,C)$ has relative degree $1$.
  \item $(A,B,C)$ is left-invertible.
  \item\label{1.5}  $(A,B,C)$ is minimum-phase.
  \item\label{1.6}  The disturbances $w_i$ are bounded for
    $i\in \{1,2,\ldots, N\}$. In other words, we have that
    $\| w \|_\infty < \infty$ for $i\in \{1,2,\ldots, N\}$.
  \end{enumerate}
\end{assumption}

\begin{remark}
  Assumptions \ref{ass}.\ref{1.1} is an obvious necessary condition.  As argued
  before, Assumption \ref{ass}.\ref{1.6} is basically valid for any disturbance in the
  real world. The key point is that we do not need to know the bound.
  The assumptions \ref{ass}.\ref{1.2}-\ref{ass}.\ref{1.5} are quite strong and we will see
  later that, using collaborative protocols, these assumptions can be
  relaxed quite significantly. 
\end{remark}

We design the nonlinear noncollaborative adaptive protocol as follows.

\begin{tcolorbox}[breakable,colback=white]
  We choose invertible matrices $S$ and $T$ such that
  \[
    \tilde{B}=SB = \begin{pmatrix} 0 \\ B_2 \end{pmatrix},\qquad
    \tilde{C}=TCS^{-1} = \begin{pmatrix} C_1 & 0 \\ 0 & I \end{pmatrix} 
  \]
  with $B_2$ invertible which is possible since $B$ and $CB$ are both
  left-invertible. Note that $\im E \subset \im B$ implies that:
  \[
    SE = \begin{pmatrix} 0 \\ E_2 \end{pmatrix}
  \]    
  Define:
  \begin{equation}\label{STT}
    \begin{pmatrix}
      x_{1i}\\
      x_{2i}
    \end{pmatrix}=Sx_i,\qquad
    \begin{pmatrix}
      y_{1i}\\
      y_{2i}
    \end{pmatrix}=Ty_i,\qquad
    \begin{pmatrix}
      \zeta_{1i}\\
      \zeta_{2i}
    \end{pmatrix}=T\zeta_i,
  \end{equation}
  such that the dynamics of $x_{1i}$ and $x_{2i}$ are given
  by
  \begin{equation}\label{eq6}
    \begin{system}{cl}
      \dot{x}_{1i} &= A_{11}x_{1i}+A_{12}x_{2i},\\
      \dot{x}_{2i} &= A_{21}x_{1i}+A_{22}x_{2i}+B_2u_i+E_2w_i,\\
      Ty_i&=\begin{pmatrix}
        y_{1i}\\y_{2i}
      \end{pmatrix}=\begin{pmatrix}
        C_1x_{1i}\\ x_{2i}
      \end{pmatrix}.
    \end{system}
  \end{equation}
  Since the system is minimum-phase it can be easily verified that we
  must have that $(C_1,A_{11})$ is detectable. Choose $H_1$ such that
  $A_{11}+H_1C_1$ is asymptotically stable.

  We denote:
  \[
    \tilde{A}=\begin{pmatrix} A_{11} & A_{12} \\ A_{21} &
      A_{22} \end{pmatrix},\qquad
    \tilde{E}_1=\begin{pmatrix} 0 \\ E_2 \end{pmatrix}.
  \]
  Since $(\tilde{A},\tilde{B})$ is stabilizable, there exists a
  matrix $P>0$ satisfying the following algebraic Riccati equation
  \begin{equation}\label{eq-Riccati}
    \tilde{A}\T P + P\tilde{A} -P\tilde{B}\tilde{B}\T P + I =0.
  \end{equation}

  Define $\bar{\delta}<\delta_1=\delta^2\lambda_{\min}(P)$ and choose $d$ such
  that
  \begin{equation}\label{dchoice}
    0<d< \| CS^{-1} \|^{-1} \bar{\delta}.
  \end{equation}
  
  Finally, for any parameter $d$ satisfying \eqref{dchoice}, we design the
  following adaptive protocol
  \begin{equation}\label{protocol}
    \begin{system*}{ccl}
      \dot{\hat{\xi}}_{1i} &=& A_{11} \hat{\xi}_{1i} + A_{12} \zeta_{2i}
      + H_1 (C_1\hat{\xi}_{1i} - \zeta_{1i}) \\
      \hat{\xi}_i &=& \begin{pmatrix} \hat{\xi}_{1i} \\ \zeta_{2i} \end{pmatrix} \\
      \dot{\rho}_i &=& \begin{cases} \hat{\xi}_i\T
        P\tilde{B}\tilde{B}\T P \hat{\xi}_i & 
        \text{ if } \hat{\xi}_i\T P \hat{\xi}_i \geq d, \\
        0 & \text{ if } \hat{\xi}_i\T P \hat{\xi}_i < d,
      \end{cases} \\
    u_i &=& -\rho_i \tilde{B}\T P \hat{\xi}_i.
  \end{system*}
\end{equation}
\end{tcolorbox}

We have the following theorem.

\begin{theorem}\label{theorem}
  Consider a MAS \eqref{agent-g-noise}, satisfying assumption
  \ref{ass}, with associated network communication \eqref{zetanoise}
  and a given parameter $\delta>0$. Then, the \textbf{scale-free
    {\boldmath $\delta$}-level-coherent output synchronization in the presence of
    bounded external disturbances} as stated in Problem
  \ref{prob_x} is solvable. In particular, protocol \eqref{protocol}
  with any $d$ satisfying \eqref{dchoice} solves
  $\delta$-level-coherent output synchronization in the presence of
  disturbances $w_i$, for an arbitrary number of agents $N$ and
  for any graph $\mathscr{G}\in\mathbb{G}^N$.
\end{theorem}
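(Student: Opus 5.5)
Set $\xi_i=\sum_j \ell_{ij}Sx_j$, the network disagreement in the transformed coordinates. Then $\zeta_{2i}=\xi_{2i}$ exactly, and $\zeta_{1i}=C_1\xi_{1i}$. From \eqref{eq6}, $\xi_i$ obeys
\[
\dot\xi_i=\tilde A\xi_i+\tilde B\sum_j\ell_{ij}u_j+\tilde E_1\sum_j\ell_{ij}w_j .
\]
The observer error $e_i=\hat\xi_{1i}-\xi_{1i}$ satisfies $\dot e_i=(A_{11}+H_1C_1)e_i$, because $\xi_{2i}$ is measured and the disturbance enters only the second block. Hence $e_i\to0$ exponentially, independently of the graph, and we may write $\hat\xi_i=\xi_i+\epsilon_i$ with $\epsilon_i$ exponentially decaying.

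The target is to show that each $\rho_i$ stays bounded and that, for all large $t$, $\hat\xi_i\T P\hat\xi_i\le \bar d$ for some $\bar d$ with $d\le\bar d<\delta_1$ in the appropriate sense. The choice \eqref{dchoice} then gives $\|\zeta_i\|=\|T^{-1}\tilde C\xi_i\|$ below $\delta$, after accounting for the normalization relating $CS^{-1}$ and $\lambda_{\min}(P)$ together with the vanishing observer error.

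The main step is a proof by contradiction on the adaptive gains, following the structure \eqref{Lstruc}. Suppose some $\rho_i\to\infty$. Start with a basic bi-component $\mathcal B_k$ (strongly connected, $L_k$ with simple zero eigenvalue). I would consider a Lyapunov-type function
\[
V=\sum_{i\in\mathcal B_k} v_i\,\big(\hat\xi_i\T P\hat\xi_i-d\big)_+ \quad\text{or}\quad \sum_i v_i\,\rho_i(\cdots),
\]
with $v$ the positive left eigenvector of $L_k$. The idea is that in the region $\hat\xi_i\T P\hat\xi_i\ge d$ the term $-\rho_i\hat\xi_i\T P\tilde B\tilde B\T P\hat\xi_i$ dominates, and the bounded disturbance $\tilde E_1 w$ can be absorbed because it lies in $\im\tilde B$: $2\hat\xi_i\T P\tilde E_1 w\le \rho\|\tilde B\T P\hat\xi_i\|^2+\rho^{-1}\|w\|^2$ with unknown but finite $\|w\|_\infty$. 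The coupling terms $\sum_j\ell_{ij}u_j$ are harder. I would first treat the situation where all $\rho_j$ grow, compare the gains along the left eigenvector, and use the Riccati identity \eqref{eq-Riccati} to get $\dot V\le-\|\xi\|^2$-type decrease plus integrable terms. Integrating $\dot\rho_i=\|\tilde B\T P\hat\xi_i\|^2$ over the dead-zone region, I expect to show that $\int\dot\rho_i<\infty$, contradicting unboundedness.

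Once the basic components have bounded gains and their $\hat\xi_i$ eventually enter the set $\{\hat\xi\T P\hat\xi<d\}$ (gains frozen, so the closed loop is LTI with bounded input), I would pass to the non-basic part $L_0$. Its grounded Laplacian is nonsingular, and the inputs from the basic components act as bounded exogenous signals, so the same argument repeats.

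The final step shows that the level sets are eventually reached and kept. With bounded and eventually constant $\rho_i$, the fact that $\dot\rho_i$ is integrable and the dead zone force $\hat\xi_i\T P\hat\xi_i$ to be below $d$ except on a set where $\|\tilde B\T P\hat\xi_i\|$ is small. From this we must conclude that $\hat\xi_i\T P\hat\xi_i$ stays below $\delta_1$ after some $T$, which together with $\lambda_{\min}(P)$ bounds $\|\xi_i\|$ and hence $\|\zeta_i\|\le\delta$.

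I expect the main obstacle to be the boundedness of $\rho_i$ under directed, possibly non-connected graphs with nonidentical gains. The cross terms $\rho_j\ell_{ij}\tilde B\T P\hat\xi_j$ do not cancel in a simple quadratic form. I would try first the approach of the state-feedback paper \cite{donya-liu-saberi-stoorvogel-arxiv-delta}: reduce to showing $\tilde B\T P\hat\xi_i\in L_2$ on the active region via a weighted sum with the left eigenvector of each $L_k$. The observer error enters only as an exponentially decaying perturbation and should not change that argument.
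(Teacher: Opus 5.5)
Your overall architecture matches the paper's. You have the exponential decay of $e_{1i}$ through $A_{11}+H_1C_1$, the closed loop \eqref{closeq} for $\hat\xi$, boundedness of the $\rho_i$ first on the basic bi-components and then on the whole network via \eqref{Lstruc}, and finally the bounded-gain level estimate and the passage from $\hat\xi_i$ to $\xi_i$ to $\zeta_i$. The paper also imports the gain-boundedness argument of the earlier state-synchronization paper wholesale. Even granting that, your claim that the observer error ``should not change that argument'' is exactly where the proof breaks.

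The observer error enters \eqref{closeq} through $\tilde E_2e_{1i}$, whose first block $H_1C_1e_{1i}$ lies outside $\im\tilde B$. So the matching trick you use for $w$, namely $2\hat\xi_i\T P\tilde E_1w\le\rho\|\tilde B\T P\hat\xi_i\|^2+\rho^{-1}\|w\|^2$, is not available for this term. An unmatched perturbation cannot be absorbed by the gain, and the boundedness argument (Lemma \ref{lem2}) needs $\rho_i^2e\in L_2$. Exponential decay of $e$ gives this only if you know a priori how fast an unbounded $\rho_i$ can grow, and your plan contains no such estimate.

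The paper inserts exactly this step as Lemma \ref{lem2a}. It first shows $\rho_i(t_2)-\rho_i(t_1)<\eta(t_2-t_1)+\mu$, where $e$ is harmless. This holds under the hypothesis that the gains on the basic bi-components are bounded or that the graph is strongly connected, which is why the two-stage application is needed. Only then do linear growth and exponential decay together give $\rho_i^2e\in L_2$, which is the extra input that lets the boundedness argument go through.

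Two smaller steps also need repair.

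\begin{itemize}
\item \emph{Basic bi-components as exogenous inputs.} Bounded gains neither freeze the gains nor keep $\hat\xi_i$ permanently in $\{\hat\xi\T P\hat\xi<d\}$; $\rho_i$ may keep increasing by ever smaller amounts. A loop with frozen nonidentical gains is also not guaranteed to be stable. So ``LTI with bounded input'' is not why the basic bi-components act as exogenous inputs to the rest. The correct reason is the dead-zone splitting of $\tilde B\T P\hat\xi_j$ into two parts: one bounded because $\hat\xi_j\T P\hat\xi_j<d$ there, and one in $L_2$ because $\int\dot\rho_j<\infty$.
\item \emph{Final level estimate.} Smallness of $\tilde B\T P\hat\xi_i$ on the active set does not by itself bound $\hat\xi_i\T P\hat\xi_i$, since $\tilde B\T P$ has a large kernel. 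For Lemma \ref{lem1} you need \eqref{eq-Riccati} to produce $-\|\hat\xi_i\|^2$ in $\tfrac{\textrm{d}}{\textrm{d}t}\hat\xi_i\T P\hat\xi_i$. Every other term there carries the factor $\tilde B\T P\hat\xi_i$, apart from the decaying unmatched $e$-term. You also need boundedness of the trajectories to upgrade the $L_2$ bound to pointwise convergence.
\end{itemize}
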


The proof of the above theorem relies on three lemmas which are
presented below.

\begin{lemma}\label{lem1}
  Consider a number of agents $N$ and a graph
  $\mathscr{G}\in\mathbb{G}^N$. Consider MAS \eqref{agent-g-noise} with
  associated network communication \eqref{zetanoise} and a given
  parameter $\delta>0$. Assume Assumption \ref{ass} is satisfied.
  Choose any $d$ satisfying \eqref{dchoice}. If all $\rho_i$ remain
  bounded then there exists a $T>0$ such that
  \begin{equation}\label{probdef}
    \hat{\xi}_i\T(t) P \hat{\xi}_i(t) \leq \bar{\delta},
  \end{equation}
  for all $t>T$ and for all $i=1,\ldots, N$.
\end{lemma}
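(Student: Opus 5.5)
The plan is a Lyapunov argument at the level of each agent, using the fact that a bounded $\rho_i$ forces an integrability property on $\tilde B\T P\hat\xi_i$. Since $\dot\rho_i\ge 0$, each $\rho_i$ is nondecreasing. If it stays bounded it converges to some $\rho_i^\infty$, and
\[
\int_0^\infty \dot\rho_i\,dt=\int_{\{t:\,\hat\xi_i\T P\hat\xi_i\ge d\}} \|\tilde B\T P\hat\xi_i(t)\|^2\,dt<\infty .
\]
So $\tilde B\T P\hat\xi_i$ is in $L_2$ on the set where $V_i:=\hat\xi_i\T P\hat\xi_i\ge d$. On the complement, $\|u_i\|\le \rho_i^\infty\|\tilde B\T P^{1/2}\|\sqrt d$ is bounded.

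The first step is to bring in the true relative state $\xi_i=\sum_j\ell_{ij}Sx_j$. Then $\xi_{2i}=\zeta_{2i}$, and by \eqref{eq6} the estimation error $e_i=\hat\xi_{1i}-\xi_{1i}$ satisfies $\dot e_i=(A_{11}+H_1C_1)e_i$, so $\hat\xi_i-\xi_i\to0$ exponentially. Next I would write
\[
\dot\xi_i=\tilde A\xi_i+\tilde B\sum_j\ell_{ij}u_j+\tilde E_1\sum_j\ell_{ij}w_j .
\]
Because $\im E\subset\im B$, the disturbance enters through the same channel as the input, so this equals $\tilde A\xi_i+\tilde B\, r_i$ for a signal $r_i$ that collects the neighbours' inputs and the disturbances. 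Using the Riccati equation \eqref{eq-Riccati},
\[
\dot V_i\le -\|\hat\xi_i\|^2+\|\tilde B\T P\hat\xi_i\|^2+2\hat\xi_i\T P\tilde B\, r_i+\epsilon_i(t),
\]
where $\epsilon_i$ is an exponentially decaying term coming from $e_i$.

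The main obstacle is to show that all signals, in particular the $r_i$, stay bounded, so that the term $2\hat\xi_i\T P\tilde B r_i$ is controlled. The network $L$ is arbitrary and the closed loop with frozen gains $\rho_j^\infty$ need not be stable a priori. I would first try the structure \eqref{Lstruc}: handle the basic bi-components $L_1,\dots,L_k$, which are strongly connected, and then the grounded part $L_0$, which is a cascade. In each block I would use the decomposition of $u_j$ into an $L_\infty$ part, from the times with $V_j<d$, and an $L_2$ part, from the times with $V_j\ge d$. Together with the bounded disturbances, this should give boundedness of $\hat\xi_i$ and of $\dot{\hat\xi}_i$.

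With boundedness in hand, I would argue by contradiction. Suppose $V_i(t_k)>\bar\delta$ along a sequence $t_k\to\infty$. Since $\dot V_i$ is bounded and $\bar\delta>d$, there are intervals of fixed length around each $t_k$ on which $V_i\ge d$. On these intervals $\int\|\tilde B\T P\hat\xi_i\|^2\,dt\to0$, because these are tails of a convergent integral. Barbalat-type reasoning then gives $\tilde B\T P\hat\xi_i\to0$ uniformly on those intervals. In that regime the dissipation inequality becomes $\dot V_i\le-\|\hat\xi_i\|^2/\lambda_{\max}(P)\cdot V_i\,(\text{up to }o(1))$, with $\|\hat\xi_i\|^2\ge V_i/\lambda_{\max}(P)$ on these intervals. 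Hence $V_i$ strictly decreases on every such interval for $k$ large. This is incompatible with $V_i$ climbing from $d$ to above $\bar\delta$ infinitely often, so the required $T$ in \eqref{probdef} exists.
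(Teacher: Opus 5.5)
Your overall route is the right one and matches what the paper relies on. The paper defers to Lemma~1 of the earlier $\delta$-level paper; the in-paper analogue is the proof of Lemma~\ref{lem1col}. The ingredients are a per-agent Lyapunov function, a split of the adaptation signal into a part controlled by $d$ and a part whose $L_2$ tail vanishes because $\rho_i$ converges, boundedness of all signals, and a uniform-continuity argument. Your last step is exactly what is needed here, because the threshold is on $V_i$ while the adaptation rate is $\|\tilde{B}\T P\hat{\xi}_i\|^2$: you show $\tilde{B}\T P\hat{\xi}_i$ becomes small on $\{V_i\geq d\}$, and use $\im E\subset\im B$ so that disturbance and coupling enter $\dot{V}_i$ only through $\tilde{B}\T P\hat{\xi}_i$.

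The real gap is the boundedness step. You call it the main obstacle, leave it at ``this should give boundedness'', and misdiagnose it. No bi-component decomposition and no stability of a frozen-gain loop is involved. Bounded $\rho_j$ already makes every input a priori the sum of an $L_\infty$ and an $L_2$ signal, whatever the network does. With $\kappa=\|\tilde{B}\T P^{1/2}\|^2$ you have $\|\tilde{B}\T P\hat{\xi}_j\|^2\leq \kappa d+\dot{\rho}_j$ for all $j$ and all $t$. Hence $\|u_j\|^2\leq (\rho_j^\infty)^2(\kappa d+\dot{\rho}_j)$. Young's inequality on the two cross terms of your expression for $\dot{V}_i$ then gives
\[
\dot{V}_i \leq -\frac{1}{2\lambda_{\max}(P)}V_i + K + c\sum_{j=1}^N \dot{\rho}_j + c'\|e_i\|^2 ,
\]
with $K$ depending only on $d$, the $\rho_j^\infty$, the entries of $L$ and $\|w\|_\infty$. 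Since $\dot{\rho}_j\in L_1$ and $e_i\in L_2$, each $V_i$ is bounded, agent by agent. The structure \eqref{Lstruc} plays no role in this lemma; the paper needs it only for boundedness of the $\rho_i$ (Lemmas~\ref{lem2a} and~\ref{lem2}). Afterwards $u_j$, $r_i$ and $\dot{\hat{\xi}}_i$ are bounded and your final argument applies.

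Three smaller repairs.
\begin{enumerate}
\item Your $\epsilon_i$ is a cross term $2\hat{\xi}_i\T P M e_i$ for some constant matrix $M$. It decays exponentially only once $\hat{\xi}_i$ is known to be bounded, so as written that inequality is circular. The Young step above removes this.
\item State the contradiction uniformly: fix $d'\in(d,\bar{\delta})$; then for all sufficiently large $t$ with $V_i(t)\geq d'$ one has $\dot{V}_i(t)\leq -c<0$. This gives forward invariance of $\{V_i<d'\}$ for large $t$. It also covers the case you do not treat, where $V_i$ never drops back below $d'$: it would then decrease at rate $c$ forever.
\item You use $d<\bar{\delta}$. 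This is genuinely necessary: otherwise a constant matched disturbance can hold $V_i$ in $(\bar{\delta},d)$ with the adaptation switched off. But \eqref{dchoice} implies it only when $\|CS^{-1}\|\geq 1$, so state it explicitly.
\end{enumerate}
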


\begin{proof}
  The proof in this lemma follows in exactly the same way as the proof
  of \cite[Lemma 1]{donya-liu-saberi-stoorvogel-arxiv-delta}. The fact
  that we have an additional unmatched disturbance $e$ does not matter
  because it converges to zero exponentially.
\end{proof}

\begin{lemma}\label{lem2a}
  Consider MAS \eqref{agent-g-noise} with associated network
  communication \eqref{zetanoise} and protocol \eqref{protocol}.
  Assume Assumption \ref{ass} is satisfied. Additionally, assume that
  either the $\rho_i$ associated to agents belonging to the basic
  bi-components are bounded or the graph is strongly
  connected. In that case, there exists $\eta,\mu >0$ such that
  \[
    \rho_i (t_2) -\rho_i(t_1) < \eta (t_2-t_1) + \mu
  \]
  for any $t_1,t_2$ with $t_2>t_1$ and for $i=1,\ldots,N$. 
\end{lemma}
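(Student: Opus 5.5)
The plan is to bound the growth rate of every $\rho_i$ through a Lyapunov function for the disagreement dynamics in the coordinates $\xi_i=\sum_j \ell_{ij}Sx_j$, for which $\hat\xi_i-\xi_i$ is an exponentially decaying error. First we write the closed loop in these coordinates. Since $\im E\subset\im B$ and $\tilde B=\begin{pmatrix}0\\B_2\end{pmatrix}$, the stacked vector $\xi=(L\otimes I)\bar x$ with $\bar x_i=Sx_i$ satisfies
\[
\dot\xi=(I\otimes\tilde A)\xi-(L\otimes \tilde B)\,\mathrm{diag}(\rho_i)(I\otimes\tilde B\T P)\hat\xi+(L\otimes \tilde E_1)w .
\]
The observer error $e_i=\hat\xi_{1i}-\xi_{1i}$ satisfies $\dot e_i=(A_{11}+H_1C_1)e_i$, because the $x_1$-subsystem carries no input and no disturbance. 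It therefore decays exponentially, independently of the $\rho_i$, and $\hat\xi_i=\xi_i+(e_i,0)$. Because $\dot\rho_i=\|\tilde B\T P\hat\xi_i\|^2\le \|\tilde B\T P\|^2\|\hat\xi_i\|^2$, the lemma follows once we show that $\int_{t_1}^{t_2}\|\hat\xi_i\|^2\,dt\le \eta'(t_2-t_1)+\mu'$, or more simply that $\hat\xi_i$ is uniformly bounded on $[0,\infty)$.

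Next we treat the basic bi-components. In the first case the $\rho_i$ of agents in $\mathcal B_1,\dots,\mathcal B_k$ are bounded, and hence converge. The $\xi$-variables of each $\mathcal B_j$ depend only on agents in $\mathcal B_j$, by the block structure \eqref{Lstruc}. Along any trajectory, $\dot\rho_i\ge0$, $\rho_i$ is bounded, and $\rho_i$ stays constant while $\hat\xi_i\T P\hat\xi_i<d$. From these facts we get $\int_0^\infty\|\tilde B\T P\hat\xi_i\|^2\mathbb 1_{\{\hat\xi_i\T P\hat\xi_i\ge d\}}\,dt<\infty$. Combining this with the Riccati identity \eqref{eq-Riccati} gives boundedness of $\hat\xi_i$: the linear system $\dot{\bar x}=(I\otimes \tilde A)\bar x+(I\otimes\tilde B)u+\dots$ has $u$ with finite energy on the large set and bounded on the small set, and detectability from $\xi$ controls it. For that step we simply invoke the arguments of \cite[Lemma 2]{donya-liu-saberi-stoorvogel-arxiv-delta}, adding the exponentially decaying term $e$ as an extra input. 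For the non-basic part $L_0$, which is a grounded Laplacian with all eigenvalues in the open right half plane, the $\xi$-dynamics are driven by the bi-component outputs, which are now known to be bounded. Here we use a Lyapunov function of the form $V=\sum_i \nu_i\,\xi_i\T P\xi_i$, with $\nu_i>0$ the weights making $\mathrm{diag}(\nu)L_0+L_0\T\mathrm{diag}(\nu)>0$, together with the same argument, to obtain linear growth of $\rho_i$.

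In the strongly connected case we use $V=\sum_i\nu_i\hat\xi_i\T P\hat\xi_i$, where $\nu$ is the left eigenvector of $L$ for eigenvalue $0$. The Riccati term $-\hat\xi\T(\dots)P\tilde B\tilde B\T P\hat\xi$ combined with the $\rho$-weighted coupling produces a term of the form $-\sum\rho_i\|\tilde B\T P\hat\xi_i\|^2$ plus bounded disturbance contributions. This yields $\dot V\le -V+c$ whenever the $\rho_i$ are large, so $V$ is ultimately bounded by a constant independent of $t_1$, which gives $\eta,\mu$.

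The main obstacle is the sign-indefinite cross term arising from the directed Laplacian multiplying $\mathrm{diag}(\rho_i)$, which is nonsymmetric and time-varying. The first thing to try is to exploit that, over intervals where $\rho_i$ grows, $\hat\xi_i\T P\hat\xi_i\ge d$. Then $\rho_i$ increments are controlled by $\int\|\tilde B\T P\hat\xi_i\|^2$, and a bound of the form $\int_{t_1}^{t_2}\|\hat\xi_i\|^2\le c_1(t_2-t_1)+c_2$ is obtained from an energy estimate where $\rho$ appears only on the dissipative side.
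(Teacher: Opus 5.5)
There is a genuine gap, and it sits exactly at the obstacle you name: Lyapunov functions with fixed weights cannot get past it. From \eqref{closeq}, $V=\sum_i\nu_i\hat\xi_i\T P\hat\xi_i$ produces the coupling term $-\hat\xi\T[(\mathrm{diag}(\nu)L\rho+\rho L\T\mathrm{diag}(\nu))\otimes P\tilde B\tilde B\T P]\hat\xi$. For heterogeneous gains this is indefinite even on $\im L$, and it is not $-\sum_i\rho_i\|\tilde B\T P\hat\xi_i\|^2$. Take the directed 3-cycle $a_{12}=a_{23}=a_{31}=1$ (so $\nu=\textbf{1}$) and $\hat\xi_i=c_iz$ with $c=(-3,2,1)\T\in\im L$. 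Then the term equals $-2(12\rho_1+10\rho_2-\rho_3)\|\tilde B\T Pz\|^2$, which is positive once $\rho_3>12\rho_1+10\rho_2$. The same objection applies to your weights for $L_0$: a fixed diagonal $\nu$ with $\mathrm{diag}(\nu)L_0+L_0\T\mathrm{diag}(\nu)>0$ does not make $\mathrm{diag}(\nu)L_0\rho_0+\rho_0L_0\T\mathrm{diag}(\nu)$ positive for all positive diagonal $\rho_0$. Three further problems follow. First, ``whenever the $\rho_i$ are large'' may never happen, because in a strongly connected graph some gains can stay bounded while others diverge. Second, uniform boundedness of $\hat\xi_i$ is more than is available at this stage; in the paper's line of argument it only emerges in the proof of Lemma \ref{lem2}, which takes the present growth bound as input. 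Third, your bi-component step (``detectability from $\xi$ controls it'') does not follow as stated, since $\tilde A$ may be unstable; it is also unnecessary.

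The missing idea is to remove the gains from the coupling rather than try to dominate them. The paper defers to \cite[Lemma 2]{donya-liu-saberi-stoorvogel-arxiv-delta}, whose mechanism is written out for the observer in Lemma \ref{lem2acol}. It splits the agents by unbounded gains ($i\le k$) versus bounded gains ($i>k$), not by basic versus non-basic bi-components. The hypothesis then makes $L_{11}$ invertible when $k<N$. The bounded-gain agents enter only through $\rho_j\tilde B\T P\hat\xi_j$. This splits into an $L_\infty$ part (where $\hat\xi_j\T P\hat\xi_j<d$) and an $L_2$ part (where $\|\tilde B\T P\hat\xi_j\|^2=\dot\rho_j$), so no bound on $\hat\xi_j$ is needed. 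Next, pass to coordinates in which the coupling reads $\rho^kL_{11}$, for example agent states $Sx_i$ after eliminating the bounded-gain agents via $L_{11}^{-1}L_{12}$ as in \eqref{barxt2xxx}. There the Lyapunov function is weighted by the inverse gains, as in \eqref{Vj}. The coupling then contributes the constant matrix $H^kL_{11}+L_{11}\T H^k>2\gamma L_{11}\T L_{11}$ of \eqref{HkL11}, and the derivative of the weight has a favorable sign. This gives $\dot V_k\leq -cV_k-\gamma\sum_{i\le k}\dot\rho_i+(\text{bounded})+(L_2)$ as in \eqref{ttg1}. Hence $V_k$ is bounded, and integrating over $[t_1,t_2]$ gives $\gamma\sum_{i\le k}[\rho_i(t_2)-\rho_i(t_1)]\leq V_k(t_1)+c_1(t_2-t_1)+c_2$. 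So the linear growth comes from the dissipation term; boundedness of $V_k$ alone controls $\hat\xi$ only up to a factor that grows with the gains. When all gains diverge ($k=N$, $L$ singular), the weight $\rho^{-1}H^N$ is replaced by $Q_\rho$ of \eqref{Qrho}. This matrix satisfies $Q_\rho\rho L=H^NL$ and $Q_\rho\textbf{1}=0$, and it is decreasing in $\rho$ (Lemma \ref{2.9}). The exponentially decaying $e$ contributes only an integrable term, which is the paper's remark.
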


\begin{proof}
  The proof in this lemma follows in exactly the same way as the proof
  of \cite[Lemma 2]{donya-liu-saberi-stoorvogel-arxiv-delta}. The fact
  that we have an additional unmatched disturbance $e$ does not matter
  because it converges to zero exponentially.
\end{proof}

\begin{lemma}\label{lem2}
  Consider MAS \eqref{agent-g-noise} with associated network
  communication \eqref{zetanoise} and the protocol \eqref{protocol}.
  Assume Assumption \ref{ass} is satisfied. Additionally, assume that
  either the $\rho_i$ associated to agents belonging to the basic
  bi-components are bounded or the graph is strongly connected. In that
  case, all $\rho_i$ remain bounded.
\end{lemma}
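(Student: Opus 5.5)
We argue by contradiction, following the structure of \cite[Lemma 3]{donya-liu-saberi-stoorvogel-arxiv-delta} and adapting it to the output-feedback setting in the same way as for Lemmas \ref{lem1} and \ref{lem2a}.

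\textbf{Error dynamics.} First we write the closed-loop dynamics in terms of $\xi_i=\sum_j \ell_{ij} S x_j$. Set $e_i=\hat{\xi}_{1i}-\xi_{1i}$. Because $\zeta_{2i}=\xi_{2i}$ exactly and $\zeta_{1i}=C_1\xi_{1i}$, we get $\dot e_i=(A_{11}+H_1C_1)e_i$, so $e_i$ decays exponentially, independently of the network and of $\rho$. The observer state therefore satisfies
\[
\dot{\hat{\xi}}_i=\tilde{A}\hat{\xi}_i+\tilde{B}\sum_j\ell_{ij}u_j+\tilde{E}_1\sum_j\ell_{ij}w_j+\begin{pmatrix} H_1C_1 e_i-A_{11}e_i+\dot e_i \\ -A_{21}e_i\end{pmatrix},
\]
that is, the state-coupling dynamics of \cite{donya-liu-saberi-stoorvogel-arxiv-delta} plus an exponentially decaying perturbation. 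The remaining argument concerns only these $\hat{\xi}$-dynamics.

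\textbf{Contradiction setup.} Suppose some $\rho_i$ is unbounded. Since the $\rho_i$ are nondecreasing, each one either converges or diverges to infinity. Let $\mathcal{U}$ be the set of agents with $\rho_i\to\infty$. In the first case of the hypothesis, the basic bi-components contain no element of $\mathcal{U}$. Using the structure \eqref{Lstruc}, we pick a strongly connected component of $L_0$, or the whole graph in the strongly connected case, that contains agents of $\mathcal{U}$ and whose upstream neighbours all have bounded $\rho$. Those upstream agents act as bounded exogenous inputs. This reduces the problem to a grounded subnetwork with Laplacian block $\bar L$, in which either all eigenvalues lie in the open right half plane or the block is strongly connected.

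\textbf{Key estimate.} The next step is to use the Lyapunov function $V=\sum_i \hat{\xi}_i\T P\hat{\xi}_i$, weighted by the left Perron-type vector of $\bar L$. With the Riccati equation \eqref{eq-Riccati} this gives
\[
\dot V\le -\alpha V-\sum_i c_i\rho_i\|\tilde{B}\T P\hat{\xi}_i\|^2+(\text{cross terms in }w,\ e,\ \text{upstream inputs}).
\]
Lemma \ref{lem2a} says that $\rho_i$ grows at most linearly. Once the $\rho_i$ in $\mathcal{U}$ are large, the high-gain term dominates the bounded disturbance, since $\im E\subset\im B$ makes $w$ matched. As a result,
\[
\int_{t_1}^{t_2}\hat{\xi}_i\T P\tilde{B}\tilde{B}\T P\hat{\xi}_i\,dt
\]
stays bounded over long intervals. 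This contradicts $\rho_i\to\infty$, because $\dot\rho_i$ is exactly this integrand whenever it is nonzero.

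\textbf{Main obstacle.} The hardest step is handling the coupling between agents with large and agents with moderate gain inside the same component, where $\bar L$ is not symmetric. I would first try the scaling argument of \cite{donya-liu-saberi-stoorvogel-arxiv-delta}: normalize by $\rho_{\max}(t)$, use the linear growth bound of Lemma \ref{lem2a} to keep ratios controlled on time windows, and show that $\dot\rho$ must integrate to a finite value. The exponentially decaying observer error $e$ enters only as an $L_2\cap L_\infty$ perturbation and does not affect the argument.
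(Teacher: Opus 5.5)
Your overall plan (treat the observer error as an exponentially decaying perturbation and rerun Lemma~3 of the state-synchronization reference) is the paper's plan. However, you dismiss exactly the point the paper's proof is about. In \eqref{closeq} the error enters as $(I\otimes\tilde{E}_2)e$ with $\im\tilde{E}_2\not\subset\im\tilde{B}$, so the high gain does not attenuate it. Equivalently, in the agent coordinates $Sx_i$, where the coupling takes the form $\rho L$ needed for the $\rho^{-1}$-weighted Lyapunov functions of that proof, it appears as $-(\rho\otimes\tilde{B}\tilde{B}\T P)e$, i.e.\ multiplied by the unbounded gains. Hence knowing $e\in L_2\cap L_\infty$ is not enough: the paper needs $\rho_i^2 e\in L_2$. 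It obtains this by combining Lemma~\ref{lem2a} (at most linear growth of $\rho_i$) with the exponential decay of $e$. You cite Lemma~\ref{lem2a} only to control ratios of gains and never make this connection, so the one new ingredient of the proof is missing. (Minor: the first block of your perturbation simplifies to $2H_1C_1e_i$; it should be $H_1C_1e_i$.)

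Your self-contained ``key estimate'' also fails. With $z_i=\tilde{B}\T P\hat{\xi}_i$, the coupling contributes $-z\T\big((H\bar{L}\rho+\rho\bar{L}\T H)\otimes I\big)z$ to $\tfrac{d}{dt}\sum_i h_i\hat{\xi}_i\T P\hat{\xi}_i$, because the gain multiplies the neighbours' inputs ($L\rho$, not $\rho L$). This matrix is indefinite for heterogeneous gains, e.g.\ $\bar{L}=\begin{pmatrix}2&-1\\-1&2\end{pmatrix}$, $H=I$, $\rho=\mathrm{diag}(1,r)$ with $r\geq 14$; for any fixed diagonal $H>0$ the same happens once the gain ratio is large. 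So no bound of the form $-\sum_i c_i\rho_i\|z_i\|^2$ follows. This is why the actual argument (mirrored in Lemmas~\ref{lem2acol} and \ref{lem2col}) uses three devices: $\rho^{-1}$-weighted Lyapunov functions so that $\rho^{-1}H\rho L=HL$, the $Q_\rho$ construction in the strongly connected case, and a time-varying sorting of the gains with a recursion. The ``main obstacle'' you defer is the whole proof.

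Moreover, ``the integral stays bounded over long intervals'' does not contradict $\rho_i\to\infty$. You also never use the dead zone $d$, which is the actual mechanism: a persistent disturbance keeps $z_i$ from vanishing. The contradiction comes from showing that the threshold $\hat{\xi}_i\T P\hat{\xi}_i\geq d$ is eventually never reached, so adaptation stops. That in turn follows from boundedness of $\int_\nu^{\nu+1}\rho\,\|\cdot\|^2$ on unit windows together with bounded derivatives. Finally, upstream agents with bounded $\rho_j$ feed in $\rho_j\tilde{B}\T P\hat{\xi}_j$, which is a priori only $L_\infty+L_2$ (split by the dead zone), not bounded.
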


\begin{proof}
  The proof in this lemma follows the same arguments as the proof of
  \cite[Lemma 3]{donya-liu-saberi-stoorvogel-arxiv-delta}. However, we need
  one additional property. Since \eqref{lem2a} shows that the
  $\rho_i$ can grow at most linearly over time while we already knew
  that $e$ converges to zero exponentially, we find that:
  \[
    \rho_i^2 e \in L_2,
  \]
  for $i=1,\ldots ,N$. Using this property the arguments of
  \cite[Lemma 3]{donya-liu-saberi-stoorvogel-arxiv-delta} establish
  that all the $\rho_i$ remain bounded.
\end{proof}

\begin{proof}[Proof of Theorem \ref{theorem}]
  Consider the protocol \eqref{protocol}. Define
  \[
    \xi_i=\sum_{i=1}^N \ell_{ij} Sx_j,\qquad
    \xi_{1i}=\sum_{i=1}^N \ell_{ij} x_{1j},\qquad
    \xi_{2i}=\sum_{i=1}^N \ell_{ij} x_{2j}.
  \]
  Note that we have that $C\xi_i=\zeta_i$,\ $C_1\xi_{1i}=\zeta_{1i}$ and $\xi_{2i}=\zeta_{2i}$. 
  We obtain:
  \[
    \dot{e}_{1i} = (A_{11}+H_1C_1)e_{1i},
  \]
  where $e_{1i}= \hat{\xi}_{1i}-\xi_{1i}$. Clearly $e_{1i}$ converges to zero
  exponentially and its behavior is independent of our scheduling or
  our external disturbance $w$. We define:
  \[
    e_i = \begin{pmatrix} e_{1i} \\ 0 \end{pmatrix},\qquad
    \tilde{E}_2=\begin{pmatrix} H_1C_1 \\ -A_{21} \end{pmatrix}
  \]
  and
  \[
    \hat{\xi}=\begin{pmatrix} \hat{\xi}_1 \\ \vdots \\
      \hat{\xi}_N \end{pmatrix},\qquad
    e=\begin{pmatrix} e_1 \\ \vdots \\
      e_N \end{pmatrix},\qquad
    w=\begin{pmatrix} w_1 \\ \vdots \\
      w_N \end{pmatrix}.\qquad 
  \]
  We obtain:
  \begin{equation}\label{closeq}
    \dot{\hat{\xi}} = (I\otimes A)\hat{\xi} - (L\rho\otimes
    \tilde{B}\tilde{B}\T P)\hat{\xi} + (L\otimes \tilde{E}_1) w +
    (I\otimes \tilde{E}_2) e.
  \end{equation}
  Note that we obtain the same structure as in \cite[equation
  (12)]{donya-liu-saberi-stoorvogel-arxiv-delta} except for the extra
  term involving $e$. The scheduling in \eqref{protocol} has exactly
  the same structure as in \cite[equation
  (8)]{donya-liu-saberi-stoorvogel-arxiv-delta}.

  We can not directly apply the proofs of
  \cite{donya-liu-saberi-stoorvogel-arxiv-delta} because of the extra
  term involving $e$ in \eqref{closeq}. Clearly $e$ can be viewed as
  an additional disturbance which is bounded (even converging to zero
  exponentially). However it is an unmatched disturbance. We have:
  \[
    \im \tilde{E}_1 \subset \im \tilde{B},
  \]
  but the same property does not hold for $\tilde{E}_2$.

  The Laplacian matrix of the system in general has the form
  \eqref{Lstruc}. We note that if we look at the dynamics of the
  agents belonging to one of the basic bi-components then these
  dynamics are not influenced by the other agents and hence can be
  analyzed independent of the rest of network. The network within one
  of the basic bi-components is strongly connected and we can apply
  Lemmas \ref{lem2a} and \ref{lem2} to guarantee that the $\rho_i$
  associated to a basic bi-component are all bounded.

  Next, we look at the full network again. We have already established
  that the $\rho_i$ associated to all basic bi-components are all
  bounded. Then we can again apply Lemmas \ref{lem2a} and \ref{lem2}
  to conclude that the other $\rho_i$ not associated to basic
  bi-components are also bounded.

  After having established that all the $\rho_i$ are all bounded, we
  can then apply Lemma \ref{lem1} to conclude that \eqref{probdef}
  is satisfied. Since $\hat{\xi}_i-\xi_i=e_i$ converges to zero
  exponentially,
  we can conclude there exists $T>0$ such that:
  \[
    \xi_i\T(t) P \xi_i(t) \leq \delta_1,
  \]
  for all $t>T$ and for all $i=1,\ldots, N$. Since
  $\zeta_i=CS^{-1}\xi_i$, this immediately implies:
  \[
    \|\zeta_i(t)\| \le \delta,
  \]
  for all $t>T$. Therefore, we have established that we achieve
  scale-free $\delta$-level-coherent output synchronization.
\end{proof}

\section{Collaborative Protocol design}

In this section, we design an adaptive collaborative protocol to achieve the
objectives of Problem \ref{prob_xcol}. We make the following assumption.
\begin{assumption}\label{asscol}\mbox{}
\vspace{-3mm}
  \begin{enumerate}
  \item $(A,B)$ is stabilizable and $(C,A)$ is observable.
  \item The system $(A,B,C)$ is right-invertible and
    minimum-phase.
  \item The disturbances $w_i$ are bounded for
    $i\in \{1,2,\ldots, N\}$. In other words, we have that
    $\| w \|_\infty < \infty$ for $i\in \{1,2,\ldots, N\}$.
  \end{enumerate}
\end{assumption}

\begin{remark}
 Compared to the design of the noncollaborative protocol, strong assumptions such as
  $\im E \subset \im B$ and relative degree $1$ have been removed
, although we now assume that $(A,B,C)$ is right-invertible. If
  $\im E \subset \im B$ is satisfied then we can easily show that the
  assumption of right-invertibility is no longer needed.
\end{remark}

We design the nonlinear collaborative adaptive protocol as follows.

\begin{tcolorbox}[colback=white]
  According to \cite{zhou-duan-lin}, there exists $\eta>0$ and $Q>0$ such that:
  \begin{equation}\label{are0}
    AQ+QA\T-QC\T CQ+\eta Q = 0.     
  \end{equation}
  We will use the following observer:
  \begin{equation} \label{protocol2.1}
    \dot{\hat{x}}_i=A\hat{x}_i+Bu_i-\rho_i QC\T
    (C\tilde{\zeta}_i-\zeta_i),
  \end{equation}
  with $\zeta_i$ given by \eqref{zetanoise} and $\tilde{\zeta}_i$ given by \eqref{zetanoiseu}
  with $x_{i,c}=\hat{x}_i$. Define:
  \[
    e_i=\hat{x}_i-x_i,\qquad \xi_i = \sum_{j=1}^N \ell_{ij} e_j,\qquad
    \tilde{e}_i=\sum_{j=1}^N \ell_{ij} Ce_j.
  \]
  We have:
  \[
    \tilde{e}_i = C\tilde{\zeta}_i-\zeta_i = C\xi_i,
  \]
  we choose $d$ such
  that
  \begin{equation}\label{dchoice2}
    0<4d< \delta^2,
  \end{equation}
  and we choose the following adaptive gain:
  \begin{equation} \label{protocol2.2}
    \dot{\rho}_i = \begin{cases}
      \tilde{e}_i\T \tilde{e}_i & \text{ if } \tilde{e}_i\T \tilde{e}_i \geq d, \\ 
      0 & \text{ otherwise}.
    \end{cases}
  \end{equation}
  Let $P_{\alpha_i}$ be defined by:
  \begin{equation}\label{are}
    A\T P_{\alpha_i} + P_{\alpha_i} A - \alpha_i P_{\alpha_i} BB\T
    P_{\alpha_i} +2\eps P_{\alpha_i} +C\T C = 0,
  \end{equation}
  where $\eps>0$ is small enough such that $(A+\eps I,B,C)$ is
  right-invertible and minimum-phase. Note that $P_{\alpha}\rightarrow
  0$ as $\alpha\rightarrow \infty$. Finally, we choose a second adaptive gain:
  \begin{equation} \label{protocol2.3}
    \dot{\alpha}_i = \begin{cases}
      1 & \text{ if } \tilde{\zeta}_i\T C\T C \tilde{\zeta}_i \geq 1, \\
      \tilde{\zeta}_i\T C\T C \tilde{\zeta}_i & \text{ if } 1 >
      \tilde{\zeta}_i\T C\T C  \tilde{\zeta}_i \geq d, \\  
      0 & \text{ otherwise},
    \end{cases}
  \end{equation}
  and the associated feedback:
  \begin{equation} \label{protocol2.4}
    u_i = -\alpha_i B\T P_{\alpha_i} (\hat{x}_i+\tilde{\zeta}_i).
  \end{equation}  
\end{tcolorbox}

We have the following theorem.

\begin{theorem}\label{theorem2}
  Consider a MAS \eqref{agent-g-noise}, satisfying assumption
  \ref{asscol}, with associated network communication
  \eqref{zetanoise} and a given parameter $\delta>0$. If the agent
  model has uniform rank, then the
  \textbf{scale-free {\boldmath $\delta$}-level-coherent output
    synchronization in the presence of bounded external
    disturbances} as stated in Problem \ref{prob_xcol} is
  solvable. In particular, the protocol given by \eqref{protocol2.1},
  \eqref{protocol2.2} and \eqref{protocol2.3} with any $d$ satisfying
  \eqref{dchoice2} solves $\delta$-level-coherent output
  synchronization in the presence of disturbances $w_i$, for an
  arbitrary number of agents $N$ and for any graph
  $\mathscr{G}\in\mathbb{G}^N$.
\end{theorem}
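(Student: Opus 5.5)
Since Theorem~\ref{theorem2} comes before its proof, the plan is to split the closed loop into two cascaded parts and treat them one after the other: first the estimation error $e_i=\hat x_i-x_i$, then the controlled network of the $\hat x_i$. The two adaptive gains $\rho_i$ and $\alpha_i$ are matched to these parts. $\rho_i$ drives the observer disagreement $\tilde e_i=C\xi_i$ below $\sqrt d$. $\alpha_i$ drives $C\tilde\zeta_i$ below $\sqrt d$. The target then follows from the triangle inequality
\[
\|\zeta_i\|\le \|C\tilde\zeta_i\|+\|\tilde e_i\|\le 2\sqrt d<\delta,
\]
which holds because $4d<\delta^2$ by \eqref{dchoice2}, and which also gives $\|C\tilde\zeta_i\|\le\delta$.

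\textbf{Step 1 (observer part).} The input $u_i$ appears identically in \eqref{protocol2.1} and in the agent dynamics, so it cancels in the error equation:
\[
\dot e_i = Ae_i-\rho_i QC\T \tilde e_i - Ew_i .
\]
Stacking over agents gives $\dot\xi=(I\otimes A)\xi-(L\rho\otimes QC\T C)\xi-(L\otimes E)w$. This is the dual of the structure in \eqref{closeq}: output injection through $QC\T$ with scheduling on $\tilde e_i\T\tilde e_i$. The Riccati equation \eqref{are0} with $\eta>0$ is exactly the dual of the one used in the state synchronization paper.

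\textbf{Step 2 (reuse the state-synchronization lemmas).} I would repeat the arguments of Lemmas~\ref{lem1}, \ref{lem2a} and~\ref{lem2}, that is, \cite[Lemmas 1--3]{donya-liu-saberi-stoorvogel-arxiv-delta}, in dual form. Since the disturbance $E w_i$ is unmatched, I would not rely on $\im E\subset\im B$. Instead I would use the Lyapunov function $V_i=\xi_i\T Q^{-1}\xi_i$. Its derivative contains a $-\eta V_i$ term from \eqref{are0} and a term $-2\rho\,\tilde e\T(\cdot)\tilde e$, while the disturbance enters only as a bounded forcing term.

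I expect this to be the main obstacle. The proofs in \cite{donya-liu-saberi-stoorvogel-arxiv-delta} exploit matching, so that large $\rho$ dominates the disturbance. Here $\rho$ only penalizes $C\xi$, and the bounded disturbance acts through unobserved directions. What I would try first: use the observability of $(C,A)$ together with the $\eta$-margin in \eqref{are0}. Bounded $\xi$ and $\int\|\tilde e\|^2$ growing unboundedly while $\rho$ stays finite would give a contradiction. Conversely, if $\rho_i\to\infty$, then on the bi-components where this happens the $\tilde e_i$ must become small in $L_2$, which stops $\rho$ from growing. I would first handle the basic bi-components, which are strongly connected, and then the remaining components, as in the proof of Theorem~\ref{theorem}. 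The conclusion of this step is that all $\rho_i$ are bounded and $\tilde e_i\T\tilde e_i<d$ eventually.

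\textbf{Step 3 (controller part).} Define $\bar x_i=\sum_j\ell_{ij}\hat x_j$, which is the quantity $\tilde\zeta_i$. Then
\[
\dot{\bar x}_i = A\bar x_i+\sum_j\ell_{ij}Bu_j + \sum_j\ell_{ij}\rho_jQC\T\tilde e_j .
\]
The last term is a bounded, observer-driven disturbance, and it is small in the limit. The feedback $u_i=-\alpha_iB\T P_{\alpha_i}(\hat x_i+\tilde\zeta_i)$ is a low/high-gain type design. By \eqref{are}, $P_\alpha\to0$ as $\alpha\to\infty$, and uniform rank plus the minimum-phase and right-invertibility assumptions imply that $\alpha P_\alpha BB\T P_\alpha$ recovers $C\T C$. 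This yields almost disturbance decoupling from the bounded forcing to $C\bar x_i$, with gain tending to zero as $\alpha_i\to\infty$.

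I would argue as follows. Either $\alpha_i$ stops growing, in which case $\dot\alpha_i$ is integrable, and this forces $\|C\tilde\zeta_i\|^2<d$ eventually. Or $\alpha_i\to\infty$, in which case the disturbance-to-$C\tilde\zeta_i$ gain vanishes and $\|C\tilde\zeta_i\|^2$ falls below $d$, contradicting continued growth. The rate cap $\dot\alpha_i\le1$ keeps $\alpha_i$ from outrunning the slowly varying Riccati solution $P_{\alpha_i}$. The term $\partial P/\partial\alpha$ would be controlled using $\eps$ in \eqref{are}.

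Combining Steps 2 and 3 with the triangle inequality gives the result.
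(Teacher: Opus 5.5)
\textbf{Verdict: genuine gaps.} Your architecture matches the paper's, and the opening and closing steps are right. Your Step~1 is exactly \eqref{syscomp}, which is analysed first and independently of the $\alpha_i$. The $\rho_i$ are bounded first on the basic bi-components and then on the rest. The $\alpha_i$ are handled by a bounded/unbounded dichotomy. The closing inequality $\|\zeta_i\|\le\|C\tilde\zeta_i\|+\|\tilde e_i\|$ with $4d<\delta^2$ is also how the paper finishes. However, at both places where the real work lies you state the conclusion rather than give an argument.

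\textbf{Step 2, the case $\rho_i\to\infty$.} $V_i=\xi_i\T Q^{-1}\xi_i$ handles bounded $\rho$ (Lemma~\ref{lem1col}), but not this case. The coupling term $-2\tilde e_i\T(L_i\rho\otimes I)\tilde e$ carries the neighbours' unbounded gains and is sign-indefinite on a directed graph. Observability and the $\eta$-margin in \eqref{are0} do not help with that. The missing idea is the $\rho^{-1}$-weighted network function $(\hat e^k)\T(\rho^{-k}H^k\otimes Q^{-1})\hat e^k$, with $Q_\rho$ when $k=N$. Here $\hat e^k$ eliminates the bounded-gain agents via $L_{11}^{-1}L_{12}$, and the diagonal $H^k$ gives $H^kL_{11}+L_{11}\T H^k>2\gamma L_{11}\T L_{11}$. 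The weight does three things:
the factor $\rho^{-1}$ cancels the gains in the coupling term;
the function is decreasing in $\rho$;
the unmatched disturbance only contributes a forcing of order $\rho^{-1}$, so it is attenuated by the weighting, not dominated by large $\rho$.
This gives $\dot V_k\le-\tfrac{\eta}{2}V_k-\gamma\sum_i\dot\rho_i+O(\rho^{-1})$, hence at most linear growth of the $\rho_i$ (Lemma~\ref{lem2acol}). The paper then uses a time-varying ordering of the gains, whose reordering jumps are controlled by the linear-growth bound, and a recursion over the ordered agents. This shows that $\rho_{s,k}V_{s,k}$ and $\int_\nu^{\nu+1}\rho_{s,k}\|\tilde e_{s,k}\|^2\,\textrm{d}t$ stay bounded. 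With a bounded derivative, $\|\tilde e_{s,k}\|^2<d$ eventually, which contradicts $\rho_{s,k}\to\infty$ (Lemma~\ref{lem2col}). Your sentence ``the $\tilde e_i$ must become small in $L_2$, which stops $\rho$ from growing'' is the endpoint of this argument, not a proof of it.

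\textbf{Step 3, the case $\alpha_i\to\infty$.} The claim that the gain from the forcing to $C\tilde\zeta_i$ then vanishes is exactly what must be proved. The forcing $\rho_j\tilde e_j$ is only bounded, not small in the limit, since $\tilde e_j$ is only eventually of size about $\sqrt d$. A frozen-gain, per-agent almost-disturbance-decoupling argument does not apply. The gains are time-varying, and $\tilde\zeta_i$ is driven by $\sum_j\ell_{ij}Bu_j$, whose $\alpha_j$ may be bounded or unbounded. Your per-agent dichotomy therefore misses the mixed case. It also does not give the boundedness of $\tilde\zeta$ needed to turn $\dot\alpha_i\in L_1$ into $\|C\tilde\zeta_i\|^2\le d$ eventually.

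The paper instead uses $V=\tilde\zeta\T(\tilde H^N\alpha\otimes I)P_\alpha\tilde\zeta$, with a diagonal $\tilde H^N$ adapted to the nonsingular $\tilde L=I+L$, and the monotonicity of $P_\alpha$ and of $\alpha P_\alpha$ in $\alpha$. Your remark that $\dot\alpha_i\le1$ is beaten by the $\eps\alpha_i$ term is used at exactly this point. It rules out unbounded gains globally by normalizing $V$ by $\alpha_{\max}\beta_m^2$ and showing that this quantity tends to zero. That is where uniform rank enters:
Lemma~\ref{asympric} sandwiches $P_\alpha$, in SCB coordinates, between $D_\alpha P_\ell D_\alpha$ and $D_\alpha P_u D_\alpha$, with a single time scale $\beta=\alpha^{-1/(4n_1)}$;
the Gagliardo--Nirenberg estimate of Lemma~\ref{tty5} turns smallness of $\int y\T y$ and $\int\alpha^{-1}u\T u$ on unit intervals into smallness of the normalized $V$.
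You invoke uniform rank only for $\alpha P_\alpha BB\T P_\alpha\to C\T C$. That already follows from $P_\alpha\to 0$, that is, from minimum phase and right-invertibility alone. So the key step is missing, not merely abbreviated.
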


Note that in the above theorem we impose an additional restriction on
the agents that the agent model has uniform rank. It is our conjecture
that the protocol that we presented above still solves the scale-free 
  $\delta$-level-coherent output synchronization in the presence of
bounded external disturbances even if the system is not uniform
rank. However, as of this moment, this is still an open research
question. 

However, we can design a modified protocol that  solves the scale-free 
  $\delta$-level-coherent output synchronization in the presence of
bounded external disturbances by using the protocol above combined
with a precompensator which makes the system uniform rank. We have the following theorem.

\begin{theorem}\label{theorem2a}
  Consider a MAS \eqref{agent-g-noise}, satisfying assumption
  \ref{asscol}, with associated network communication
  \eqref{zetanoise} and a given parameter $\delta>0$. In that case, the
  \textbf{scale-free {\boldmath $\delta$}-level-coherent output
    synchronization in the presence of bounded external
    disturbances} as stated in Problem \ref{prob_xcol} is
  solvable. 
\end{theorem}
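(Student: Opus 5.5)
The plan is to reduce Theorem \ref{theorem2a} to Theorem \ref{theorem2}. We will augment each agent with an identical precompensator so that the interconnection is uniform rank while keeping all properties of Assumption \ref{asscol}. Since $(A,B,C)$ is right-invertible, its infinite zero structure has orders $q_1\le q_2\le\cdots\le q_p$ with $q_p$ the largest. We choose a linear, stable and invertible precompensator acting on each input channel. Concretely, $u_i = C_p x_{p,i}$ with $\dot x_{p,i}=A_px_{p,i}+B_p v_i$, consisting of chains of integrators with stable zeros chosen so that the cascade $(\bar A,\bar B,\bar C)$ has all infinite zeros of order $q_p$.

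Concretely, we first put $(A,B,C)$ in the special coordinate basis. In those coordinates each output channel $j$ is driven by an integrator chain of length $q_j$. We add to channel $j$ a precompensator of relative degree $q_p-q_j$ whose zeros are placed in the open left half plane, e.g.\ $1/(s+1)^{q_p-q_j}$. The cascade then inherits:
\begin{itemize}
\item right-invertibility, since the precompensator is square and invertible;
\item minimum-phase, since the new invariant zeros are the original ones together with the precompensator poles, which are stable;
\item stabilizability, and observability with respect to $\bar C$, after removing any unobservable stable part if necessary (detectability suffices via the same Riccati argument).
\end{itemize}
The disturbance enters as $\bar E=(E\T\;0)\T$ and remains bounded. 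Also $\bar y_i=y_i$, so $\zeta_i$ is unchanged.

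We then apply protocol \eqref{protocol2.1}--\eqref{protocol2.4} to the augmented agents $(\bar A,\bar B,\bar C,\bar E)$, with $x_{i,c}$ now containing both the observer state $\hat{\bar x}_i$ and the precompensator state. The communicated signal $\tilde\zeta_i$ is formed from $\hat{\bar x}_i$. The overall protocol for the original agent is the observer plus adaptive laws plus precompensator. This protocol is of the form \eqref{out_dyncol} and depends only on $(A,B,C)$ and $\delta$, so it is scale-free. Theorem \ref{theorem2} then guarantees $\|\zeta_i(t)\|\le\delta$ and $\|\bar C\tilde\zeta_i(t)\|\le\delta$ for $t>T$. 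Because the output is unchanged, this is exactly collaborative $\delta$-level coherent output synchronization for the original MAS.

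The main obstacle is proving rigorously that a precompensator exists that makes the cascade uniform rank while preserving right-invertibility, minimum-phase and observability. This is delicate when the system is non-square or when the infinite zero structure interacts with the invariant-zero dynamics. We would try the special coordinate basis construction first, handling the extra outputs-free input directions of the right-invertible case by treating them as additional inputs that need no compensation. If observability is lost, we would argue that the unobservable modes are stable precompensator modes. We would then check that the proof of Theorem \ref{theorem2} only uses detectability, or else restrict to the observable realization.
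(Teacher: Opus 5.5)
\textbf{There is a genuine gap: the precompensator you construct does not produce uniform rank in general.} Your reduction is otherwise the paper's. You put an identical precompensator in front of every agent so that $v_i\mapsto y_i$ is uniform rank while Assumption~\ref{asscol} survives, run the protocol of Theorem~\ref{theorem2} on the augmented model, and note that $y_i$, hence $\zeta_i$, is unchanged. The paper does not construct the precompensator; it takes its existence from \cite{sannuti-saberi-zhang-auto}.

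Your explicit construction lags the SCB input $u_{d,j}$ by $1/(s+1)^{q_p-q_j}$ and leaves $u_c$ untouched. The top of each SCB chain is driven by $u_{d,j}$ \emph{plus} coupling terms in $x_c$ and in the other chains $x_{d,k}$. So once $u_{d,j}$ is delayed, other inputs can reach $y_j$ after fewer than $q_p$ differentiations. For example, $u_c$ drives $x_c$ directly and therefore appears in $y_j^{(q_j+1)}$.

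A concrete counterexample: take $\dot x_1=u_1+x_{23}$, $\dot x_{21}=x_{22}$, $\dot x_{22}=x_{23}$, $\dot x_{23}=u_2$, $y=(x_1,x_{21})$. This system is controllable, observable and invertible, has no invariant zeros, and has infinite zeros of orders $1$ and $3$. Its transfer matrix is $\begin{pmatrix} s^{-1} & s^{-2}\\ 0 & s^{-3}\end{pmatrix}$. Your compensator gives $\begin{pmatrix} s^{-1}(s+1)^{-2} & s^{-2}\\ 0 & s^{-3}\end{pmatrix}$, for which $\bar C\bar A\bar B\neq 0$, so the infinite zero orders become $\{2,4\}$.

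No diagonal compensator can repair this, because the $(1,2)$ entry always has relative degree one less than the $(2,2)$ entry. A coupled compensator such as $\begin{pmatrix} (s+1)^{-2} & -(s+1)^{-1}\\ 0 & 1\end{pmatrix}$ does give uniform rank $3$, with leading Markov parameter $\begin{pmatrix} 1 & 1\\ 0 & 1\end{pmatrix}$. Building such coupled compensators in general, including for the extra inputs of a non-square right-invertible agent, is what \cite{sannuti-saberi-zhang-auto} supplies. The step you flag as your main obstacle has to be closed by that result or an equivalent proof, not by channel-wise lags.

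Two smaller corrections. First, a square compensator without finite zeros adds no invariant zeros at all; its poles do not become zeros of the cascade. Your minimum-phase reasoning is therefore wrong, even though the conclusion holds. What those poles can do is cancel invariant zeros of the plant and so create unobservable modes.

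Second, detectability does not substitute for the observability used in Theorem~\ref{theorem2}. A solution $Q>0$ of \eqref{are0} forces every unobservable eigenvalue $\lambda$ of $A$ to satisfy $2\,\mathrm{Re}\,\lambda=-\eta$. So choose the compensator poles away from the plant's invariant zeros to keep the augmented pair observable, rather than planning to discard unobservable parts afterwards.
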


\begin{proof}[Proof of Theorem \ref{theorem2a}]
  Using the results from \cite{sannuti-saberi-zhang-auto}, we
  can design a preliminary feedback:
  \[
    \begin{system*}{ccl}
      \dot{x}_{pi} &=& A_px_{pi}+B_pv_i\\
      u_i &=& C_p x_{pi}
    \end{system*}
  \]
  such that the agent with input $v_i$ and output $y_i$ has uniform
  rank and still preserves the other conditions of Assumption
  \ref{asscol}, \emph{i.e.} stabilizability, detectability, minimum-phase and
  right-invertibility. Then, we can use the protocol used in Theorem
  \ref{theorem2} but designed for the new agent model:
  \[
    \tilde{A}=\begin{pmatrix} A & BC_p \\ 0 & A_p \end{pmatrix},\quad
    \tilde{B}=\begin{pmatrix} 0 \\ B_p \end{pmatrix},\quad 
    \tilde{E}=\begin{pmatrix} E \\ 0  \end{pmatrix},\quad 
    \tilde{C}=\begin{pmatrix} C & 0  \end{pmatrix},\quad
  \]
  to achieve that any $d$ satisfying \eqref{dchoice2} solves
  $\delta$-level-coherent output synchronization in the presence of
  disturbances $w_i$, for an arbitrary number of agents $N$ and for
  any graph $\mathscr{G}\in\mathbb{G}^N$. In other words, 
  scale-free $\delta$-level-coherent output
  synchronization in the presence of bounded external disturbances is achieved.
\end{proof}

The proof of the result of Theorem \ref{theorem2} is split in a number
of lemmas to improve the presentation. The first three lemma establish
the behavior of the observer \eqref{protocol2.1} with its adaptive
gain given by \eqref{protocol2.2} after which we have some lemmas to
analyze the behavior of the state feedback \eqref{protocol2.4} and its
associated adaptive gain given by \eqref{protocol2.3}.

From the observer, we obtain:
\begin{equation} \label{syscomp}
  \dot{\xi}_i=A\xi_i-(L_i\rho \otimes QC\T C)\xi -(L_i\otimes E)w
\end{equation}
where $L_i$ is the $i$'th row of $L$ for $i=1,\ldots, N$ and
\[
  \xi=\begin{pmatrix} \xi_1 \\ \xi_2 \\ \vdots \\ \xi_N 
  \end{pmatrix},\qquad
  \rho= \begin{pmatrix}
    \rho_1  & 0          & \cdots & 0      \\
    0         & \rho_2 & \ddots & \vdots \\
    \vdots    & \ddots     & \ddots   & 0 \\
    0         & \cdots     & 0        & \rho_N
  \end{pmatrix}.
\]
together with the adaptive gain \eqref{protocol2.2}. We would like to
stress the crucial feature that this dynamics is completely
independent of the state feedback \eqref{protocol2.4} and its
associated adaptive gain given by \eqref{protocol2.3}. Therefore the
behavior of \eqref{syscomp} can be analyzed as a closed unit
independent of the rest of the dynamics.
  
\begin{lemma}\label{lem1col}
  Consider a number of agents $N$ and a graph
  $\mathscr{G}\in\mathbb{G}^N$. Consider the MAS \eqref{agent-g-noise} with
  associated network communication \eqref{zetanoise} and a given
  parameter $\delta>0$. Assume Assumption \ref{asscol} is satisfied.
  Choose any $d$ satisfying \eqref{dchoice2}. If all $\rho_i$ remain
  bounded then there exists a $T>0$ such that
  \begin{equation}\label{probdef2}
    \| \tilde{e}_i(t) \|  \leq \tfrac{\delta}{2},
  \end{equation}
  for all $t>T$ and for all $i=1,\ldots, N$.
\end{lemma}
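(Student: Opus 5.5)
The plan is to use boundedness of the $\rho_i$ to show that each $\dot\rho_i$ is integrable, and then to turn integrability plus uniform continuity of $\tilde e_i$ into the eventual bound \eqref{probdef2}. The argument mirrors \cite[Lemma 1]{donya-liu-saberi-stoorvogel-arxiv-delta}, with the roles of state feedback and observer interchanged.

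First, since each $\rho_i$ is nondecreasing and bounded, it converges to a finite limit $\rho_i^\infty$. Hence
\[
\int_0^\infty \dot\rho_i(t)\,dt = \rho_i^\infty - \rho_i(0) < \infty .
\]
By \eqref{protocol2.2}, $\dot\rho_i=\|\tilde e_i\|^2$ whenever $\|\tilde e_i\|^2\ge d$. Therefore the total time during which $\|\tilde e_i(t)\|^2\ge d$ is finite, bounded by $(\rho_i^\infty-\rho_i(0))/d$, and the integral of $\|\tilde e_i\|^2$ over that set is finite.

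The key step is to show that $\tilde e_i=C\xi_i$ has a bounded derivative, or at least a uniformly bounded growth rate of $\|\tilde e_i\|^2$. This holds once $\xi$ is known to be bounded on $[0,\infty)$. To obtain boundedness of $\xi$, I would use the structure \eqref{syscomp} with bounded (converging) gains $\rho_i$ together with the Riccati identity \eqref{are0}.

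Consider the Lyapunov function $V=\sum_i \xi_i\T Q^{-1}\xi_i$. From \eqref{are0}, $Q^{-1}A+A\T Q^{-1}-C\T C+\eta Q^{-1}=0$. This yields a decay term $-\eta V$, plus cross terms $\xi_i\T C\T C\xi_j$ weighted by $\ell_{ij}\rho_j$, plus a disturbance term that is bounded in terms of $\|w\|_\infty$.

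The main obstacle is these cross-coupling terms. Because the graph is directed and arbitrary, they are not sign-definite. However, they only involve $C\xi_j=\tilde e_j$ (since $(L_i\rho\otimes QC\T C)\xi$ contributes $Q^{-1}QC\T C\xi_j=C\T\tilde e_j$). Using the bounded $\rho$, they are dominated by $\|\xi_i\|\,\sum_j|\ell_{ij}|\rho_j^\infty\|\tilde e_j\|$. I would split time into the set where all $\|\tilde e_j\|^2<d$, where these terms are uniformly bounded, and the complementary set of finite total measure, where they are $L_2$ in $\tilde e_j$.

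A Gronwall/comparison argument on $\sqrt V$ then gives
\[
\tfrac{d}{dt}\sqrt V \le -\tfrac{\eta}{2}\sqrt V + c_1 + c_2\textstyle\sum_j\|\tilde e_j\|\,\mathbf 1_{\{\|\tilde e_j\|^2\ge d\}} .
\]
The last forcing term is in $L_1$ by Cauchy–Schwarz on a finite-measure set with finite $L_2$ integral. Hence $\xi$ is bounded, and so is $\dot\xi$ by \eqref{syscomp}, since $w$ and $\rho$ are bounded.

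With $\dot{\tilde e}_i$ bounded by some $M$, I conclude as follows. Suppose $\|\tilde e_i(t_k)\|>\delta/2$ for arbitrarily large $t_k$. Since $\delta^2/4>d$ by \eqref{dchoice2}, uniform continuity gives intervals of fixed length $\tau=(\delta/2-\sqrt d)/M>0$ around each $t_k$ on which $\|\tilde e_i\|^2\ge d$. Taking these intervals disjoint, each contributes at least $d\tau$ to $\int\dot\rho_i$, so infinitely many of them contradict the integrability established in the first step. Hence some $T$ works for each $i$, and taking the maximum over the finitely many agents gives \eqref{probdef2}.
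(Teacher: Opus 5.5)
Your proposal is correct and follows essentially the paper's argument: the Lyapunov function $\xi_i\T Q^{-1}\xi_i$ from \eqref{are0}, a splitting of each $\tilde e_j$ into a part with $\|\tilde e_j\|^2<d$ and a finite-energy part (finite because the $\rho_j$ are bounded), which gives boundedness of $\xi$ and hence of $\dot{\tilde e}_i$, followed by a Barbalat-type step. The differences are minor. The paper bounds the coupling terms directly by $\sum_j\|\tilde e_j\|^2$ and works with $V_i$ rather than $\sqrt V$ with $L_1$ forcing. Your closing argument, counting disjoint intervals of fixed length on which $\dot\rho_i\geq d$, is a more careful version of the paper's claim that $v_i\in L_2$ together with bounded $\dot{\tilde e}_i$ forces $v_i\to 0$.
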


\begin{proof}[Proof of Lemma \ref{lem1col}]
  For the system \eqref{syscomp}, we define
  $V_i = \xi_i\T Q^{-1} \xi_i$ and we obtain:
  \[
    \dot{V}_i = \xi_i\T(-\eta Q^{-1}+C\T C)\xi_i - 2\xi_i\T (L_i\rho
    \otimes C\T C)\xi - 2 \xi_i\T (L_i\otimes Q^{-1}E) w 
  \]
  and hence:
  \[
    \dot{V}_i \leq - \tfrac{\eta}{2} V_i + \tilde{e}_i\T \tilde{e}_i - 2\tilde{e}_i\T
    (L_i\rho \otimes I) \tilde{e} + \tfrac{2}{\eta}  w\T (L_i\T L_i\otimes E\T
    Q^{-1} E) w
  \]
  with $\tilde{e}=(I\otimes C)\xi$. Convergence of $\rho_i$ for
  $i=1,\ldots, N$ implies that for any $\eps>0$ there exists $T>T_0$
  such that
  \begin{equation}\label{rhoeps}
    \rho_i(t_2)-\rho_i(t_1) < \eps,
  \end{equation}
  for all $t_2>t_1>T$, and for all $i=1,\ldots, N$. This implies
  \[
    \tilde{e}_i = s_i+v_i
  \]
  with
  \begin{equation}\label{epsM}
    s_i\T(t) s_i(t) \leq d,\qquad \int_T^\infty v_i(\tau)\T
    v_i(\tau)\, \textrm{d}\tau <\eps, 
  \end{equation}
  for $t>T$ where $s_i(t)=\tilde{e}_i(t)$ and $v_i(t)=0$ if
  $\tilde{e}_i\T(t) \tilde{e}_i(t)<d$ and $s_i(t)=0$ and
  $v_i(t)=\tilde{e}_i(t)$ otherwise. The bounds in \eqref{epsM} then
  follow from \eqref{rhoeps} and our adaptation \eqref{protocol2.2}.

  We find:
  \[
    \dot{V}_i \leq - \tfrac{\eta}{2} V_i + m_1 w\T w + \sum_{j=1}^N
    m_2 s_j\T s_j + m_3 v_j\T v_j
  \]
  for suitable constants $m_1,m_2,m_3>0$. This implies that $V_i$ is
  bounded which yields that $\xi_i$ is bounded. This implies that the
  derivative of $\xi_i$ is also bounded which in turn yields that the
  derivative of $\tilde{e}_i$ is bounded. Together with $v_i\in L_2$
  this yields that $v_i(t)\rightarrow 0$ as $t\rightarrow \infty$. We
  obtain \eqref{probdef2} for $t$ large since
  $s_i\T(t) s_i(t)\leq d$.
\end{proof}

\begin{lemma}\label{lem2acol}
  Consider a number of agents $N$ and a graph
  $\mathscr{G}\in\mathbb{G}^N$. Consider the MAS \eqref{agent-g-noise}
  with associated network communication \eqref{zetanoise} and a given
  parameter $\delta>0$. Assume Assumption \ref{asscol} is satisfied.
  Additionally, assume that either the $\rho_i$ associated to agents
  belonging to the basic bi-components are bounded or the graph is
  strongly connected. Choose any $d$ satisfying \eqref{dchoice2}. In
  that case, there exists $\nu,\mu >0$ such that
  \[
    \rho_i (t_2) -\rho_i(t_1) < \nu (t_2-t_1) + \mu
  \]
  for any $t_1,t_2$ with $t_2>t_1$ and for $i=1,\ldots,N$. 
\end{lemma}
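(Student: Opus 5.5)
We plan to show that, under the stated hypothesis, the signals $\tilde{e}_i$ stay in the ``linear regime'' most of the time, so that $\dot\rho_i=\tilde{e}_i\T\tilde{e}_i$ is bounded on average. The natural tool is the same Lyapunov function used in the proof of Lemma~\ref{lem1col}, but now applied to an aggregate quantity. The derivative bound obtained there must then be inverted to control $\int \tilde{e}_i\T\tilde{e}_i$.

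\textbf{Step 1: the strongly connected case.} Let $\theta$ be the positive left eigenvector of $L$ with $\theta\T L=0$, and set $\Theta=\operatorname{diag}(\theta)$. Consider
\[
V=\sum_{i=1}^N \theta_i\,\rho_i\, \xi_i\T Q^{-1}\xi_i
\]
or, more simply, work with the unaggregated $\bar x=\sum_j(\ldots)$ variables. Recall that $\xi=(L\otimes I)e$ and that the coupling term in \eqref{syscomp} is $-(L\rho\otimes QC\T C)\xi$. We therefore take as primary object the unreduced error $e$, with dynamics
\[
\dot e = (I\otimes A)e - (\rho L\otimes QC\T C)e + (I\otimes E)w.
\]
The candidate function is $W=\sum_i \theta_i\rho_i^{-1}(\ldots)$; the standard choice from \cite{donya-liu-saberi-stoorvogel-arxiv-delta} combines $V=\xi\T(\Theta\rho^{-1}\otimes Q^{-1})\xi$-type weights with \eqref{are0}.

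Using \eqref{are0}, we aim to show the differential inequality
\[
\dot V \le -\tfrac{\eta}{2}V - c\sum_i \tilde{e}_i\T\tilde{e}_i + m\|w\|_\infty^2 + (\text{terms in }\dot\rho_i\,V_i).
\]
The terms in $\dot\rho_i$ carry a favourable sign because the weight is $\rho_i^{-1}$ (decreasing). Integrating over $[t_1,t_2]$ then gives
\[
c\int_{t_1}^{t_2}\sum_i\tilde{e}_i\T\tilde{e}_i \le V(t_1)+m\|w\|_\infty^2(t_2-t_1).
\]
Since $\dot\rho_i\le \tilde{e}_i\T\tilde{e}_i$, this yields the claim with $\nu=m\|w\|_\infty^2/c$ and $\mu=\sup V/c$, provided $V$ is uniformly bounded. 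Boundedness of $V$ itself follows from the same inequality via the $-\tfrac{\eta}{2}V$ term.

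\textbf{Step 2: the general case.} Here the $\rho_i$ in the basic bi-components are bounded. We use the structure \eqref{Lstruc}. For the basic bi-components, bounded $\rho_i$ together with the argument of Lemma~\ref{lem1col} gives bounded $\xi_i$, and $\rho_i$ trivially satisfies the bound. The remaining agents, with grounded Laplacian $L_0$, are driven by bounded inputs from the basic components. For them we use a weight $\Theta_0>0$ with $\Theta_0L_0+L_0\T\Theta_0>0$, which exists since $L_0$ is a nonsingular M-matrix, and repeat Step~1. The coupling from the basic components is then absorbed as an additional bounded disturbance.

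\textbf{Main obstacle.} The hard part is choosing the weighting so that the cross terms $-2\xi_i\T(L_i\rho\otimes C\T C)\xi$ sum to something nonpositive modulo a multiple of $-\sum\tilde{e}_i\T\tilde{e}_i$. This requires matching the $\rho$ weights with $\theta$ and with the $C\T C$ structure, and it must hold uniformly in the unbounded $\rho_i$. I would try first to reproduce exactly the construction of \cite[Lemma 2]{donya-liu-saberi-stoorvogel-arxiv-delta}, with $Q^{-1}$ in place of $P$ and $C$ in place of $\tilde{B}\T$. This is the dual (observer) version of that argument, so the same cancellations should occur.
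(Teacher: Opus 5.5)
Your closing mechanism (weights decreasing in the $\rho_i$, an inequality $\dot V\le-\tfrac{\eta}{2}V-\gamma\sum_i\dot\rho_i+(\text{bounded})$, then integration) is how the paper finishes. However, the construction that makes this inequality hold is exactly the part you defer, and the candidates you sketch do not work. By \eqref{are0}, $Q^{-1}A+A\T Q^{-1}=-\eta Q^{-1}+C\T C$, so any $V=e\T(W\otimes Q^{-1})e$ picks up the positive term $e\T(W\otimes C\T C)e$. This term can be absorbed by $-\gamma\sum_i\tilde e_i\T\tilde e_i=-\gamma e\T(L\T L\otimes C\T C)e$ only if $W\le\gamma L\T L$. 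For $W=\Theta\rho^{-1}$ that is impossible, since $L\mathbf{1}=0$ while $\mathbf{1}\T\Theta\rho^{-1}\mathbf{1}>0$. Worse, $e$ is not stabilized along $\mathbf{1}\otimes x$: with $w=0$ and $e(0)=\mathbf{1}\otimes x_0$ you get $\tilde e\equiv 0$, constant $\rho$, and $e(t)=\mathbf{1}\otimes e^{At}x_0$. So your $V$ grows whenever $A$ is unstable, and the inequality you aim for is false. Passing to $\xi$ removes this mode but ruins the coupling term, which becomes $-2\xi\T(\Theta\rho^{-1}L\rho\otimes C\T C)\xi$. Here $\Theta\rho^{-1}L\rho+\rho L\T\rho^{-1}\Theta$ is indefinite unless all $\rho_i$ coincide, because $\Theta\rho^{-1}\mathbf{1}$ is not proportional to the left null vector $\rho\theta$ of $\rho^{-1}L\rho$. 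The paper handles the case where every $\rho_i\to\infty$ (which, under the hypothesis, forces strong connectivity) with $V_N=e\T(Q_\rho\otimes Q^{-1})e$ and $Q_\rho$ from \eqref{Qrho}. The rank-one correction gives $Q_\rho\mathbf{1}=0$ while preserving $Q_\rho\rho L=H^N L$; $Q_\rho$ is decreasing in each $\rho_i$ (Lemma~\ref{2.9}), and $Q_\rho\to 0$, so $Q_\rho\le\gamma L\T L$ for $t>T$.

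Even $Q_\rho$ only works once all the gains it weights are large. So the split that matters is bounded versus unbounded gains, not basic versus non-basic components as in your Step~2. Within a strongly connected graph, or within $L_0$, some $\rho_i$ may stay bounded (and small), and then the $C\T C$ term of those agents cannot be dominated. The paper orders the agents so that $\rho_i$ is unbounded exactly for $i\le k<N$, and uses the hypothesis only to guarantee that $L_{11}$ is invertible. The signals $\tilde e_i$, $i>k$, are split as $s_i+v_i$ with $s_i\T s_i\le d$ and $\int v_i\T v_i\,\textrm{d}t$ equal to the total increase of $\rho_i$, hence finite. The change of variables $\hat e^k=e^k+(L_{11}^{-1}L_{12}\otimes I)e^k_c=(L_{11}^{-1}\otimes I)\xi^k$, with $\xi^k$ the first $k$ blocks of $\xi$, then yields \eqref{barxt2xxxy}. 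There the bounded-gain agents enter only through $\tilde e^k_c$, i.e.\ as an $L_\infty$ plus $L_2$ disturbance. This change of variables is needed because in $e$-coordinates the basic components do not act through bounded inputs: the individual $e_j$ may be unbounded. With $V_k$ from \eqref{Vj} one then gets \eqref{ttg1} for $t>T$. Integration, boundedness of $V_k$, and boundedness of all signals on $[0,T]$ give the linear bound. Without these two constructions the inequality in your Step~1 is not available, so the proposal does not yet prove the lemma.
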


\begin{proof}
  Using the notation of Lemma \ref{lem1col} we obtain \eqref{syscomp}
  for $i=1,\ldots,N$ or equivalently:
  \begin{equation}\label{syscomp2}
    \dot{e} =(I\otimes A)e -(\rho L \otimes QC\T C)e -(I\otimes E)w.
  \end{equation}
  where
  \[
    e=\begin{pmatrix} e_1 \\ \vdots \\ e_N \end{pmatrix},\quad
    w=\begin{pmatrix} w_1 \\ \vdots \\ w_N \end{pmatrix}.
  \]
  If all $\rho_i$ are bounded the result of the lemma is
  trivial. Assume that $k\leq N$ of the $\rho_i$ are unbounded. Without loss
  of generality, we assume that the $\rho_i$ are unbounded for $i\leq k$
  while the $\rho_i$ are bounded for $i>k$.

  We first consider the case that $k<N$. We have
  \begin{equation*}
    L = \begin{pmatrix}
      L_{11} & L_{12} \\ L_{21} & L_{22} 
    \end{pmatrix},\quad
    e^k =\begin{pmatrix} e_1 \\ \vdots \\ e_k \end{pmatrix}, \quad
    e^k_c =\begin{pmatrix} e_{k+1} \\ \vdots \\
      e_N \end{pmatrix},\quad
    \tilde{e}^k =\begin{pmatrix} \tilde{e}_{1} \\ \vdots \\
      \tilde{e}_{k} \end{pmatrix},\quad
    \tilde{e}^k_c =\begin{pmatrix} \tilde{e}_{k+1} \\ \vdots \\
      \tilde{e}_{N} \end{pmatrix},
  \end{equation*}
  with $L_{11}\in \R^{k\times k}$. If all the agents associated to
  basic bi-components have a bounded $\rho_i$ this implies that agents
  associated to $i=1,\ldots,k$ are not associated to basic
  bi-components which implies that $L_{11}$ is invertible. On the other
  hand, if the network is strongly connected we always have that $L_{11}$
  is invertible since $k<N$. There exist $s^k_c\in L_\infty$ and
  $v^k_c\in L_2$ with
  \begin{equation}\label{K1K2}
    \| s^k_c \|_\infty < K_1,\quad \| v^k_c \|_2 < K_2,
  \end{equation}
  for suitable chosen $K_1$ and $K_2$ such that
  \[
    \tilde{e}^k_c =
    \begin{pmatrix} v_{k+1} \\ \vdots \\ v_{N} \end{pmatrix} +
    \begin{pmatrix} s_{k+1} \\ \vdots \\ s_{N} \end{pmatrix} 
    = v^k_c+s^k_c.
  \]
  This is easily achieved by setting $v_i(t)=0$
  and $s_i(t)=\tilde{e}_i(t)$ if
  $\tilde{e}_i\T(t) \tilde{e}_i(t) <d$, while for
  $\tilde{e}_i\T(t) \tilde{e}_i(t) \geq d$ we set
  $v_i(t)=\tilde{e}_i(t)$ and $s_i(t)=0$.  It is obvious
  that this construction yields that $s_i\in L_\infty$ while the
  fact that the $\rho_i$ are bounded for $i=k+1,\ldots, N$ implies that
  $v_i\in L_2$ for $i=k+1,\ldots, N$ (note that $\dot{\rho}_i = v_i\T v_i$
  in this construction). We define
  \[
    \hat{e}^k= \begin{pmatrix} \hat{e}_1 \\ \vdots \\
      \hat{e}_k \end{pmatrix} = e^k+(L_{11}^{-1}L_{12}\otimes I)
    e^k_c,\qquad 
    \tilde{e}^k=(L_{11}\otimes C)\hat{e}^k.
  \] 
  Using \eqref{syscomp2}, we then obtain
  \begin{equation}\label{barxt2xxx}
    \dot{\hat{e}}^k = (I\otimes A)\hat{e}^k
    -[\rho^{k} L_{11} \otimes QC\T C] \hat{e}^k
    -\left[ L_{11}^{-1}L_{12}\rho^k_c
      \otimes QC\T \right](s^k_c+v^k_c)-
    [{\setlength{\arraycolsep}{2mm}(\begin{matrix} I 
      & L_{11}^{-1}L_{12} \end{matrix})} \otimes E]w,
  \end{equation}
  where we used that
  \[
    \rho^k=\begin{pmatrix}
      \rho_1 & 0      & \cdots & 0 \\
      0      & \rho_2 & \ddots & \vdots \\
      \vdots & \ddots & \ddots & 0 \\
      0      & \cdots & 0      & \rho_k
    \end{pmatrix},\quad
    \rho^k_c=\begin{pmatrix}
      \rho_{k+1} & 0      & \cdots & 0 \\
      0      & \rho_{k+2} & \ddots & \vdots \\
      \vdots & \ddots & \ddots & 0 \\
      0      & \cdots & 0      & \rho_N
    \end{pmatrix}.
  \]
  Define
  \begin{align*}
    \hat{s}^k &=-(L_{11}^{-1}L_{12}\rho^k_c \otimes QC\T)s^k_c
    -\left[{\setlength{\arraycolsep}{2mm}\begin{pmatrix} I &
        L_{11}^{-1}L_{12} \end{pmatrix}} \otimes E\right]w,\\ 
    \hat{v}^k &=-(L_{11}^{-1}L_{12}\rho^k_c \otimes QC\T)v^k_c , 
  \end{align*}
  then \eqref{K1K2} in combination with the boundedness of
  $\rho^k_c$ implies that there exists $K_3$ and
  $K_4$ such that
  \begin{equation}\label{K3K4}
    \| \hat{s}^k \|_\infty < K_3,\quad \| \hat{v}^k \|_2 < K_4.
  \end{equation}
  We obtain
  \begin{equation}\label{barxt2xxxy}
    \dot{\hat{e}}^k = (I\otimes A)\hat{e}^k
    -[\rho^k L_{11} \otimes QC\T C ]\hat{e}^k+\hat{s}^k+\hat{v}^k,
  \end{equation}
  and we define
  \begin{equation}\label{Vj}
    V_k= (\hat{e}^k)\T (\rho^{-k} H^k \otimes Q^{-1}) \hat{e}^k,
  \end{equation}
  with $\rho^{-k}=(\rho^k)^{-1}$
  while
  \begin{equation}\label{HN}
    H^k=\begin{pmatrix}
      h_1 & 0      & \cdots & 0 \\
      0      & h_2 & \ddots & \vdots \\
      \vdots & \ddots & \ddots & 0 \\
      0      & \cdots & 0      & h_k
    \end{pmatrix}.
  \end{equation}
  Using \cite[Theorem 4.25]{qu-book-2009}, we choose
  $h_1,\ldots,h_k>0$ such that $H^k L_{11} +L_{11}\T H^k > 0$. It is
  easily seen that this implies that there exists a $\gamma$ such that
  \begin{equation}\label{HkL11}
    H^kL_{11}+L_{11}\T H^k > 2\gamma L_{11}\T L_{11}.
  \end{equation}
  We get from \eqref{barxt2xxxy} that
  \[
    \dot{V}_k \leq (\hat{e}^k)\T \left[ \rho^{-k} H^k \otimes (-\eta
      Q^{-1}+C\T C) \right] 
    \hat{e}^k-(\hat{e}^k)\T \left[
      (H^kL_{11}+L_{11}\T H^k)
      \otimes C\T C \right] \hat{e}^k \\
    + 2(\hat{e}^k)\T \left[ \rho^{-k} H^k \otimes Q^{-1}\right] (\hat{s}^k+\hat{v}^k),
  \]
  where we used that $V_k$ is decreasing
  in $\rho_i$ for $i=1,\ldots, k$. The above yields for $t>T$ that
  \begin{equation*}
    \dot{V}_k \leq -\eta V_k 
    - \gamma (\hat{e}^k)\T \left[ L_{11}\T L_{11} \otimes C\T C \right] \hat{e}^k \\
    + 2(\hat{e}^k)\T \left[\rho^{-k} H^k \otimes Q^{-1} \right] (\hat{s}^k+\hat{v}^k),
  \end{equation*}
  provided $T$ is such that
  \[
    \rho^{-k} H^k  < \gamma L_{11}\T L_{11},
  \]
  for $t>T$ which is possible since we have
  $\rho_i\rightarrow \infty$ for $i=1,\ldots, k$.  Note that given \eqref{K3K4}
  there exists some fixed $\beta$ such that
  \begin{equation}\label{lastone4xx}
    \sup_{t\in[T,\infty)} \| \check{s}^k(t) \| \leq \beta,\qquad
    \int_{T}^\infty\, \| \check{v}^k(t) \|^2\, \textrm{d}t   \leq \beta,
  \end{equation}
  with
  \begin{align*}
    \check{s}^k &= \left[ \tfrac{2}{\eta}    
      \rho^{-k}H^k \otimes Q^{-1}\right]^{1/2} \hat{s}^k,\\ 
    \check{v}^k &= \left[ \tfrac{2}{\eta}
       \rho^{-k}H^k \otimes Q^{-1}\right]^{1/2} \hat{v}^k. 
  \end{align*}
  We get
  \begin{equation}\label{lastone2xx}
    \dot{V}_k \leq - \tfrac{\eta}{2} V_k 
    - \gamma (\hat{e}^k)\T \left[ L_{11}\T L_{11} \otimes C\T C \right] \hat{e}^k 
    + (\check{s}^k)\T \check{s}^k + (\check{v}^k)\T\check{v}^k,
  \end{equation}
  for $t>T$. Moreover, 
  \begin{equation}\label{rfgt}
     (\hat{e}^k)\T \left[ L_{11}\T L_{11} \otimes C\T C \right]
     \hat{e}^k \geq  \sum_{i=1}^k \dot{\rho}_i, 
  \end{equation}
  since $(L_{11}\otimes C)\hat{e}^k=\tilde{e}^k$. Hence \eqref{lastone2xx} implies
  \begin{equation}\label{ttg1}
    \dot{V}_k \leq -\tfrac{\eta}{2} V_k -\gamma
    \sum_{i=1}^k \dot{\rho}_{i} + (\check{s}^k)\T \check{s}^k +
    (\check{v}^k)\T\check{v}^k.  
  \end{equation} 
  Note that the bounds in \eqref{lastone4xx} combined with the
  inequality \eqref{ttg1} for $t>T$ implies that there exists some
  $\nu,\mu>0$ such that
  \begin{equation}\label{ttg4}
    \tilde{\rho}_i(t_2)-\tilde{\rho}_i(t_1) <\nu (t_2-t_1) + \mu,
  \end{equation}
  for $i=1,\ldots, k$ and all $t_2,t_1>T$. Clearly, since
  $\rho_{k+1},\ldots,\rho_N$ are all bounded we trivially obtain \eqref{ttg4} for
  $i=k+1,\ldots,N$. 

  If \eqref{ttg4} is satisfied for $t_1,t_2>T$, then it is easily to
  obtain the lemma for $t_1,t_2>0$. After all, for $t<T$ all signals are bounded it
  is clear that for $t<T$ the $\rho_i$ can grow at most linearly and
  hence we can obtain the result for all $t_1,t_2>0$.

  Next, we consider the case that all $\rho_i$ are unbounded. In this
  case, we assumed the graph is strongly connected and hence by Lemma
  \ref{2.8} presented in the appendix there exists
  $\alpha_1,\ldots,\alpha_N>0$ such that 
  \eqref{Hlyap} is satisfied with $H^N$ given by \eqref{HN} for $k=N$.
  We define
  \begin{equation}\label{VN}
    V_N= e\T \left[ Q_{\rho} \otimes Q^{-1} 
    \right] e,
  \end{equation}
  where
  \begin{equation}\label{Qrho}
    Q_{\rho} = \rho^{-1} \left( H^N\rho - \mu_N
      \textbf{h}_N\textbf{h}_N\T \right) \rho^{-1}
  \end{equation}
  while
  \[
    \mu_N=\frac{1}{\sum_{i=1}^N h_i\rho_i^{-1}},\qquad
    \textbf{h}_N = \begin{pmatrix} h_1 \\ \vdots \\ h_N \end{pmatrix}.
  \]
  From Lemma \ref{2.9} in the appendix, we know that $Q_{\rho}$ is
  decreasing in $\rho_i$ for $i=1,\ldots N$.  Note that
  $Q_\rho \rho L=H^N L$.  We get from \eqref{syscomp2} that
  \begin{equation}\label{eqref}
    \dot{V}_N \leq e\T \left[ Q_\rho \otimes (-\eta Q^{-1}+C\T C) \right] e
    -e\T \left[ (H^NL+L\T H^N) \otimes C\T C \right] e 
    - 2e\T \left[ Q_{\rho} \otimes Q^{-1}E\right] w.
  \end{equation}
  It is easily verified that $Q_{\rho}\textbf{1}=0$. Moreover
  $\Ker L = \Span \{ \textbf{1} \}$ since the network is strongly
  connected. Therefore
  \[
    \ker Q_{\rho} \subset \Ker L\T L.
  \]
  Together with the fact that $\rho_j\rightarrow \infty$ for
  $j=1,\ldots, N$ and therefore $Q_{\rho}\rightarrow 0$ this implies
  that there exists $T$ such that
  \begin{equation}\label{Qrhobound}
    Q_{\rho} < \gamma L\T L,
  \end{equation}
  is satisfied for $t>T$.
 
  The above together with \eqref{Hlyap} yields for $t>T$ that
  \begin{equation*}
    \dot{V}_N \leq -\eta V_N
    - \gamma e\T \left[ L\T L \otimes C\T C \right] e \\
    - 2 e\T \left[ Q_{\rho} \otimes Q^{-1}E \right] w.
  \end{equation*}
  We obtain
  \[
    2 e\T \left[ Q_{\rho} \otimes Q^{-1}E\right] w
    \leq \tfrac{\eta}{2} e\T \left[ Q_{\rho} \otimes Q^{-1} \right] e
    + (\check{v}^N)\T \check{v}^N = \tfrac{\eta}{2} V_N
    + (\check{v}^N)\T \check{v}^N
  \]
  with
  \begin{equation}\label{checkv}
    \check{v}^N = \left( \tfrac{2}{\eta} Q_\rho \otimes
      Q^{-1}\right)^{1/2} (I\otimes E)w.
  \end{equation}
  Note that since $w$ is bounded, there exists some fixed $\alpha$ such
  that
  \begin{equation}\label{lastone4xxN}
    \sup_{t\in[T,\infty)} \| \check{v}^N(t) \| \leq \alpha.
  \end{equation}
  We get
  \begin{equation}\label{lastone2xxN}
    \dot{V}_N \leq - \tfrac{\eta}{2} V_N
    - \gamma e\T \left[ L\T L \otimes C\T C \right] e +  (\check{v}^N)\T\check{v}^N,
  \end{equation}
  for $t>T$. Moreover,
  \[
    e\T \left[ L\T L \otimes C\T C \right] e  \geq  \sum_{i=1}^N \dot{\rho}_i.
  \]
  The above implies
  \begin{equation}\label{ttg1N}
    \dot{V}_N \leq - \tfrac{\eta}{2} V_N -\gamma
    \sum_{i=1}^N \dot{\rho}_{i} + 
    (\check{v}^N)\T\check{v}^N. 
  \end{equation} 
  Note that the bound in \eqref{lastone4xxN} combined with the
  inequality \eqref{ttg1N} for $t>T$ implies that there exists some
  $\nu>0$, and $\mu$ such that
  \begin{equation}\label{ttg4N}
    \tilde{\rho}_i(t_2)-\tilde{\rho}_i(t_1) <\nu(t_2-t_1)+\mu,
  \end{equation}
  for $i=1,\ldots, N$ and $t_2,t_1>T$. 

  If \eqref{ttg4N} is satisfied for $t_1,t_2>T$ then the lemma follows
  for $t_1,t_2>0$. After all, for $t<T$ all signals are bounded and
  therefore it is clear that for $t<T$ the $\rho_i$ can grow at most
  linearly and hence we can obtain the result for all $t_1,t_2>0$.
\end{proof}

\begin{lemma}\label{lem2col}
  Consider a number of agents $N$ and a graph
  $\mathscr{G}\in\mathbb{G}^N$. Consider the MAS \eqref{agent-g-noise}
  with associated network communication \eqref{zetanoise} and a given
  parameter $\delta>0$. Assume Assumption \ref{asscol} is satisfied.
  Choose any $d$ satisfying \eqref{dchoice2}.  In that case, all
  $\rho_i$ remain bounded.
\end{lemma}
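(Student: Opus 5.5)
We prove Lemma \ref{lem2col} by contradiction, using the growth bound of Lemma \ref{lem2acol} and the same Lyapunov functions $V_k$ and $V_N$ built in its proof. The argument mirrors the structure of the proof of Theorem \ref{theorem}. First treat each basic bi-component of $L$ in \eqref{Lstruc} separately. Its dynamics \eqref{syscomp2} are decoupled from the rest of the network, and it is strongly connected, so we may take the network to be strongly connected. Once the $\rho_i$ of all basic bi-components are shown bounded, the hypothesis of Lemma \ref{lem2acol} holds for the full graph, and the same argument, now with $L_{11}$ invertible, handles the remaining agents. So it suffices to show: if the graph is strongly connected, or if the $\rho_i$ of the basic bi-components are bounded, then no $\rho_i$ can diverge. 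Note that each $\rho_i$ is nondecreasing, so unbounded means $\rho_i\to\infty$.

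Suppose that $k\geq 1$ of the $\rho_i$, say $i\leq k$, are unbounded. Consider first the case $k<N$. Lemma \ref{lem2acol} tells us $\rho_i(t)\leq \nu t+\mu'$, so $\rho^{-k}\geq (\nu t+\mu')^{-1}$. We return to \eqref{ttg1} and integrate from $T$ to $t$:
\[
\gamma\sum_{i=1}^k(\rho_i(t)-\rho_i(T)) \leq V_k(T)+\int_T^t \|\check s^k\|^2\,\mathrm{d}\tau+\beta .
\]
This crude bound only gives linear growth. The finer point is that $\check s^k$ carries the factor $(\rho^{-k})^{1/2}$, so $\|\check s^k(t)\|^2\leq c\,\max_{i\leq k}\rho_i^{-1}(t)$. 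We therefore need to exploit the decay of $\rho^{-1}$ together with a lower bound on how fast the $\rho_i$ grow.

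The key step is a dichotomy argument. Either $\int_T^\infty \rho_i^{-1}\,\mathrm{d}\tau<\infty$ for every $i\leq k$, in which case the right-hand side above stays bounded and the $\rho_i$ are bounded, a contradiction. Or the $\rho_i$ grow so slowly that $\dot\rho_i$ is small on average, and we then show directly that $\dot\rho_i$ must vanish eventually. For the second branch, bound $V_k$: \eqref{lastone2xx} gives $V_k\leq c\max_i\rho_i^{-1}$ asymptotically. Since $V_k\geq \min_i h_i\rho_i^{-1}\|\hat e^k\|^2_{Q^{-1}}$, and the $\rho_i$ within the unbounded group are comparable, as we will show using the coupling through $L_{11}$ and \eqref{ttg1}, this yields a uniform bound on $\hat e^k$. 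We then want a quantitative smallness: $\|\tilde e^k\|^2=\|(L_{11}\otimes C)\hat e^k\|^2$ should eventually fall below $d$, and this must come from the dissipation term $-\gamma\|\tilde e^k\|^2$ in \eqref{lastone2xx}. Indeed, integrating \eqref{lastone2xx} gives
\[
\gamma\int_{t_1}^{t_2}\|\tilde e^k\|^2 \leq V_k(t_1)+\int_{t_1}^{t_2}(\|\check s^k\|^2+\|\check v^k\|^2),
\]
and $\|\check s^k\|^2=O(\rho^{-1})\to 0$. Hence on long intervals the average of $\|\tilde e^k\|^2$ tends to zero. Uniform continuity of $\tilde e^k$ follows from the boundedness of $\hat e^k$ and of $\rho^k\tilde e^k$. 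This last quantity is where the main obstacle lies: $\dot{\hat e}^k$ contains the term $\rho^kL_{11}\otimes QC\T C\,\hat e^k$, which grows with $\rho$. To handle it we would first try a time rescaling $\mathrm{d}s=\rho\,\mathrm{d}t$, or alternatively show that $\rho_i\|\tilde e_i\|^2$ is integrable. With either in hand, $\|\tilde e_i\|^2<d$ eventually, so $\dot\rho_i=0$ eventually, contradicting unboundedness.

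The case $k=N$ for a strongly connected graph is identical, with $V_N$, $Q_\rho$ and \eqref{lastone2xxN} in place of $V_k$. The term $\check v^N$ in \eqref{checkv} again carries a factor $Q_\rho^{1/2}\to 0$, and its size is controlled by $\max_i\rho_i^{-1}$ through Lemma \ref{2.9}. The main obstacle throughout is converting ``the average of $\|\tilde e\|^2$ tends to zero'' into ``$\|\tilde e_i(t)\|^2<d$ pointwise eventually'' in the presence of the growing gain $\rho$. I expect this to require the linear growth bound of Lemma \ref{lem2acol} in an essential way: it gives $\rho^{-1}\notin L_1$ but keeps the growth rate of the dynamics controlled.
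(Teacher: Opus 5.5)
There is a genuine gap. The two steps that carry your argument are, respectively, asserted without proof and left open.

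\textbf{First gap: comparability of the gains.} Your uniform bound on $\hat e^k$ needs the unbounded $\rho_i$ to be mutually comparable, which you say will follow from the coupling through $L_{11}$ and \eqref{ttg1}. But \eqref{ttg1} only controls $\sum_i\dot\rho_i$, and nothing prevents $\rho_{\max}/\rho_{\min}\to\infty$. From $V_k=O(\rho_{\min}^{-1})$ and $V_k\geq \min_i(h_i\rho_i^{-1})\|\hat e^k\|^2_{Q^{-1}}$ you only get $\|\hat e^k\|^2=O(\rho_{\max}/\rho_{\min})$. The paper explicitly avoids this shortcut (``the $\rho_{s,i}$ \ldots might be much larger than $\rho_{s,k}$'') and proceeds in three steps:
\begin{enumerate}
\item It re-sorts the agents at integer times so that $\rho_{s,k}$ is the smallest unbounded gain, with $\rho_{s,k}\leq 2\rho_{s,i}$ by Lemma \ref{lem2acol}.
\item It shows $\rho_{s,k}V_{s,k}$ is bounded through the cancellation $\dot\rho_{s,k}V_{s,k}-\gamma\rho_{s,k}\dot\rho_{s,k}\leq 0$ once $V_{s,k}\leq\gamma\rho_{s,k}$, while controlling the jumps caused by the re-sorting.
\item It runs a downward recursion $j=k,\ldots,1$ on the reduced variables $\check e_s^j$ of \eqref{checkx}, using Lemma \ref{2.10}. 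This gives boundedness of $\rho_{s,j}V_{s,j}$ and of $\int_\nu^{\nu+1}\rho_{s,j}\|\tilde e_s^j\|^2\,\mathrm{d}t$ for every $j$, and hence of $\hat e^k$.
\end{enumerate}
Your dichotomy does not replace any of this. Its first branch is empty by Lemma \ref{lem2acol}, since, as you note yourself, $\rho_i^{-1}\notin L_1$. Integrating \eqref{ttg1} with $\|\check s^k\|^2=O(\rho^{-1})$ only improves the growth estimate to roughly $\sqrt{t}$, not to boundedness.

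\textbf{Second gap: from averaged to pointwise smallness.} Passing from ``$\int_\nu^{\nu+1}\|\tilde e^k\|^2\,\mathrm{d}t=O(\rho^{-1})$'' to ``$\|\tilde e_i(t)\|^2<d$ eventually'' is the crux, and you only list what you would try. Neither option works as stated:
\begin{itemize}
\item Uniform continuity via boundedness of $\rho^k\tilde e^k$ would need $\|\tilde e^k\|=O(\rho^{-1})$ pointwise. That is far stronger than the $O(\rho^{-1/2})$ root-mean-square bound you actually have.
\item Integrability of $\rho_i\|\tilde e_i\|^2$ on $[T,\infty)$ is circular. Since $\dot\rho_i\leq\|\tilde e_i\|^2$, mere integrability of $\|\tilde e_i\|^2$ already bounds $\rho_i$.
\end{itemize}
The missing idea is that a \emph{one-sided} bound suffices. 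The paper uses $\tilde V_k=\rho_{s,k}^{-1}(\tilde e^k)\T(\rho^kH^k\otimes I)\tilde e^k$. In it the high-gain coupling appears as $-\rho_{s,k}^{-1}(\bar\xi^k)\T[(H^kL_{11}+L_{11}\T H^k)\otimes Q]\bar\xi^k\leq 0$. With $\hat\xi^k$ bounded, this makes $\dot{\tilde V}_k$ bounded above independently of $\rho$, cf.\ \eqref{boundidk} and \eqref{boundidN}. Then $\|\tilde e_i(t_1)\|^2\geq d$ forces $\tilde V_k$ to stay of order $d$ on $[t_1-\eps,t_1]$, for some $\eps$ independent of $t_1$. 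This contradicts $\int_\nu^{\nu+1}\tilde V_k\,\mathrm{d}t=O(\rho_{s,k}^{-1})\to 0$, which itself comes from the recursion above. Without these two ingredients your argument does not close.
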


\begin{proof}[Proof of Lemma \ref{lem2col}]
  We prove this result by contradiction. Let $k$ be such that
  $\rho_i$ is unbounded for $i\leq k$ while $\rho_i$ is bounded
  for $i>k$. Clearly, if all $\rho_i$ are unbounded then we have
  $k=N$.

  For each $\nu\in \N$, we define a time-dependent permutation $p$ of
  $\{ 1,\ldots,N\}$ such that
  \begin{equation}\label{perm}
    \rho_{p_s(1)}(\nu) \geq \rho_{p_s(2)}(\nu) \geq \rho_{p_s(3)}(\nu) \geq
    \cdots \geq \rho_{p_s(k)}(\nu),\quad p_s(i)=i \text{ for } i>k
  \end{equation}
  and we choose $p_t=p_s$ for $t\in[s,s+1)$. Note that Lemma \ref{lem2acol} implies
  there exists $T>0$ such that $\rho_{s,k}(t)\leq 2 \rho_{s,i}(t)$ for
  $i=1,\ldots,k-1$ and $t>T$ (recall that by construction $\rho_{s,i}(t)\rightarrow \infty$
  for $t\rightarrow \infty$). We define
  \[
    \hat{e}_{s,i}(t) = \hat{e}_{p_t(i)}(t),\qquad
    \tilde{e}_{s,i}(t) = \tilde{e}_{p_t(i)}(t),\qquad
    \rho_{s,i}(t) = \rho_{p_t(i)}(t),\qquad
    h_{s,i}(t) = h_{p_t(i)}.
  \]
  For $k<N$, we set
  \begin{equation}\label{tildeVk}
    V_{s,k}= (e_s^k)\T (\rho_s^{-k} H_s^k \otimes
    Q^{-1}) e_s^k, 
  \end{equation}
  while for $k=N$ we have
  \begin{equation}\label{tildeVN}
    V_{s,N}= (e_s^N)\T \left[ Q_{s,\rho} \otimes Q^{-1}
    \right] e_s^N,
  \end{equation}
  where
  \begin{equation}\label{tildeQrho}
    Q_{s,\rho} = \rho_s^{-N} \left(
      H_s^N\rho_s^N - \mu_{N}
      \textbf{h}_{s,N}\textbf{h}_{s,N}\T \right)
    \rho_s^{-N}. 
  \end{equation}  
  Here $H_s^k, \textbf{h}_{s,N}$ and $\rho_s^k$ are
  obtained from $H^k, \textbf{h}_N$ and $\rho^k$ using our
  permutation.
  
  If we assume $k<N$ then using the arguments of Lemma
  \ref{lem2acol}, we obtain \eqref{barxt2xxxy} and the bound
  \eqref{K3K4}. We also obtained in Lemma
  \ref{lem2acol} the bound \eqref{lastone2xx}.
  Using the permutation we
  introduced this immediately yields:
  \begin{equation}\label{lastone2}
    \dot{V}_{s,k} \leq - \tfrac{\eta}{2} V_{s,k} 
    - \gamma (\hat{e}_s^k)\T \left[ L_{s,11}\T L_{s,11}
      \otimes C\T C \right] \hat{e}_s^k    
    + (\check{s}_s^k)\T \check{s}_s^k +
    (\check{v}_s^k)\T\check{v}_s^k, 
  \end{equation}
  where $\check{s}_s^k, \check{v}_s^k$ and $L_{s,11}$ are
  obtained from $\check{s}^k, \check{v}^k$ and $L_{11}$ by applying
  the  permutation introduced above.
    
  Note that there exists some fixed $\tilde{\alpha}$ such that
  \begin{equation}\label{lastone4}
    \| \rho_{s,k}^{1/2}  \check{s}_s^k \|^2_2  \leq
    \tilde{\alpha}^2,\qquad 
    \| \rho_{s,k}^{1/2} \check{v}_s^k \|^2_\infty \leq \tilde{\alpha}^2,
  \end{equation}
  where we exploited that $\rho_{s,k} \leq 2\rho_{s,i}$ for $i=1,\ldots,
  k$ implies that
  \[
    \rho_{s,k} H_s^k \rho_s^{-k}
  \]
  is bounded.

  On the other hand, for $k=N$, then using the arguments of Lemma
  \ref{lem2acol}, we obtain \eqref{lastone2xxN}. Set $\check{s}^N=0$ and
  let $\check{v}^N$ be defined by \eqref{checkv}.  Using the
  permutation we introduced this immediately yields \eqref{lastone2}
  where $\check{s}, \check{v}_s,L_{s,11}$ are
  obtained from $\check{s}, \check{v}$ and $L_{11}=L$ by applying the
  permutation introduced above.
  
  We continue with the case $k=N$. Note that
  $\rho_{s,N} \leq 2\rho_{s,i}$ for $i=1,\ldots, N$ implies that
  \[
    \rho_{s,N} Q_{s,\rho}
  \]
  is bounded which yields that there exists some fixed
  $\tilde{\alpha}$ such that \eqref{lastone4} is satisfied.
  
  We have established \eqref{lastone2} and \eqref{lastone4} for both
  $k<N$ and $k=N$. Equation \eqref{lastone2} implies
  \begin{equation}\label{lastone}
    \dot{V}_{s,k} \leq - \tfrac{\eta}{2} V_{s,k} + (\check{s}_s^k)\T
    \check{s}_s^k + (\check{v}_s^k)\T\check{v}_s^k. 
  \end{equation}
  This clearly yields that $V_{s,k}$ is bounded given our bounds
  \eqref{lastone4}. Using inequality \eqref{ttg1} together with our
  permutation we obtain that, for $t>T_1$, we have
  \begin{align}
    \left[ \rho_{s,k} V_{s,k} \right]' &\leq
    \dot{\rho}_{s,k} V_{s,k} -
    \tfrac{\eta}{2} \rho_{s,k} V_{s,k} -\gamma 
    \rho_{s,k} \dot{\rho}_{s,k} + \rho_{s,k} [(\check{s}_s^k)\T
    \check{s}_s^k + (\check{v}_s^k)\T\check{v}_s^k] \nonumber\\
    &\leq  - \tfrac{\eta}{2} \rho_{s,k} \tilde{V}_k + \rho_{s,k} [(\check{s}_s^k)\T
    \check{s}_s^k + (\check{v}_s^k)\T\check{v}_s^k],  \label{lastone3}
  \end{align}
  where we choose $T_1>T$ such that
  $V_{s,k} \leq \gamma \rho_{s,k}$ for $t>T_1$ which is
  obviously possible since $\rho_{s,k}$ increases to infinity
  while $V_{s,k}$, as argued before, is bounded. Then, using
  \eqref{lastone4} and \eqref{lastone3} we find
  \[
    \rho_{s,k}(\nu+\sigma) V_{s,k}(\nu+\sigma) < e^{-\sigma\eta/2}
    \rho_{s,k}(\nu) V_{s,k}(\nu)  + 2\tilde{\alpha}^2, 
  \]
  for all $\sigma\in (0,1]$ and any $\nu\in \N$ with $\nu>T_1$. This by
  itself does not yield that $\rho_{s,k} V_{s,k}$ is
  bounded because we have potential discontinuities for $\nu\in \N$
  because of the reordering process we introduced. Note that
  $V_{s,k}$ is not affected by the reordering but $\rho_{s,k}$
  can have discontinuities  for $\nu\in \N$. Hence
  \[
    \rho_{s,k}(\nu^+)  \text{ and }
    \rho_{s,k}(\nu^-) 
  \]
  might be different and, strictly speaking, we have obtained
  \begin{equation}\label{yhju}
    \rho_{s,k}(\nu+\sigma) V_{s,k}(\nu+\sigma) < e^{-\sigma\eta/2}
    \rho_{s,k}(\nu^+) V_{s,k}(\nu)  + 2\tilde{\alpha}^2,
  \end{equation}  
  for all $\sigma\in (0,1]$ and any $\nu\in \N$ with $\nu>T_1$. 

  Given the bounds on the growth of $\rho_i$ obtained in Lemma
  \ref{lem2acol}, it is easy to see that there exists $A_0>0$ such that a
  discontinuity can only occur when
  \[
    \rho_{s,k-1}(\nu)-\rho_{s,k}(\nu)<A_0,
  \]
  for $\nu$ sufficiently large. Using this bound, together with the fact
  that $\rho_{s,k}$ converges to infinity, we find that for any
  $\eps$ there exists $T_2>T_1$ such that
  \begin{equation}\label{bound2}
    \rho_{s,k}(\nu^+)  \leq (1+\eps)
    \rho_{s,k}(\nu^-),
  \end{equation}
  for $\nu>T_2$.  Therefore, we find that
  \begin{equation}\label{ttg3}
    \rho_{s,k}(t) V_{s,k}(t) 
  \end{equation}
  is bounded by combining \eqref{bound2} with \eqref{yhju} provided
  that we choose $(1+\eps)e^{-\eta/2} < 1$.
  
  In addition, \eqref{lastone2}, combined with  \eqref{lastone4},
  implies that
  \begin{equation*}
    \gamma \int_{\nu}^{\nu+1} \rho_{s,k} (\hat{e}_s^k)\T \left[
      L_{s,11}\T L_{s,11}\Tcc \otimes C\T C \right] \hat{e}_s^k\,
    \textrm{d}t \leq\\
     2\rho_{s,k}(\nu) V_{s,k}(\nu) + 
    \int_{\nu}^{\nu+1} \rho_{s,k}
    [(\check{s}_s^k)\T\check{s}_s^k+(\check{v}_s^k)\T\check{v}_s^k]
    \textrm{d}t\leq 2\rho_{s,k}(\nu) V_{s,k}(\nu) + 2\alpha_s^2
  \end{equation*}
  for all $\nu\in \N$ since $\tilde{\rho}_k(t)\leq 2\tilde{\rho}_k(\nu)$ for
  $t\in [\nu,\nu+1]$. Boundedness of \eqref{ttg3} then implies that
  \[
    \int_{s}^{s+1} \rho_{s,k} (\tilde{e}_s^k)\T \tilde{e}_s^k\, \textrm{d}t
  \]
  is bounded. 

  We have established that $\tilde{\rho}_k V_{s,k}$ is bounded
  but this does not establish that the $\hat{e}_s^k$ are bounded since
  the $\rho_{s,i}$ for $i=1,\ldots,k-1$ might be much larger than
  $\rho_{s,k}$. We need to do some extra work.
  We have that \eqref{barxt2xxxy} implies
  \begin{equation}\label{barxt2xxxy2}
    \dot{\hat{e}}_s^k = (I\otimes A)\hat{e}_s^k
    -[\rho_s^{k}
    L_{s,11} \otimes QC\T  C]\hat{e}_s^k
    +\hat{s}_s^k+\hat{v}_s^k,
  \end{equation}
  where $\hat{s}_s^k$ and $\hat{v}_s^k$ are
  obtained from $\hat{s}^k$ and $\hat{v}^k$ respectively by
  applying the permutation introduced above. A permutation clearly
  does not affect the bound we obtained in 
  \eqref{K3K4} and we obtain
  \begin{equation}\label{K3K42}
    \| \hat{s}_s^k \|_\infty < K_3,\qquad \| \hat{v}_s^k \|_2 < K_4.
  \end{equation}
  For any $j<k$, we decompose
  \begin{equation}\label{tildeL11}
    \hat{e}^j_{s,I} = \begin{pmatrix}
      \hat{e}_{s,1} \\ \vdots \\ \hat{e}_{s,j} \end{pmatrix},\quad
    \hat{e}^j_{s,II} = \begin{pmatrix}
      \hat{e}_{s,j+1} \\ \vdots \\ \hat{e}_{s,k} \end{pmatrix},\quad
    L_{s,11} =\begin{pmatrix} L^j_{s,11} & L^j_{s,12} \\
      L^j_{s,21} & L^j_{s,22}
    \end{pmatrix},
  \end{equation}
  with $\tilde{L}^j_{11}\in \R^{j\times j}$ while
  \begin{equation*}
    \hat{s}_s^k = \begin{pmatrix} \hat{s}_s^j \\
      \hat{s}^j_{s,c} \end{pmatrix},\quad
    \hat{v}_s^k = \begin{pmatrix} \hat{v}_s^j \\
      \hat{v}^j_{s,c} \end{pmatrix},
  \end{equation*}
  with 
  $\hat{s}_s^j\in \R^{nj}$, $\hat{v}_s^j\in \R^{nj}$
  and
  \begin{equation}\label{checkx}
    \check{e}_s^j = \hat{e}^j_{s,I}  + (L_{s,11}^j)^{-1}
    L^j_{s,12} \hat{e}^j_{s,II},
  \end{equation}
  for $j<k$ while $\check{e}_s^k=\hat{e}_s^k$.  We will show that 
  \begin{equation}\label{upbound}
    \rho_{s,j} V_{s,j}\qquad\text{ and }\qquad  \int_{\nu}^{\nu+1}
    \rho_{s,j} (\tilde{e}_s^j)\T \tilde{e}_s^j\, \textrm{d}t 
  \end{equation}
  are bounded for $j=1,\ldots, k$  where
  \begin{equation}\label{tildeVj}
    V_{s,j}=(\check{e}_s^j)\T \left[ H_s^j\rho_s^{-j} \otimes Q^{-1}
  \right] \check{e}_s^j, 
  \end{equation}
  while
  \[
    \rho_s^j=\begin{pmatrix}
      \rho_{s,1} & 0      & \cdots & 0 \\
      0      & \rho_{s,2} & \ddots & \vdots \\
      \vdots & \ddots & \ddots & 0 \\
      0      & \cdots & 0      & \rho_{s,j}
    \end{pmatrix},\quad
    \rho_{s,c}^j=\begin{pmatrix}
      \rho_{s,j+1} & 0      & \cdots & 0 \\
      0      & \rho_{s,j+2} & \ddots & \vdots \\
      \vdots & \ddots & \ddots & 0 \\
      0      & \cdots & 0      & \rho_{s,k}
    \end{pmatrix},\quad
    H_s^j=\begin{pmatrix}
      h_{s,1} & 0      & \cdots & 0 \\
      0      & h_{s,2} & \ddots & \vdots \\
      \vdots & \ddots & \ddots & 0 \\
      0      & \cdots & 0      & h_{s,j}
    \end{pmatrix}.
  \]
  Note that if $k=N$ then we have \eqref{tildeVN} instead of \eqref{tildeVj}.
  We also define
  \[
    \tilde{e}_s^j=\begin{pmatrix} \tilde{e}_{s,1} \\ \vdots \\
      \tilde{e}_{s,j} \end{pmatrix},\qquad
    \tilde{e}^j_{s,c}=\begin{pmatrix} \hat{e}_{s,j+1} \\ \vdots \\
      \tilde{e}_{s,k} \end{pmatrix}.
  \]
  It is not hard to verify that if $\rho_{s,j}V_{s,j}$ is bounded for
  $j=1,\ldots, k$ then we have that  $\hat{e}_j$ is bounded for
  $j=1,\ldots, k$.
  
  We will establish boundedness of $\rho_{s,j}V_{s,j}$ via
  recursion. Assume for $i=j$ we have either
  \begin{equation}\label{ttg3b}
    \rho_{s,i} V_{s,i}
  \end{equation}
  is unbounded or
  \begin{equation}\label{ttg3c}
    \int_{\nu}^{\nu+1} \rho_{s,i} (\tilde{e}_s^i)\T \tilde{e}_s^i\, \textrm{d}t
  \end{equation}
  is unbounded while, for $j<i\leq k$, both \eqref{ttg3b} and
  \eqref{ttg3c} are bounded. We will show this yields a
  contradiction. Note that in the above we already established
  \eqref{ttg3b} and \eqref{ttg3c} are bounded for $i=k$.
  
  Using \eqref{barxt2xxxy2} and \eqref{checkx}, we obtain
  \begin{equation}\label{checkxj}
    \dot{\check{e}}_s^j = (I\otimes A) \check{e}_s^j -
    \left[\rho_s^jL_{s,11}^j\otimes
      QC\T C  \right] \check{e}_s^j+ \check{s}^j_s+\check{v}^j_s 
    + \rho_{s,j}^{-1/2} \left[ \rho_s^j(H_s^j)^{-1} (L_{s,11}^j)\T
     \otimes QC\T) \right]\check{w}_s^j 
  \end{equation}
  where
  \begin{align*}
    \check{s}_s^j & =\hat{s}_s^j+(L_{s,11}^j)^{-1}L_{s,12}^j
    \hat{s}_{s,c}^j,\\
    \check{v}_s^j & = \hat{v}_s^j+(L_{s,11}^j)^{-1}\tilde{L}_{s,12}^j
    \hat{v}_{s,c}^j  \\
    \check{w}_s^j &= \rho_{s,j}^{1/2} \left[ 
      (L_{s,11}^j)^{-T}  H_s^j (\rho_s^j)^{-1}(L^j_{s,11})^{-1} L^j_{s,12}
      \rho^j_{s,c} \otimes I  \right] \tilde{e}_{s,c}^j. 
  \end{align*}
  Using \eqref{K3K42}, we find that there exist constant $\tilde{K}_3$ such that
  \begin{equation}\label{tildeK3K4}
    \| \check{s}_s^j \|_\infty < \tilde{K}_3,\quad
    \int_\nu^{\nu+1} (\check{v}_s^j)\T\check{v}_s^j \textrm{d}t < \tilde{K}_3.
  \end{equation}
  Finally,
  \eqref{ttg3c} bounded  for $i=j+1,\ldots ,k$ implies there exists a
  constant $\tilde{K}_4$ such that
  \begin{equation}\label{tildeK3K4b2}
    \int_{\nu}^{\nu+1} (\check{w}_s^j)\T \check{w}_s^j\,
    \textrm{d}t < \tilde{K}_4
  \end{equation}
  is bounded.   We obtain
  \begin{multline*}
    \dot{V}_{s,j} =(\check{e}_s^j)\T \left[
      H_s^j\rho_s^{-j} \otimes (-\eta Q^{-1} + C\T C
    \right] \check{e}_s^j  
    -(\check{e}_s^j)\T \left[ (H_s^jL_{s,11}^j+(L_{s,11}^j)\T
      H_s^j)\otimes C\T C \right]\check{e}_s^j \\
    + 2(\check{e}_s^j)\T \left[ H_s^j\rho_s^{-j} \otimes Q^{-1} \right]
    (\check{s}_s^j+\check{v}_s^j) + 2\rho_{s,j}^{-1/2}(\check{e}_s^j)\T \left[
      (L_{s,11}^j)\T \otimes C\T \right]
    \check{w}_s^j
  \end{multline*}
  using \eqref{tildeVj} and \eqref{checkxj}. Note that \eqref{HkL11} implies
  \[
    H_s^jL_{s,11}^j+(L_{s,11}^j)\T
    H_s^j > 2\gamma (L_{s,11}^j)\T L_{s,11}^j.
  \]
  Moreover, there exists $T>0$ such that
  \[
    H_s^j \rho_s^{-j}  \leq \gamma,
  \]
  for $t>T$ since $\rho_s^j \rightarrow \infty$. This yields that 
   \begin{equation}\label{gettingthere}
    \dot{V}_{s,j} \leq - \tfrac{\eta}{2} V_{s,j} -\tfrac{\gamma}{2}
    (\check{e}^j)\T \left[ (L_{s,11}^j)\T L_{s,11}^j \otimes C\T C
    \right]\check{e}^j+ \tfrac{2}{\eta} (\check{s}^j)\T \left[
      H_s^j\rho_s^{-j} \otimes Q^{-1} \right] \check{s}^j+
    \tfrac{2}{\eta} (\check{v}^j)\T \left[ H_s^j\rho_s^{-j} \otimes
      Q^{-1} \right] \check{v}^j + \rho_{s,j}^{-1} \tfrac{2}{\gamma}
    (\check{w}_s^j)\T \check{w}_s^j.
  \end{equation}
  We find $V_{s,j}$ is bounded and hence we can choose $T_1$ such that
  $V_{s,j} \leq \gamma \rho_{s,j}$ for $t>T_1$. Similar as before,
  \eqref{tildeK3K4}, \eqref{tildeK3K4b2} and \eqref{gettingthere}
  imply there exists some $\tilde{M}>0$ such that
  \begin{equation}\label{llast}
    \rho_{s,j}(\nu+\sigma) V_{s,j}(\nu+\sigma) < e^{-\sigma\eta/2}
    \rho_{s,j}(\nu^+) V_{s,j}(\nu^{+})  + \tilde{M},
  \end{equation}
  for all $\sigma\in (0,1]$ and any $\nu\in \N$ with $\nu>T_1$. Again,
  by itself, this does not imply that $\rho_{s,j}(\nu)V_{s,j}(\nu)$ is
  bounded because at time $\nu\in \N$ there might be a discontinuity
  due to the reordering we performed.

  If a new permutation has the same $j$ agents with the largest
  $\rho_i$ then $V_{s,j}(\nu^+) = V_{s,j}(\nu^-)$ and we obtain,
  as before, that there exists some $\eps$ satisfying
  $(1+\eps)e^{-\eta/2}<1$ and a $T_2>T_1$ such that
  \begin{equation}\label{llast23}
    \rho_{s,j}(\nu^+) V_{s,j}(\nu^+) \leq (1+\eps)
    \rho_{s,j}(\nu^-) V_{s,j}(\nu^-),
  \end{equation}
  for $\nu>T_2$ similarly as we did in the derivation of equation \eqref{bound2}.

  If a new permutation changes the set of $j$
  agents with the largest $\rho_i$ then it is easy to see that
  there exists $A_0>0$ such that a 
  discontinuity can only occur when
  \[
    \rho_{s,j}(\nu)-\rho_{s,j+1}(\nu)<A_0
  \]
  for $\nu$ sufficiently large using Lemma \ref{lem2acol}. This implies
  that there exists some $A>0$ such that
  \begin{equation*}
    \rho_{s,j}(\nu)V_{s,j}(\nu) <
    (\rho_{s,j+1}(\nu)+A_0) V_{s,j}(\nu)\\ 
    < (\rho_{s,j+1}(\nu)+A_0)  M V_{s,j+1}(\nu)\\
    < \frac{\rho_{s,j+1}(\nu)+A_0}{\rho_{s,j+1}(\nu)}
    M \rho_{s,j+1}(\nu) V_{s,j+1}(\nu) < A,
  \end{equation*}
  for large $\nu$ using Lemma \ref{2.10}. The last inequality follows
  from the fact that we already established that
  $\rho_{s,j+1} V_{s,j+1}(\nu)$ is bounded while
  $\rho_{s,j+1}$ is increasing to infinity. Combined with
  \eqref{llast} this shows
  \[
    \rho_{s,j}(\nu+\sigma)V_{s,j}(\nu+\sigma)
  \]
  is bounded as well. If $\rho_{s,j}(\nu) V_{s,j}(\nu)$ is
  larger than $A$ then we know discontinuities of $V_{s,j}(\nu)$ do
  not arise and hence \eqref{llast} and \eqref{llast23} show that
  $\rho_{s,j} V_{s,j}$ remains bounded.

  It remains to show that \eqref{ttg3c} is bounded. This follows
  immediately from \eqref{gettingthere} in combination with
  \eqref{tildeK3K4} and the boundedness of $\rho_{s,j} V_{s,j}$.

  In this way, we recursively established that \eqref{ttg3b} is
  bounded for $i=1,\ldots,k$. It is not hard to show that this implies
  that $\hat{\xi}^k=(L_{11}\otimes I)\hat{e}^k$ is bounded for $k<N$
  while we obtain that $\xi=(L\otimes I)e$ is bounded for $k=N$.
  
  Using \eqref{barxt2xxx} we obtain:
  \begin{equation}\label{barxt2xxx2}
    \dot{\hat{\xi}}^k = (I\otimes A)\hat{\xi}^k
    -[L_{11} \rho^{k} \otimes QC\T C] \hat{\xi}^k
    + \bar{v}^k + \bar{s}^k
  \end{equation}
  where
  \begin{align*}
    \bar{v}^k = -\left[ L_{12}\rho^k_c
      \otimes QC\T \right]v^k_c
    \bar{s}^k = -\left[ L_{12}\rho^k_c
      \otimes QC\T \right]s^k_c-
    [{\setlength{\arraycolsep}{2mm}(\begin{matrix} L_{11}
        & L_{12} \end{matrix})} \otimes E]w 
  \end{align*}
  for $k<N$. Note that $\hat{\xi}^k$ and $\bar{s}^k$ are bounded while $v_c^k$ has
  bounded energy (recall that $\rho_c^k$, by construction, is bounded).

  For $k=N$ we obtain:
  \begin{equation}\label{barxt2xxx2b}
    \dot{\xi} = (I\otimes A)\hat{\xi}
    -[L \rho \otimes QC\T C] \xi
    - (L \otimes E) w 
  \end{equation}
  with $\xi$ and $w$ bounded. We choose:
  \[
    \tilde{V}_k = \tfrac{1}{\rho_{s,k}} (\hat{\xi}^k)\T (
    \rho^k H^k \otimes C\T C) \hat{\xi}^k = \tfrac{1}{\rho_{s,k}} (\tilde{e}^k)\T (
    \rho^k H^k \otimes I) \tilde{e}^k
  \]
  For $k<N$ we get
  \[
    \dot{\tilde{V}}_k
    \leq M  - \tfrac{2\gamma }{\rho_{s,k}} (\bar{\xi}^k)\T (L_{11}\T
    L_{11} \otimes Q )\bar{\xi}^k
    + \tfrac{2}{\rho_{s,k}}  (\bar{\xi}^k)\T (H^k\otimes C\T C) \left[
      (I\otimes A)
      \hat{\xi}^k + \bar{s}^k+\bar{v}^k \right],  
  \]
  where we used \eqref{HkL11} and
  \[
    \bar{\xi}^k=( \rho^k\otimes C\T C ) \hat{\xi}^k,
  \]
  while $M$ is such that:
  \begin{equation} \label{mmmm}
    \tfrac{1}{\rho_{s,k}} (\hat{\xi}^k)\T ( \dot{\rho}^k H^k \otimes C\T C)
    \hat{\xi}^k < M,
  \end{equation}
  which is possible since $\hat{\xi}^k$ bounded guarantees that
  $\dot{\rho}^k$ is bounded as well. Since $Q>0$ and
  $L_{11}\T L_{11}>0$, we find there exists $m_1,m_2$ and $m_3$ such
  that:
  \begin{equation}\label{boundidk}
    \dot{\tilde{V}}_k
    \leq M  + m_1 (\hat{\xi}^k)\T \hat{\xi}^k + m_2 (\bar{s}^k)\T
    \bar{s}^k  + m_3 (\bar{v}^k)\T \bar{v}^k.
  \end{equation}
  For $k=N$ we get:
  \[
    \dot{\tilde{V}}_N
    \leq M  - \tfrac{2\gamma }{\rho_{s,k}} (\bar{\xi}^N)\T (I  \otimes Q )\bar{\xi}^N
    + \tfrac{2}{\rho_{s,N}}  (\bar{\xi}^N)\T (H^N\otimes C\T C) \left[
      (L^{\dagger} \otimes A)
      \xi + w\right],
  \]
  where we used \eqref{Hlyap} and
  \[
    \bar{\xi}^k=( \rho^kL \otimes C\T C ) \xi,
  \]
  while $M$ is given by \eqref{mmmm}. Moreover $L^{\dagger}$ is a
  generalized inverse of $L$ such
  that $LL^{\dagger}L = L$ and hence $\xi=(L\otimes
  I)e=(LL^{\dagger}\otimes I) \xi$. Since $Q>0$ and
  we find there exists $m_1$ and $m_2$ such
  that:
  \begin{equation}\label{boundidN}
    \dot{\tilde{V}}_N
    \leq M  + m_1 \xi\T \xi + m_2 w\T w.
  \end{equation}
  Note that if $\xi_{s,k}\T \xi_{s,k}\Tcc(t)\geq d$ for some $t_1>0$ then
  $V_k(t_1)\geq d$. Given that the growth of $V_k$ is limited by either
  \eqref{boundidk} or \eqref{boundidN} there exists an $\eps$
  independent of $t_1$ such that
  \[
    V_{k}(t)>\tfrac{d}{2}
  \]
  for all $t\in [t_1-\eps,t_1]$. But this implies:
  \[
    \int_{t_1-\eps}^{t_1} V_k(t) \,\textrm{d} t \geq \tfrac{1}{2} d\eps.
  \]
  However, for large $t_1$, this contradicts with the fact that
  \eqref{ttg3c} is bounded while $\rho_{s,k}\rightarrow \infty$. On
  the other hand, the fact that $\xi_{s,k}\T \xi_{s,k}\Tcc(t)\geq d$
  cannot happen for large $t$ contradicts with the underlying
  construction where $\rho_{s,k}\rightarrow \infty$. Hence, we obtain
  a contradiction with our underlying premise that some of the
  $\rho_i$ are unbounded. Hence all $\rho_i$ are bounded which the
  proof of the lemma.
\end{proof}

\begin{proof}[Proof of Theorem \ref{theorem2}]
  We first look at the observer dynamics given by
  \eqref{syscomp}. Note that the dynamics are completely independent
  of the adaptive parameters $\alpha_i$.

  The Laplacian matrix of the system in general has the form
  \eqref{Lstruc}. We note that if we look at the dynamics of the
  agents belonging to one of the basic bi-components, then these
  dynamics are not influenced by the other agents and hence can be
  analyzed independent of the rest of network. The network within one
  of the basic bi-components is strongly connected and we can apply
  Lemmas \ref{lem2acol} and \ref{lem2col} to guarantee that the $\rho_i$
  associated to a basic bi-components are all bounded.

  Next, we look at the full network again. We have already established
  that the $\rho_i$ associated to all basic bi-components are all
  bounded. Then we can again apply Lemmas \ref{lem2acol} and \ref{lem2col}
  to conclude that the other $\rho_i$ not associated to basic
  bi-components are also bounded.

  After having established that all the $\rho_i$ are all bounded, we
  can then apply Lemma \ref{lem1col} to conclude that \eqref{probdef2}
  is satisfied. Next, we consider the state feedback and note that
  \eqref{protocol2.1} yields:
  \begin{equation} \label{protocol2.1rev}
    \dot{\hat{x}}_i=A\hat{x}_i-\alpha_i BB\T P_{\alpha_i}
    \left[\hat{x}_i +\sum_{j=1}^N
    \ell_{ij}\hat{x}_j \right] + \tilde{E}\tilde{w}_i 
  \end{equation}
  together with
  \begin{equation} \label{protocol2.3bc}
    \dot{\alpha}_i = \begin{cases}
      1 & \text{ if } \tilde{\zeta}_i\T \tilde{\zeta}_i \geq 1 \\
      \tilde{\zeta}_i\T \tilde{\zeta}_i & \text{ if } 1 >
      \tilde{\zeta}_i\T \tilde{\zeta}_i \geq d \\  
      0 & \text{ otherwise}
    \end{cases}
  \end{equation}
  where $\tilde{E}=-QC\T$,
  \begin{align*}
    \tilde{\zeta}_i = \sum_{j=1}^N \ell_{ij} \hat{x}_j,\qquad\text{
      and }\qquad
    \tilde{w} = \rho_i \tilde{e}_i.
  \end{align*}
  Since we already know that the $\rho_i$ remain bounded while
  $\tilde{e}_i$ satisfies \eqref{probdef2}, we know that $\tilde{w}_i$
  is bounded.

  We obtain:
  \begin{equation}\label{fgvvb}
    \dot{\tilde{\zeta}} = (I\otimes A)\tilde{\zeta}-(\tilde{L}\alpha\otimes
    BB\T)P_{\alpha} \tilde{\zeta}+(L\otimes \tilde{E})\tilde{w},
  \end{equation}
  where $\tilde{L}=I+L$ and
  \[
    \alpha= \begin{pmatrix}
      \alpha_1  & 0          & \cdots & 0      \\
      0         & \alpha_2 & \ddots & \vdots \\
      \vdots    & \ddots     & \ddots   & 0 \\
      0         & \cdots     & 0        & \alpha_N
    \end{pmatrix},\qquad
    P_{\alpha}= \begin{pmatrix}
      P_{\alpha_1}  & 0          & \cdots & 0      \\
      0         & P_{\alpha_2} & \ddots & \vdots \\
      \vdots    & \ddots     & \ddots   & 0 \\
      0         & \cdots     & 0        & P_{\alpha_N}
    \end{pmatrix},\qquad
    \tilde{\zeta}=\begin{pmatrix} \tilde{\zeta}_1 \\ \vdots \\
      \tilde{\zeta}_N \end{pmatrix},\qquad
    \tilde{w}=\begin{pmatrix} \tilde{w}_1 \\ \vdots \\
      \tilde{w}_N \end{pmatrix}.
  \]
  Note that \eqref{are} implies that $P_{\alpha_i}\rightarrow 0$ as
  $\alpha_i\rightarrow \infty$.  Based on Lemma \ref{2.8} it is easy
  to see that we can find
  \[
    \tilde{H}^N = 
    \begin{pmatrix}
      \tilde{h}_1 & 0      & \cdots & 0 \\
      0      & \tilde{h}_2 & \ddots & \vdots \\
      \vdots & \ddots & \ddots & 0 \\
      0      & \cdots & 0      & \tilde{h}_N
    \end{pmatrix},
  \]
  such that
  \[
    \tilde{H}^N\tilde{L}+\tilde{L}\T \tilde{H}^N \geq 2\tilde{H}^N.
  \]
  We define:
  \[
    V= \tilde{\zeta}\T (\tilde{H}^N\alpha
    \otimes I) P_{\alpha} \tilde{\zeta},
  \]
  and we obtain:
  \begin{multline*}
    \dot{V} \leq
    \tilde{\zeta}\T P_{\alpha} (\dot{\alpha}\tilde{H}^N \otimes I)
    \tilde{\zeta}
    +\tilde{\zeta}\T \left[ P_{\alpha}(\tilde{H}^N\alpha \otimes
      A)+(\tilde{H}^N\alpha \otimes A\T)P_{\alpha} \right]\tilde{\zeta}
    -\tilde{\zeta}\T P_{\alpha} \left[ \alpha(\tilde{H}^N
      \tilde{L}+\tilde{L}\T\tilde{H}^N)\alpha\otimes
      BB\T\right] P_{\alpha} \tilde{\zeta} \\
    +2\tilde{\zeta}\T P_{\alpha} (\alpha\tilde{H}^N L \otimes \tilde{E}) w
  \end{multline*}
  where we used that $P_{\alpha}$ is decreasing in $\alpha$. Next, we
  use \eqref{are}  to get:
  \begin{multline} \label{bnound}
    \dot{V} \leq 
    \tilde{\zeta}\T P_{\alpha} (\dot{\alpha} \tilde{H}^N \otimes I)
    \tilde{\zeta}
    +\tilde{\zeta}\T P_{\alpha} \left[ \alpha^2\tilde{H}^N \otimes BB\T
    \right]P_{\alpha} \tilde{\zeta}  
    -\tilde{\zeta}\T  \left[ \alpha \tilde{H}^N \otimes C\T C
    \right] \tilde{\zeta} \\
    -2\eps \tilde{\zeta}\T P_{\alpha} \left[  \alpha\tilde{H}^N \otimes I
    \right]\tilde{\zeta}
    -2\tilde{\zeta}\T P_{\alpha} \left[ \alpha^2\tilde{H}^N \otimes BB\T
    \right]P_{\alpha} \tilde{\zeta}
    +2\tilde{\zeta}\T P_{\alpha} (\alpha\tilde{H}^N L \otimes \tilde{E}) w.
  \end{multline}
  Define
  \[
    V_i= \alpha_i \tilde{h}_i (\tilde{\zeta}_i)\T P_{\alpha_i} \tilde{\zeta}_i.
  \]
  For ease of presentation we sort the agents such that for
  $i=1,\ldots, \rho$ we have $\alpha_i=0$ for all $t$ while for
  $i=\rho+1,\ldots, \nu$ we have $\alpha_i$ bounded and converges to a
  value unequal to zero as $t\rightarrow \infty$. Finally, for
  $i=\mu+1,\ldots, N$ we have $\alpha_i\rightarrow \infty$ as
  $t\rightarrow \infty$. Let $t_0$ be such that for $t>t_0$ we have
  that $\alpha_i(t)\neq 0$ for $i>\rho$ and $\eps\alpha_i(t)>2$ for $i>\nu$.
  In that case,
  \begin{equation}\label{tty3}
    \tilde{\zeta}\T P_{\alpha} (\dot{\alpha}\tilde{H}^N \otimes I)
    \tilde{\zeta} - \eps \tilde{\zeta}\T P_{\alpha} \left[  \alpha\tilde{H}^N \otimes I
    \right]\tilde{\zeta}
    \leq \sum_{i=\rho+1}^{\nu} \tfrac{\dot{\alpha}_i}{\alpha_i} V_i 
    \leq \dot{\beta} V,
  \end{equation}
  where
  \[
    \beta(t) =\sum_{i=\rho+1}^{\nu} \ln \alpha_i(t)-\ln \alpha_i(t_0).
  \]
  Note that there exists $M_\beta > m_\beta >0$ such that $m<\beta(t)<M$ for $t\geq
  t_0$.
  
  Note that:
  \begin{equation}\label{tty4}
    -q\T \left[ P_{\alpha} (\alpha\tilde{H}^N \otimes I) \right] q + 2 q\T
    \left[ P_{\alpha} (\alpha\tilde{H}^N \otimes I) \right] r \leq r\T
    \left[ P_{\alpha} (\alpha\tilde{H}^N \otimes I) \right] r
  \end{equation}
  for
  \begin{align*}
    q &=\sqrt{\eps} \tilde{\zeta},
    \\
    r &=\tfrac{1}{\sqrt{\eps}} \left[ L\otimes \tilde{E} \right] \tilde{w}.
  \end{align*}
  Using \eqref{tty3} and \eqref{tty4} in \eqref{bnound}, we get:
  \begin{equation}\label{REGGRE}
    \dot{V} \leq \dot{\beta} V - \tfrac{1}{2} \eps V
    -\tilde{\zeta}\T  \left[ \alpha \tilde{H}^N \otimes C\T C
    \right] \tilde{\zeta} 
    -2\tilde{\zeta}\T P_{\alpha} \left[ \alpha^2\tilde{H}^N \otimes BB\T
    \right]P_{\alpha} \tilde{\zeta}
    +\tfrac{1}{\eps} \tilde{w}\T (L\otimes \tilde{E})\T \left[ P_{\alpha}
      (\alpha\tilde{H}^N \otimes I) \right](L\otimes \tilde{E}) \tilde{w},
  \end{equation}
  which implies:
  \begin{align}
    \dot{V} & \leq \dot{\beta} V - \tfrac{1}{2}  \eps V 
    +\tfrac{1}{\eps}  \tilde{w}\T (L\otimes \tilde{E})\T \left[ \alpha
      P_{\alpha} (\tilde{H}^N \otimes I) \right](L\otimes \tilde{E})
    \tilde{w} \nonumber \\
    & \leq \leq \dot{\beta} V - \tfrac{1}{2}  \eps V 
    +\tfrac{1}{\eps}  \tilde{w}\T (L\otimes \tilde{E})\T \left[  (\tilde{H}^N \otimes \alpha_{\max}
      P_{\alpha_{\max}}) \right](L\otimes \tilde{E})
    \tilde{w} \label{dotVbo},
  \end{align}
  where we define
  \[
    \alpha_{\max} = \max_{i} \alpha_i,
  \]
  and used that $\alpha P_{\alpha}$ is increasing in $\alpha$. Assume
  at least one of the $\alpha_i$ is unbounded. In that case
  $\alpha_{\max}\rightarrow\infty$. In that case there exists
  $\bar{w}$ such that
  \[
    \frac{1}{\eps}  \tilde{w}\T (L\otimes \tilde{E})\T \left[  (\tilde{H}^N \otimes \alpha_{\max}
      P_{\alpha_{\max}}) \right](L\otimes \tilde{E})
    \tilde{w} =\alpha_{\max}  \beta_m^2 M_w,
  \]
  for some $M_w>0$ since Lemma
  \ref{asympric} implies that $\| P_{\alpha_{\max}} \| <M \beta_m^2$
  for some $M>0$ with $\beta_i$ as defined in \eqref{betam} for
  $\alpha=\alpha_{\max}$.  We obtain:
  \[
    \dot{V} \leq
    \dot{\beta} V - \tfrac{1}{2}  \eps
    V  + \alpha_{\max}  \beta_m^2 \bar{w}.
  \]
  We find
  \[
    \tfrac{\textrm{d}}{\textrm{d}t} \bar{V} \leq
    -\tfrac{1}{2} \eps \bar{V} +  e^{M_\beta} M_w,
  \]
  for
  \[
    \bar{V}=\tfrac{e^{\beta}}{\alpha_{\max}\beta_m^2} V.
  \]
  This clearly implies that $\bar{V}$ is bounded and hence 
  \[
    \frac{V}{\alpha_{\max}\beta_m^2}
  \]
  is bounded. Consider the dynamics for $\zeta_i$ which can be written
  as:
  \[
    \begin{system*}{ccl}
      \dot{\zeta}_i &=& A\zeta_i + B \check{u}_i + \check{w}_i\\
      \check{y}_i &=& C\zeta_i
    \end{system*}
  \]
  for appropriately chosen $\check{u}_i$, $\check{w}_i$ and
  $\check{y}_i$ with $\check{w}_i$ bounded.
    Using \eqref{bnound}, we find that for an agent for
  which $\alpha_i=\alpha_{\max}$ we have for any $\delta$ that:
  \[
    \int_{t_0}^{t_0+1}  \alpha^{-1} \check{u}_i\T \check{u}_i\, \textrm{d}t
    <\delta\qquad \text{and}\qquad
    \int_{t_0}^{t_0+1}  \check{y}_i\T \check{y}_i \,\textrm{d}t
    <\delta
  \]
  provided $t_0$ is large enough. Using the structure obtained in the
  appendix, in particular Lemma \ref{asympric} and \ref{tty5}, we obtain that for
  any $\eps>0$ we have that
  \[
    \int_{t_0}^{t_0+1} \frac{V}{\alpha_{\max}\beta_m^2} \textrm{d}t <
    \eps
  \]
  provided $t_0$ is large enough. However, because the derivative of 
  \begin{equation}\label{V/ab}
    \frac{V}{\alpha_{\max}\beta_m^2}
  \end{equation}
  is upper bounded this implies that \eqref{V/ab} converges to zero as
  $t\rightarrow \infty$. But this implies that $C\zeta_i$ converges to
  zero which implies that $\dot{\alpha}_{\max}=0$ for $t$ large. This
  contradicts our assumption that $\alpha_{\max}$ converges to infinity.

  On the other hand if all $\alpha_i$ are bounded then we find that
  \[
    \tilde{\zeta}_i(t)\T C\T C \tilde{\zeta}_i(t) = p_i(t)+q_i(t)
  \]
  with $p_i(t)\leq d$ while $q_i(t)\in L_2$. Since $\tilde{\zeta}_i$ is
  bounded we find that $\dot{q}_i(t)$ is bounded which yields together
  with $q_i(t)\in L_2$ that $q_i(t)\rightarrow 0$ as $t\rightarrow
  \infty$. But this implies that
  \[
    \| C \tilde{\zeta}_i \| \leq \tfrac{\delta}{2}
  \]
  given \eqref{dchoice2}. Together with \eqref{probdef2}, obtained
  before, this completes the proof.
\end{proof}

\section{\textbf{Numerical examples}}

In this section, we will show the feasibility of the proposed
noncollaborative and collaborative controllers, respectively. We study
the effectiveness of our proposed protocols as they are applied to
systems with different sizes, different communication graphs,
different noise
patterns, and different $\delta$ values.

In all examples of the paper, the weight of edges of the communication
graphs is considered to be equal~$1$.

\subsection{Noncollaborative Protocols}\label{xmpl1}
We consider agent models
\begin{equation}\label{agent-g-noise-Example}
  \begin{system*}{cl}
    \dot{x}_i&=\begin{pmatrix}0&1&1&0\\-1&0&1&0\\0&0&0&1\\0&0&0&-2\end{pmatrix}x_i+\begin{pmatrix}
      0\\1\\0\\1
    \end{pmatrix}u_i+\begin{pmatrix}
      0\\1\\0\\1
    \end{pmatrix}w_i,\\
    y_i&=\begin{pmatrix}1&0&0&0\\0&1&0&0\end{pmatrix}x_i,
  \end{system*} 
\end{equation}
for $i=1,\hdots, N$, satisfying Assumption \ref{ass}. With the choice of
\[
  S=\begin{pmatrix}
    1&0&0&0\\
    0&1&0&-1\\
    0&0&1&0\\
    0&1&0&0\\
  \end{pmatrix}, \text{ and } T=I,
\]

we get the matrices $\tilde{A}, \tilde{B}, \tilde{C}$, and $\tilde{E}$ as:
\[
  \tilde{A}=\begin{pmatrix}
    0&0&1&1\\
    -1&-2&1&2\\
    0&-1&0&1\\
    -1&0&1&0\\
  \end{pmatrix}, \text{ }   
  \tilde{B}=\begin{pmatrix}
    0\\0\\0\\1\end{pmatrix}, \text{ }
      \tilde{C}=\begin{pmatrix}
    1&0&0&0\\
    0&0&0&1
  \end{pmatrix}, \text{and  }   
   \tilde{E}=\begin{pmatrix}
    0\\0\\0\\1\end{pmatrix}.
\]
 Then, we choose $H_1$ such that $A_{11}+H_1C_1$ is asymptotically stable:
  \[
  H_1=\begin{pmatrix}
    -1\\0\\-1
  \end{pmatrix}.
\]

Solving \eqref{eq-Riccati}, we obtain $P$ as
\[
  P=\begin{pmatrix}
    3.0498&   -0.7942&    2.0169&    0.8943\\
   -0.7942&    0.9875&   -1.7544&   -0.7475\\
    2.0169&   -1.7544&    4.8899&    2.5890\\
    0.8943&   -0.7475&    2.5890&    2.2308\\   
  \end{pmatrix}.
\]

With the design parameters above, we obtain the adaptive noncollaborative protocol \eqref{protocol} as following
\begin{equation}\label{protocol-example}
    \begin{system*}{ccl}
      \dot{\hat{\xi}}_{1i} &=& \begin{pmatrix}
          0& 0& 1\\ -1&-2&1\\ 0 &-1& 0
      \end{pmatrix} \hat{\xi}_{1i} + \begin{pmatrix}
          1\\ 2\\ 1
      \end{pmatrix} \zeta_{2i}
      + \begin{pmatrix}
    -1\\0\\-1
  \end{pmatrix}(\begin{pmatrix}
    1&0&0
  \end{pmatrix}\hat{\xi}_{1i} - \zeta_{1i}) \\
      \hat{\xi}_i &=& \begin{pmatrix} \hat{\xi}_{1i} \\ \zeta_{2i} \end{pmatrix} \\
      \dot{\rho}_i &=& \begin{cases} \hat{\xi}_i\T
        \begin{pmatrix}
                0.7998 &  -0.6685  &  2.3154 &   1.9950\\
   -0.6685  &  0.5588  & -1.9353  & -1.6676\\
    2.3154  & -1.9353  &  6.7030  &  5.7756\\
    1.9950  & -1.6676   & 5.7756  &  4.9766
        \end{pmatrix}\hat{\xi}_i & 
        \text{ if } \hat{\xi}_i\T P \hat{\xi}_i \geq d, \\
        0 & \text{ if } \hat{\xi}_i\T P \hat{\xi}_i < d,
      \end{cases} \\
    u_i &=& -\rho_i \begin{pmatrix}
        0.8943 &  -0.7475  &  2.5890  &  2.2308
    \end{pmatrix} \hat{\xi}_i.
  \end{system*}
\end{equation}
where, considering $T=I$ and \eqref{STT}, $\zeta_{2i}$ is equal to the
second element of $\zeta_{i}$ in \eqref{zetanoise}.

\subsubsection{\textbf{Scalability -- independence to the size of communication networks}}\label{Size}

We consider MAS with agent models
\eqref{agent-g-noise-Example} and disturbances
\begin{equation}\label{w_i}
  w_i(t)=\sin(0.1it+0.01t^2), \quad i=1,\hdots, N.
\end{equation}
To illustrate the scalability of proposed noncollaborative protocols,
we study three MAS with $5$, $25$, and $121$ agents communicating over
directed Vicsek fractal graphs shown in
Figure~\ref{vicsek-fractal-graph}. In this example, we consider
$d=0.5$. 

The simulation results presented in
Figures~\ref{NonCol_Vicsek_N5}-\ref{NonCol_Vicsek_N121} show the
scalability of our one-shot-designed noncollaborative protocol. In other words,
we achieve scale-free $\delta$-level-coherent
output synchronization regardless of the size of the network.

\begin{figure}[htp!]
  \centering
    \includegraphics[width=0.75\textwidth]{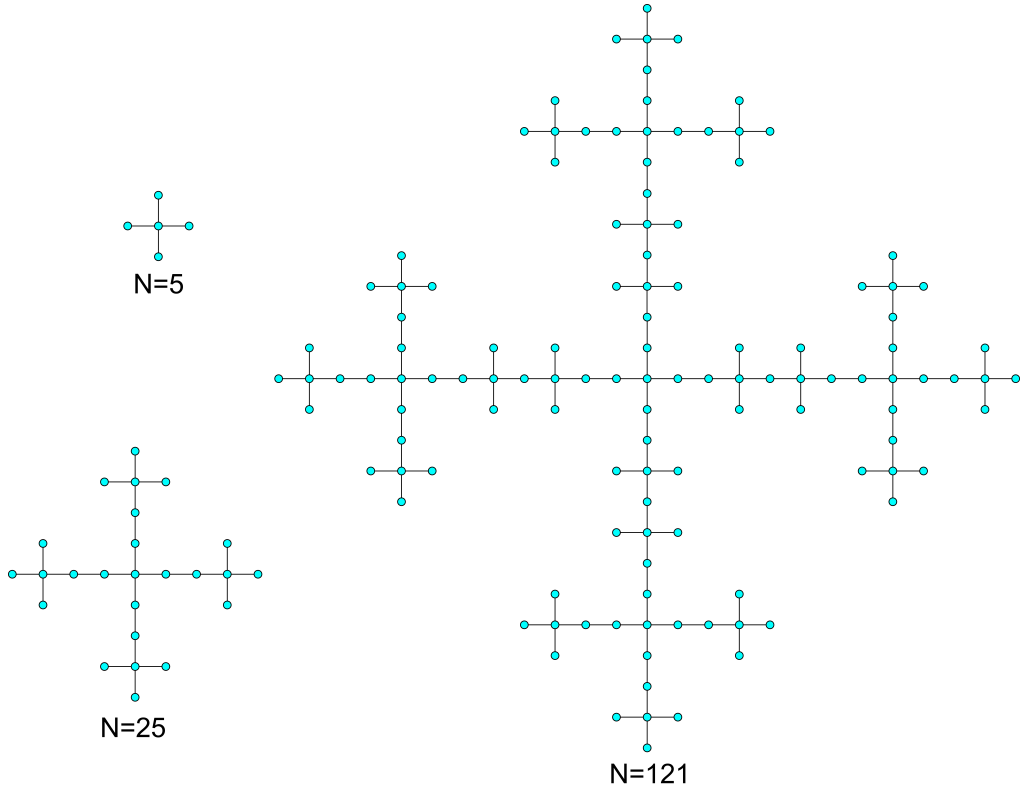}
  \caption{Vicsek fractal graphs}\label{vicsek-fractal-graph}
\end{figure}

\begin{figure}[htp!]
  \centerline{\includegraphics[width=0.1\textwidth]{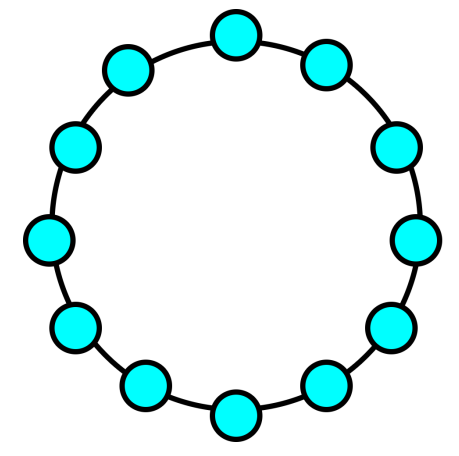}}
  \caption{Circulant graph}\label{Circulant_Graph_UD}
\end{figure}

\begin{figure}[htp!]
 \centering{\includegraphics[width=0.4\textwidth]{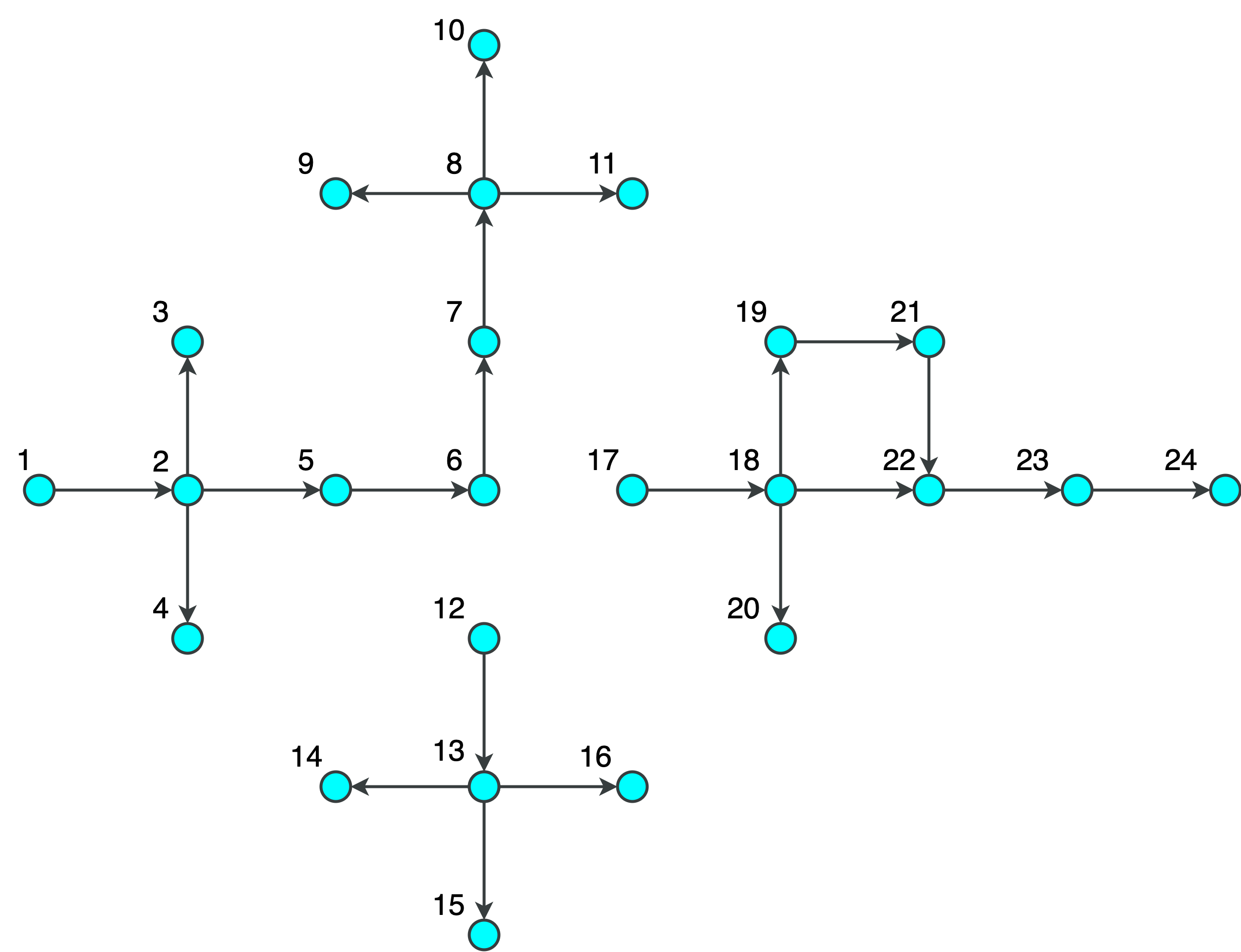}}
  \caption{Disconnected directed graph}\label{dis-directed-net}
\end{figure}

\begin{figure}[p!]
  \centering \includegraphics[width=0.75\textwidth]{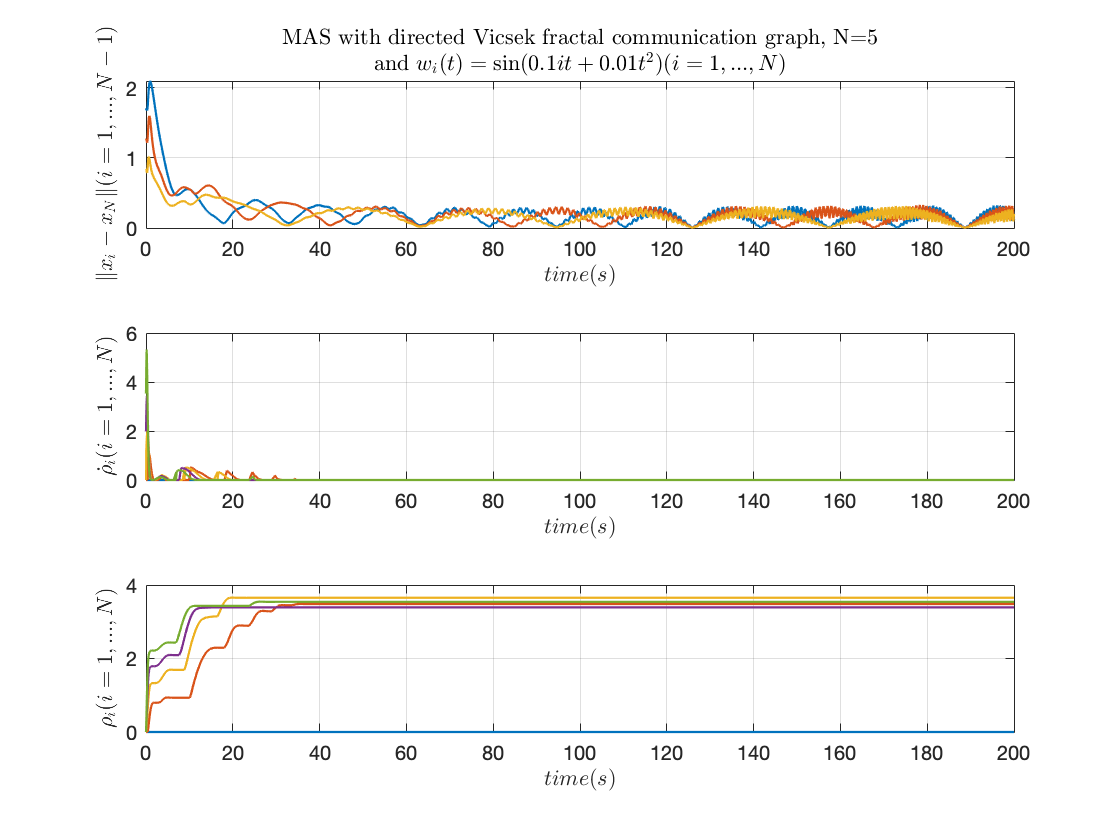}
  \vspace{-3mm}
  \caption[]{{Scale-free $\delta$-level-coherent output
      synchronization of MAS \eqref{agent-g-noise-Example} with $N=5$,
      communicating over directed Vicsek fractal communication graphs
      in the presence of disturbances \eqref{w_i}, via
      noncollaborative protocol \eqref{protocol-example} with
      $d=0.5$}} \label{NonCol_Vicsek_N5}
\end{figure}

\begin{figure}[p!]
	\centering
	\includegraphics[width=0.75\textwidth]{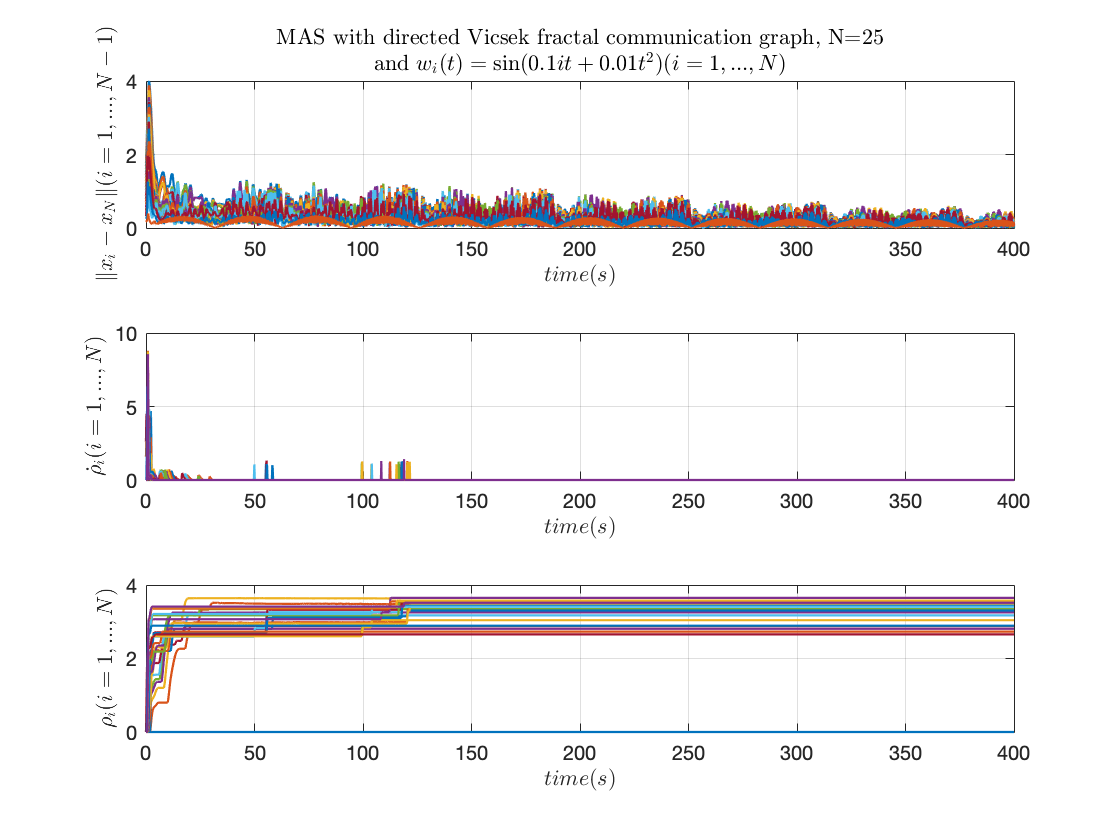}
  \vspace{-3mm}
  \caption[]{{Scale-free $\delta$-level-coherent output synchronization of MAS \eqref{agent-g-noise-Example}
      with $N=25$, communicating over directed Vicsek fractal communication graphs in
      the presence of disturbances \eqref{w_i}, via noncollaborative protocol
      \eqref{protocol-example} with $d=0.5$}} \label{NonCol_Vicsek_N25}
\end{figure}
\begin{figure}[p!]
  \centering \includegraphics[width=0.75\textwidth]{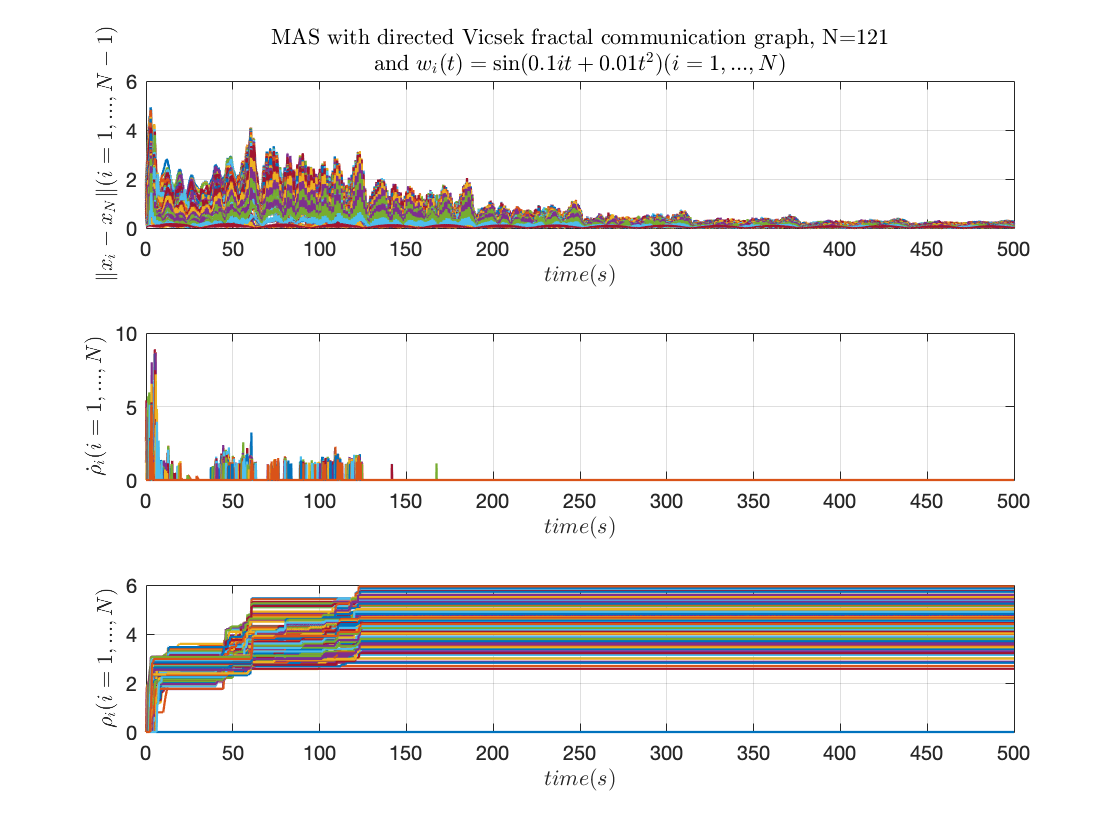}
  \vspace{-3mm}
  \caption[]{{Scale-free $\delta$-level-coherent output
      synchronization of MAS \eqref{agent-g-noise-Example} with
      $N=121$, communicating over directed Vicsek fractal
      communication graphs in the presence of disturbances
      \eqref{w_i}, via noncollaborative protocol
      \eqref{protocol-example} with
      $d=0.5$}} \label{NonCol_Vicsek_N121}
\end{figure}

\begin{figure}[p!]
  \centering
  \includegraphics[width=0.75\textwidth]{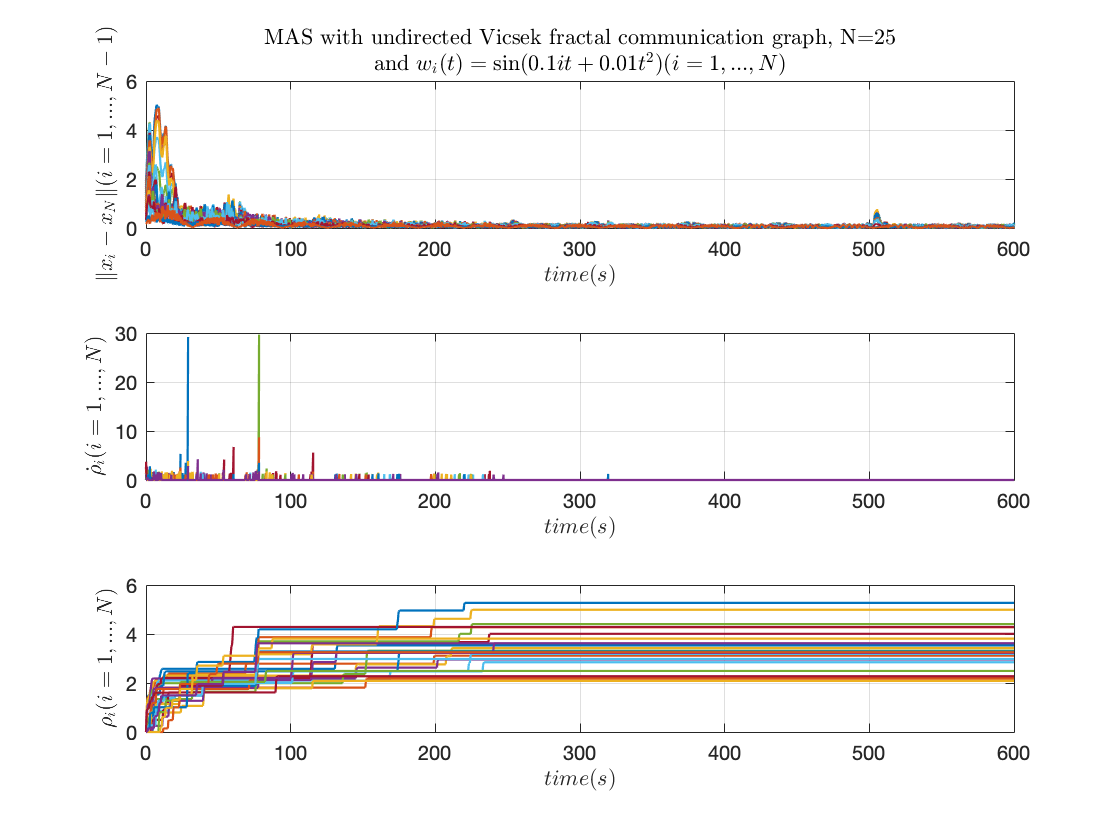}
  \vspace{-3mm}
  \caption[]{{Scale-free $\delta$-level-coherent output
      synchronization of MAS \eqref{agent-g-noise-Example} with
      $N=25$, communicating over undirected Vicsek fractal
      communication graphs in the presence of disturbances
      \eqref{w_i}, via noncollaborative protocol
      \eqref{protocol-example} with
      $d=0.5$}} \label{NonCol_Vicsek_UN_N25}
\end{figure}

\begin{figure}[th!]
  \centering \includegraphics[width=0.75\textwidth]{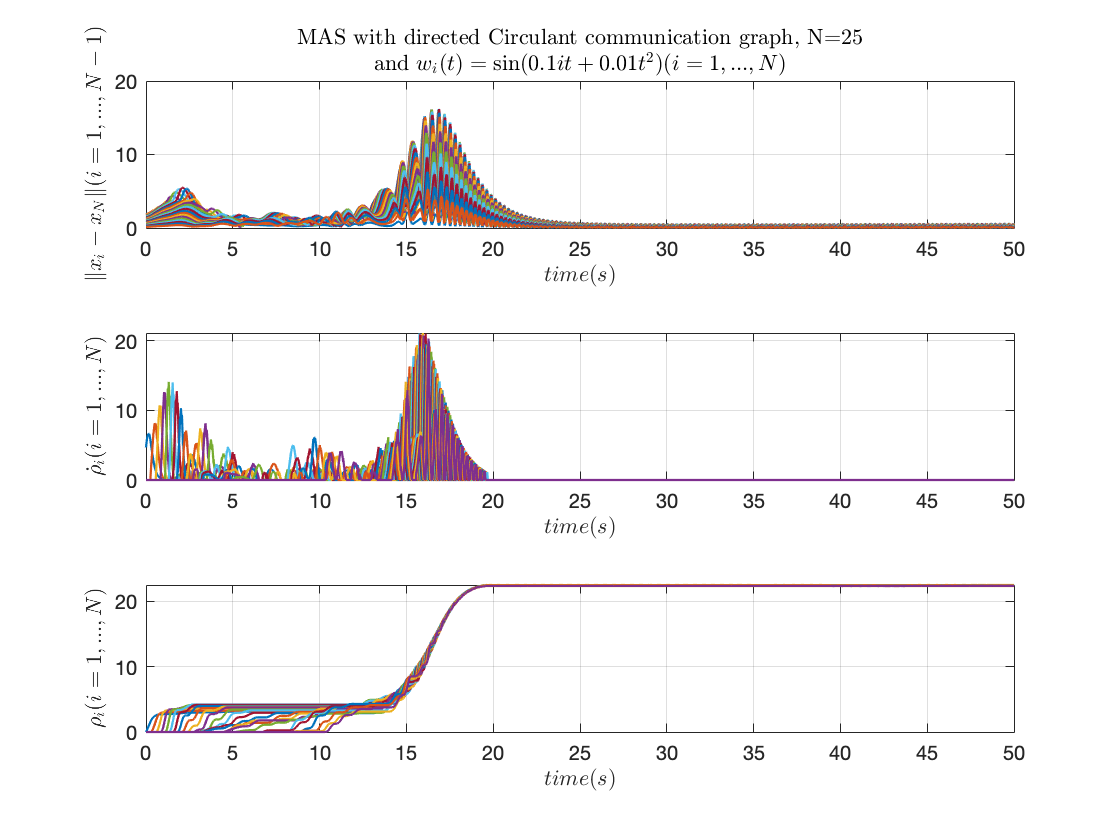}
  \vspace{-3mm}
  \caption[]{{Scale-free $\delta$-level-coherent output
      synchronization of MAS \eqref{agent-g-noise-Example} with
      $N=25$, communicating over directed Circulant communication
      graphs in the presence of disturbances \eqref{w_i}, via
      noncollaborative protocol \eqref{protocol-example} with
      $d=0.5$}} \label{NonCol_Cir_N25}
\end{figure}

\begin{figure}[p!]
  \centering \includegraphics[width=0.75\textwidth]{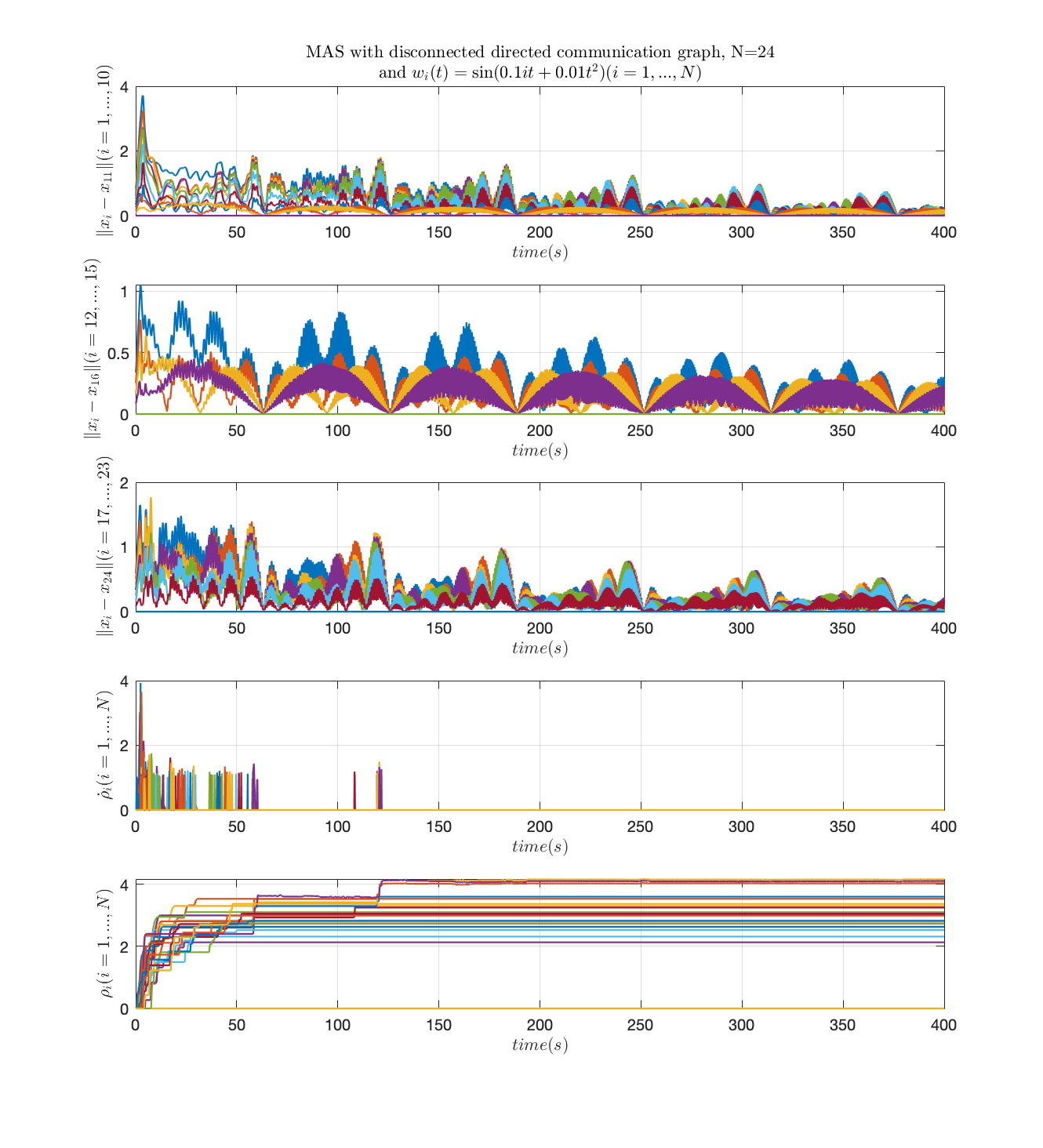}
  \vspace{-3mm}
  \caption[]{{Scale-free $\delta$-level-coherent output
      synchronization of MAS \eqref{agent-g-noise-Example} with
      $N=24$, communicating over disconnected directed communication
      graphs in Figure \ref{dis-directed-net} and in the presence of disturbances \eqref{w_i}, via
      noncollaborative protocol \eqref{protocol-example} with
      $d=0.5$}} \label{NonCol_dis_N24}
\end{figure}

% \begin{figure}[p!]
%   \centering \includegraphics[width=0.75\textwidth]{NonCol_dis_UN_N24.png}
%   \vspace{-3mm}
%   \caption[]{{$\delta$-level coherent state synchronization of MAS
%       with $N=24$ and undirected disconnected communication graphs in the
%       presence of disturbances via nonlinear protocol with
%       $d=0.5$}} \label{NonCol_dis_UN_N24}
% \end{figure}

\subsubsection{Independence to the directedness of the communication
  network}

In this example, we show the feasibility of our noncollaborative
protocols when the agents are communicating over undirected Vicsek
fractal graphs. We consider MAS \eqref{agent-g-noise-Example} with
$N=25$ agents where the agents are subject to noise \eqref{w_i}. We
consider $d=0.5$ in our protocol.  The simulation results presented in
Figure~\ref{NonCol_Vicsek_UN_N25} show that the one-shot designed
protocol \eqref{protocol-example}, achieves scale-free $\delta-$level
coherent output synchronization regardless of the directedness of the
communication graphs.

% \begin{table}[t!]
%   \caption{Algebraic connectivity of undirected Vicsek fractal graphs}
%   \centering
%   \begin{tabularx}{.3\columnwidth}{X X l}
%     \hline
%     $N$& g&$\re\{\lambda_2\}$\\
%     \hline
%     5   &  1&1\\
%     25   &  2&0.0692\\
%     121   &3&0.0053\\
%     \hline
%   \end{tabularx}
%   \label{table_vicsek}
% \end{table}
\subsubsection{\textbf{Effectiveness with different types of
    communication graphs}}\label{Graph}

Next, we illustrate that the designed noncollaborative protocol
achieves $\delta$-level coherent output synchronization for different
types of communication graphs.  We consider MAS
\eqref{agent-g-noise-Example} with $N=25$ where the agents are subject
to noise \eqref{w_i}. In this example, the agents are communicating
through directed Circulant graphs shown in
Figure~\ref{Circulant_Graph_UD}.  The effectiveness of our designed
noncollaborative protocol \eqref{protocol-example} for MAS with
different type of communication graphs, \emph{i.e.} Circulant graphs, is shown in Figure
\ref{NonCol_Cir_N25}.

\subsubsection{\textbf{Effectiveness with disconnected graphs}} \label{connectivity}

Now, we consider MAS
\eqref{agent-g-noise-Example} with $N=24$ in the presence of noise
\eqref{w_i} where the agents are communicating through disconnected
graph shown in Figure \ref{dis-directed-net} which consists of three
bi-components. We consider $d=0.5$ in our protocols.

The simulation results are presented in Figure
\ref{NonCol_dis_N24}. The simulation results show that our designed
protocol is agnostic to the communication network and achieves
$\delta$-level coherent output synchronization for the network
bi-components. We also show the convergence of $\rho_i(t)$ to
constants.
% The simulation results show that our proposed protocols effectively
% achieve $\delta$-level state synchronization where the agents
% communicate through disconnected directed graphs.

\subsubsection{\textbf{Robustness to different noise patterns}}\label{Noise}

In this example, we illustrate the robustness of our protocols to
different noise patterns. We consider MAS \eqref{agent-g-noise-Example} with $N=25$, communicating through a directed Vicsek fractal graphs as in section \ref{Size}. In this example, we assume that agents are subject to
\begin{equation}\label{w_i2}
  w_i(t)=it- round(it), \quad i=1,\ldots, N.
\end{equation}
Figure \ref{NonCol_Noise_Vicsek_N25} shows that our designed noncollaborative protocol is
robust even in the presence of noises with different patterns.

\begin{figure}[th!]
  \centering
  \includegraphics[width=0.75\textwidth]{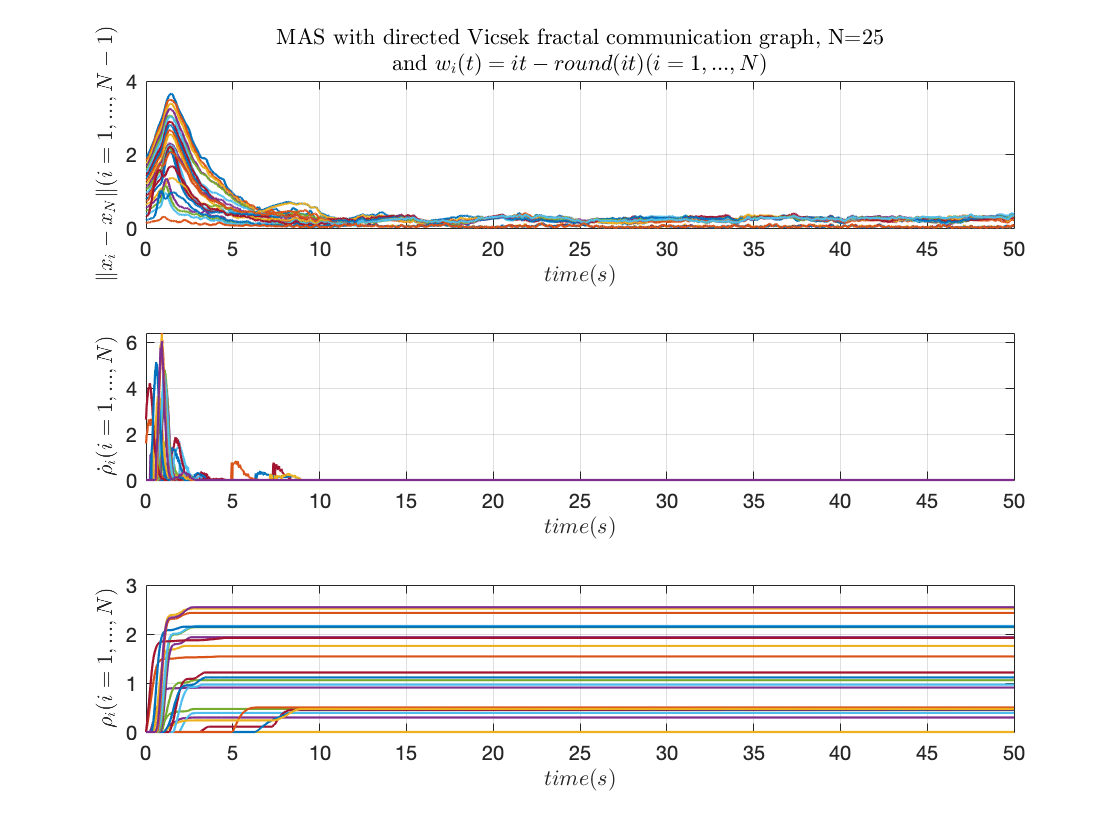}
  \vspace{-3mm}
  \caption[]{{Scale-free $\delta$-level-coherent output
      synchronization of MAS \eqref{agent-g-noise-Example} with
      $N=25$, communicating over directed Vicsek fractal communication
      graphs in the presence of disturbances \eqref{w_i2}, via
      noncollaborative protocol \eqref{protocol-example} with
      $d=0.5$}} \label{NonCol_Noise_Vicsek_N25}
\end{figure}
\begin{figure}[th!]
  \centering
  \includegraphics[width=0.75\textwidth]{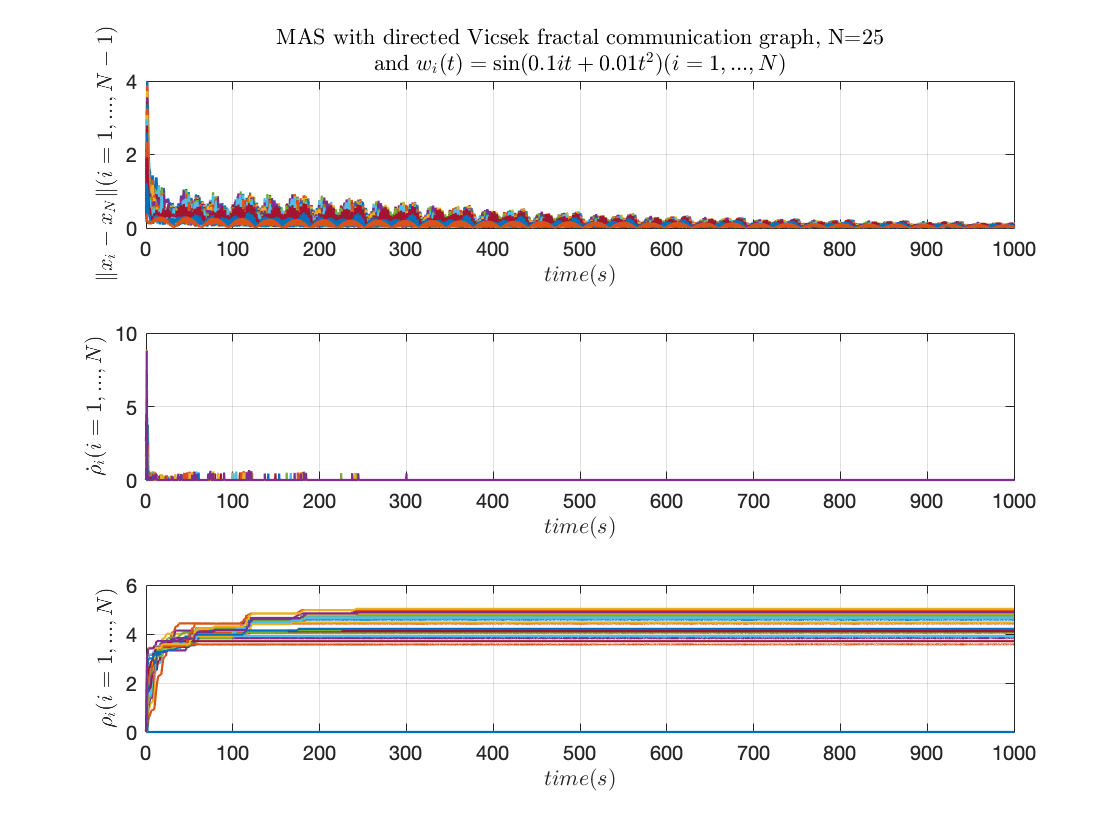}
  \vspace{-3mm}
  \caption[]{{Scale-free $\delta$-level-coherent output
      synchronization of MAS \eqref{agent-g-noise-Example} with
      $N=25$, communicating over directed Vicsek fractal communication
      graphs in the presence of disturbances \eqref{w_i}, via
      noncollaborative protocol \eqref{protocol-example} with
      $d=0.2$}} \label{NonCol_Vicsek_d02_N25}
\end{figure}

\subsubsection{\textbf{Effectiveness for different values of \texorpdfstring{$\delta$}{delta}}}

Finally, in this section, we show the effectiveness of the proposed
protocol for different values of $\delta$ (or, equivalently, different
values of $d$).  Similarly to the previous examples, we consider the MAS \eqref{agent-g-noise-Example} with $N=25$, communicating through directed
Vicsek fractal graphs as in section \ref{Size}, in the presence of noise
\eqref{w_i} where in this example, we choose $d=0.2$. The simulation
results presented in Figure \ref{NonCol_Vicsek_d02_N25} show the
effectiveness of our noncollaborative protocol independent of the value of
$d$.

\subsection{Collaborative Protocols}\label{xmpl2}

In this section, we consider agent models as
\begin{equation}\label{agent-g-noise-Example2}
  \begin{system*}{cl}
    \dot{x}_i&=\begin{pmatrix}-1&1&0\\0&-2&1\\0&0&-3
    \end{pmatrix}x_i+\begin{pmatrix}
      1\\1\\1
    \end{pmatrix}u_i+\begin{pmatrix}
      1\\0\\1
    \end{pmatrix}w_i,\\
    y_i&=\begin{pmatrix}1&0&1\end{pmatrix}x_i
  \end{system*} 
\end{equation}
for $i=1,\hdots, N$, satisfying Assumption \ref{asscol}.

With the choice of $\eta=1$, we obtain 
\[
Q=\begin{pmatrix}
    0.4117  &  0.1136  &  0.0086\\
    0.1136  &  0.2997  &  0.0553\\
    0.0086  &  0.0553  &  0.1792
\end{pmatrix},
\]
as the solution of \eqref{are0}. We obtain the collaborative adaptive protocol as follows.
\begin{equation}\label{protocol-example2}
    \begin{system*}{ccl}
      \dot{\hat{x}}_{i} &=& \begin{pmatrix}-1&1&0\\0&-2&1\\0&0&-3
    \end{pmatrix}\hat{x}_{i} -\alpha_i\begin{pmatrix}
     1   &  1  &   1\\
     1   &  1  &   1\\
     1   &  1  &   1
    \end{pmatrix} P_{\alpha_i}(\hat{x}_i+\tilde{\zeta}_i)-\rho_i\begin{pmatrix}
    0.4203\\
    0.1689\\
    0.1879
    \end{pmatrix}(\begin{pmatrix}
             1   &  0   &  1
    \end{pmatrix}\tilde{\zeta}_i+\zeta_i)\\
      \dot{\rho}_i &=& \begin{cases}
      \tilde{e}_i\T \tilde{e}_i & \text{ if } \tilde{e}_i\T \tilde{e}_i \geq d \\ 
      0 & \text{ otherwise}
    \end{cases}\\
        \dot{\alpha}_i &=& \begin{cases}
      1 & \text{ if } \tilde{\zeta}_i\T C\T C \tilde{\zeta}_i \geq 1 \\
      \tilde{\zeta}_i\T C\T C \tilde{\zeta}_i & \text{ if } 1 >
      \tilde{\zeta}_i\T C\T C  \tilde{\zeta}_i \geq d \\  
      0 & \text{ otherwise}
    \end{cases}\\
    u_i &=& -\alpha_i\begin{pmatrix}
     1   &  1  &   1
    \end{pmatrix} P_{\alpha_i}(\hat{x}_i+\tilde{\zeta}_i).
  \end{system*}
\end{equation}
where $
  C\T C=\begin{pmatrix}
     1   &  0  &   1\\
     0   &  0  &   0\\
     1   &  0  &   1
    \end{pmatrix}$ and $\tilde{e}_i
    = \begin{pmatrix}1&0&1\end{pmatrix}\tilde{\zeta}_i-\zeta_i$ and
    $P_{\alpha_i} \text{ for } i=1,..., N$, are the solutions of
    \eqref{are}. Note that $\zeta_{i}$ and $\tilde{\zeta}_i$ are the
    information exchange between agents as defined in
    \eqref{zetanoise} and \eqref{zetanoiseu}, respectively. 
    
    In this example similar to example \ref{xmpl1}, in sections
    \ref{Size2}- \ref{DValue2}, we will show the effectiveness of our
    one-shot-designed collaborative protocols independent of the
    number of agents, communication network, noise patterns, and value
    of $\delta$. 
    
\subsubsection{\textbf{Scalability -- independence to the size of communication networks}}\label{Size2}

In this example, we consider MAS with agent models
\eqref{agent-g-noise-Example2} and disturbances \eqref{w_i}. To
illustrate the scalability of the proposed collaborative protocols, we
study three MAS with $5$, $25$, and $121$ agents communicating through
the directed Vicsek fractal graphs shown in
Figure~\ref{vicsek-fractal-graph}. In this example, we consider
$d=0.5$. The simulation results presented in
Figures~\ref{Col_Vicsek_N5}-\ref{Col_Vicsek_N121} show the scalability
of our one-shot-designed collaborative protocols \eqref{protocol-example2} for networks with
$N=5$, $N=25$ and $N=121$ respectively. In other words, scalable
adaptive collaborative protocols achieve $\delta$-level coherent
output synchronization regardless of the size of the network. 

\begin{figure}[p!]
  \centering \includegraphics[width=0.75\textwidth]{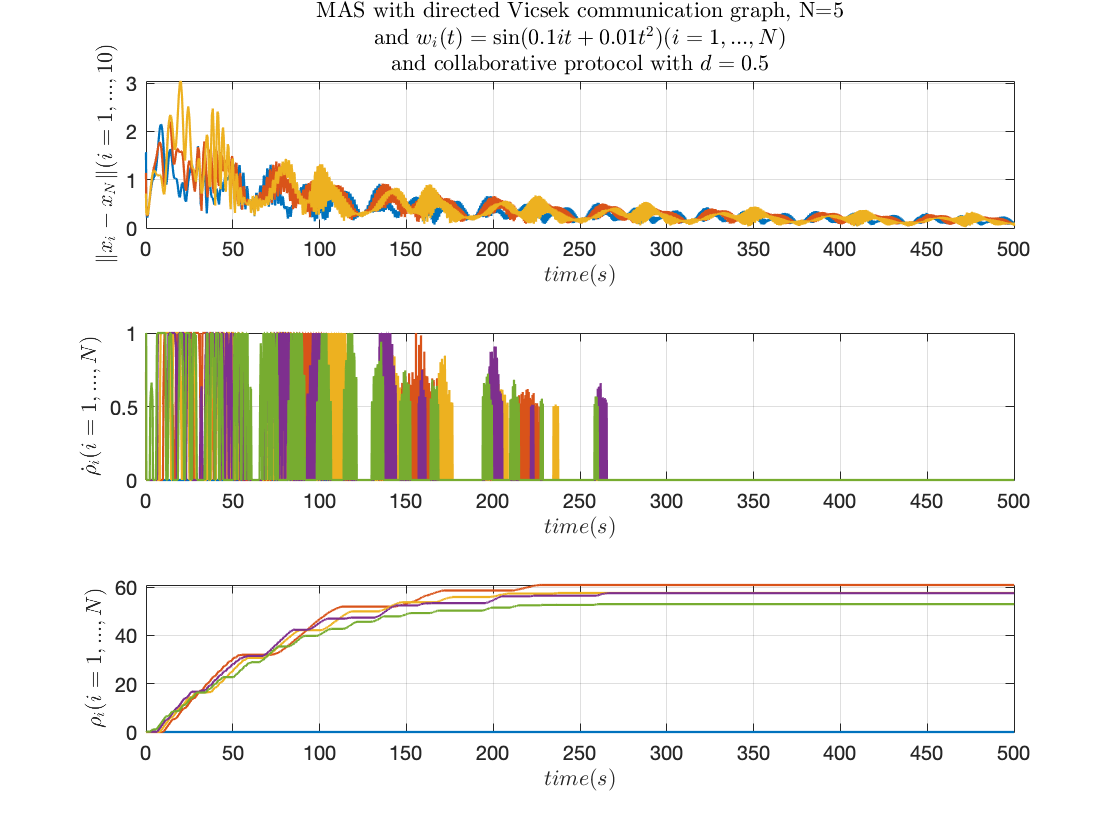}
  \vspace{-3mm}
  \caption[]{{Scale-free $\delta$-level-coherent output
      synchronization of MAS \eqref{agent-g-noise-Example2} with
      $N=5$, communicating over directed Vicsek fractal communication
      graphs in the presence of disturbances \eqref{w_i}, via
      collaborative protocol \eqref{protocol-example2} with
      $d=0.5$}} \label{Col_Vicsek_N5}
\end{figure}

\begin{figure}[p!]
	\centering
	\includegraphics[width=0.75\textwidth]{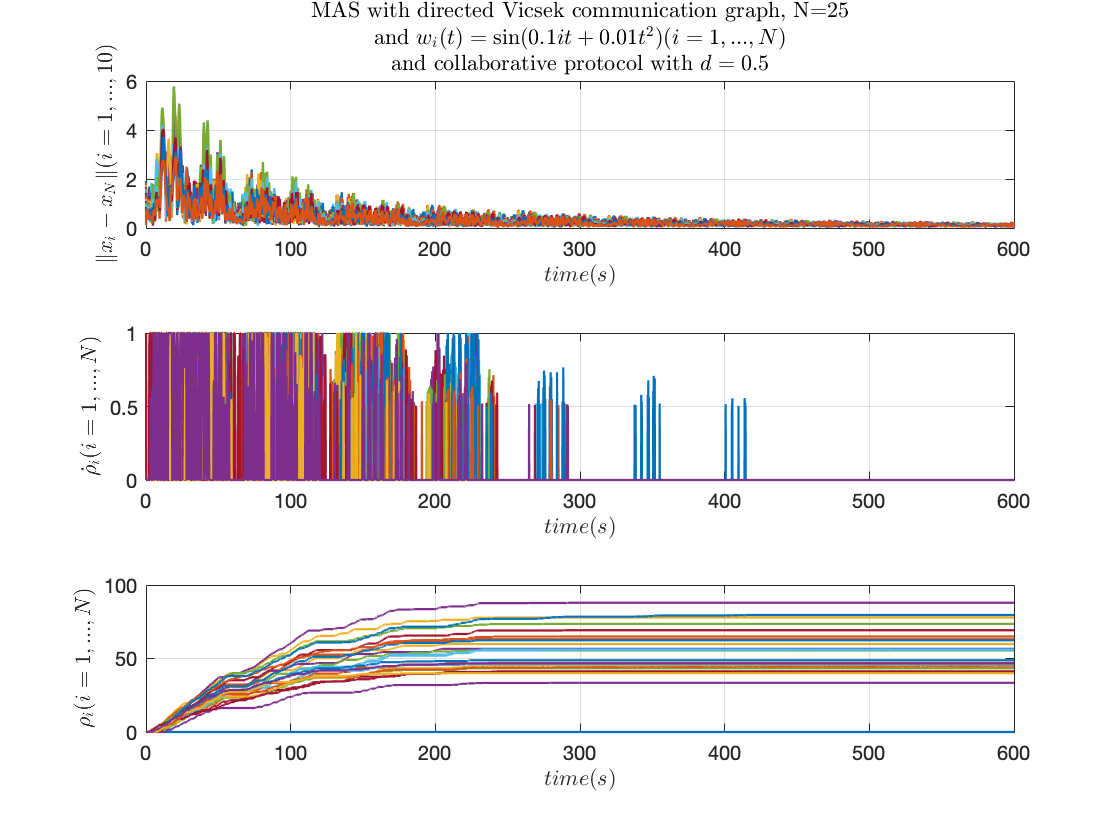}
  \vspace{-3mm}
  \caption[]{{Scale-free $\delta$-level-coherent output
      synchronization of MAS \eqref{agent-g-noise-Example2} with
      $N=25$, communicating over directed Vicsek fractal communication
      graphs in the presence of disturbances \eqref{w_i}, via
      collaborative protocol \eqref{protocol-example2} with
      $d=0.5$}} \label{Col_Vicsek_N25}
\end{figure}
\begin{figure}[p!]
  \centering \includegraphics[width=0.75\textwidth]{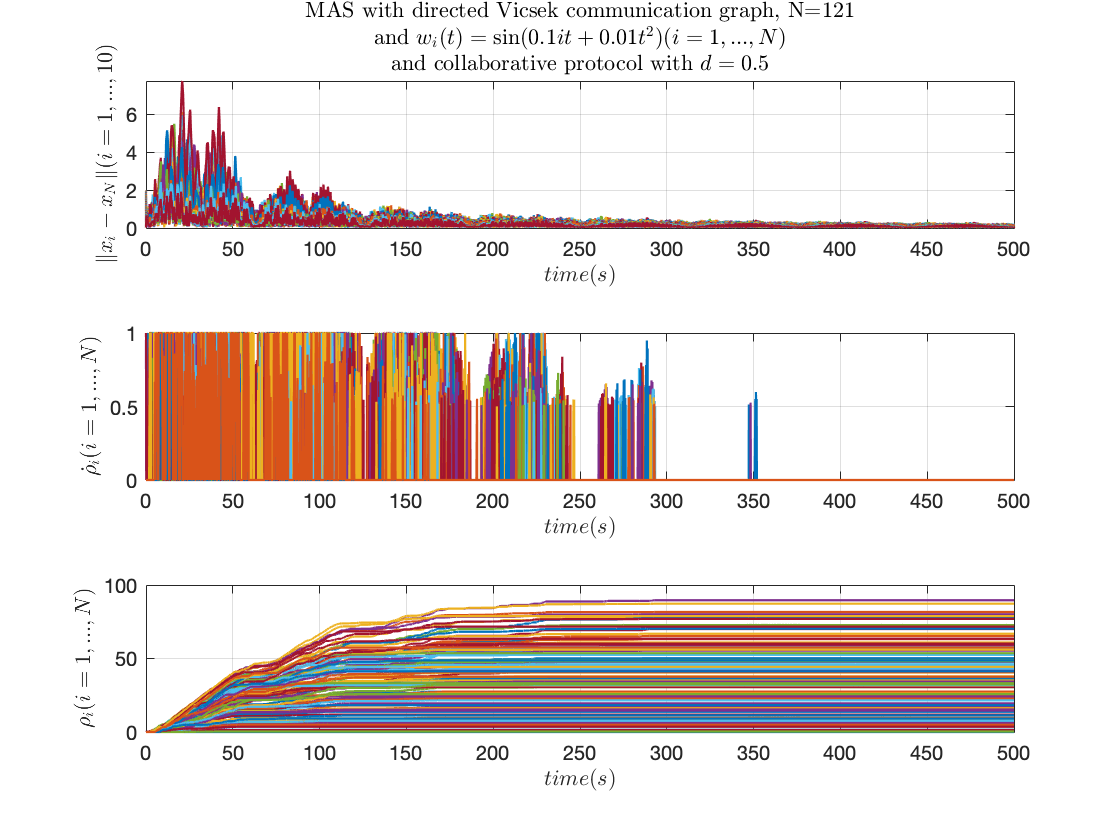}
  \vspace{-3mm}
  \caption[]{{Scale-free $\delta$-level-coherent output
      synchronization of MAS \eqref{agent-g-noise-Example2} with
      $N=121$, communicating over directed Vicsek fractal
      communication graphs in the presence of disturbances
      \eqref{w_i}, via collaborative protocol
      \eqref{protocol-example2} with $d=0.5$}} \label{Col_Vicsek_N121}
\end{figure}

\begin{figure}[p!]
  \centering
  \includegraphics[width=0.75\textwidth]{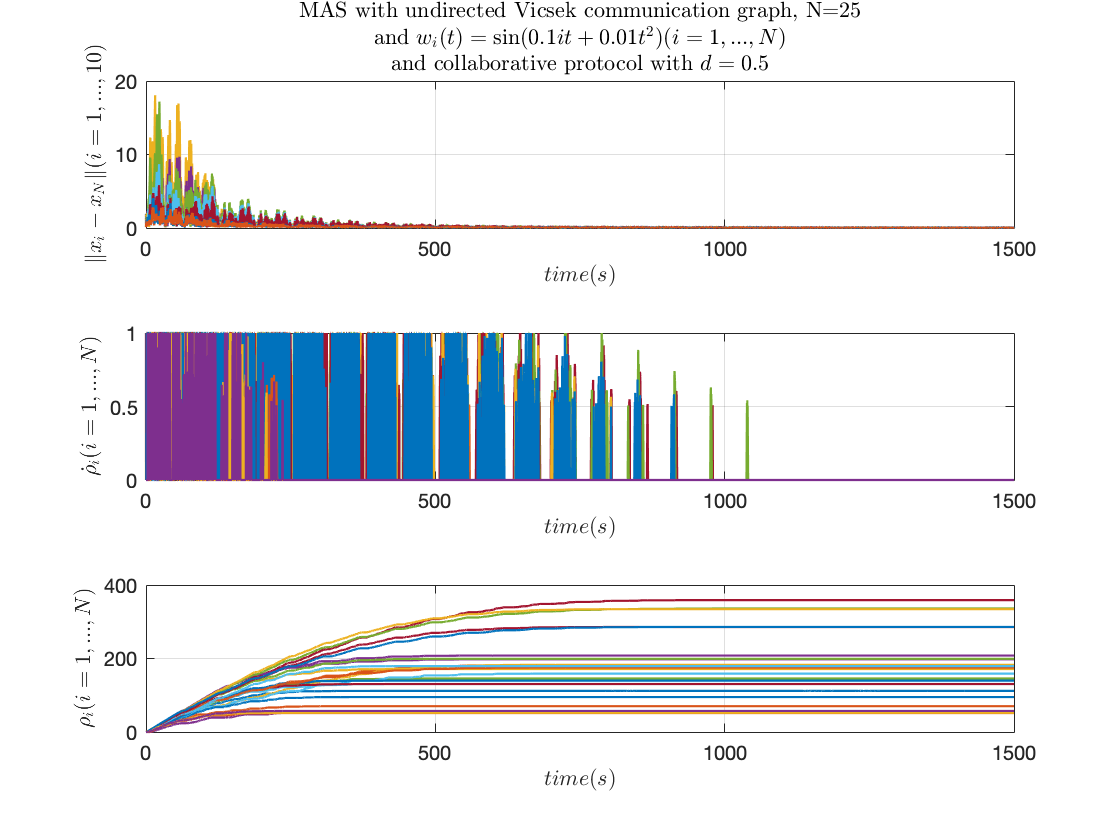}
  \vspace{-3mm}
  \caption[]{{Scale-free $\delta$-level-coherent output
      synchronization of MAS \eqref{agent-g-noise-Example2} with
      $N=25$, communicating over undirected Vicsek fractal
      communication graphs in the presence of disturbances
      \eqref{w_i}, via collaborative protocol
      \eqref{protocol-example2} with
      $d=0.5$}} \label{Col_Vicsek_UN_N25}
\end{figure}

\begin{figure}[th!]
  \centering \includegraphics[width=0.75\textwidth]{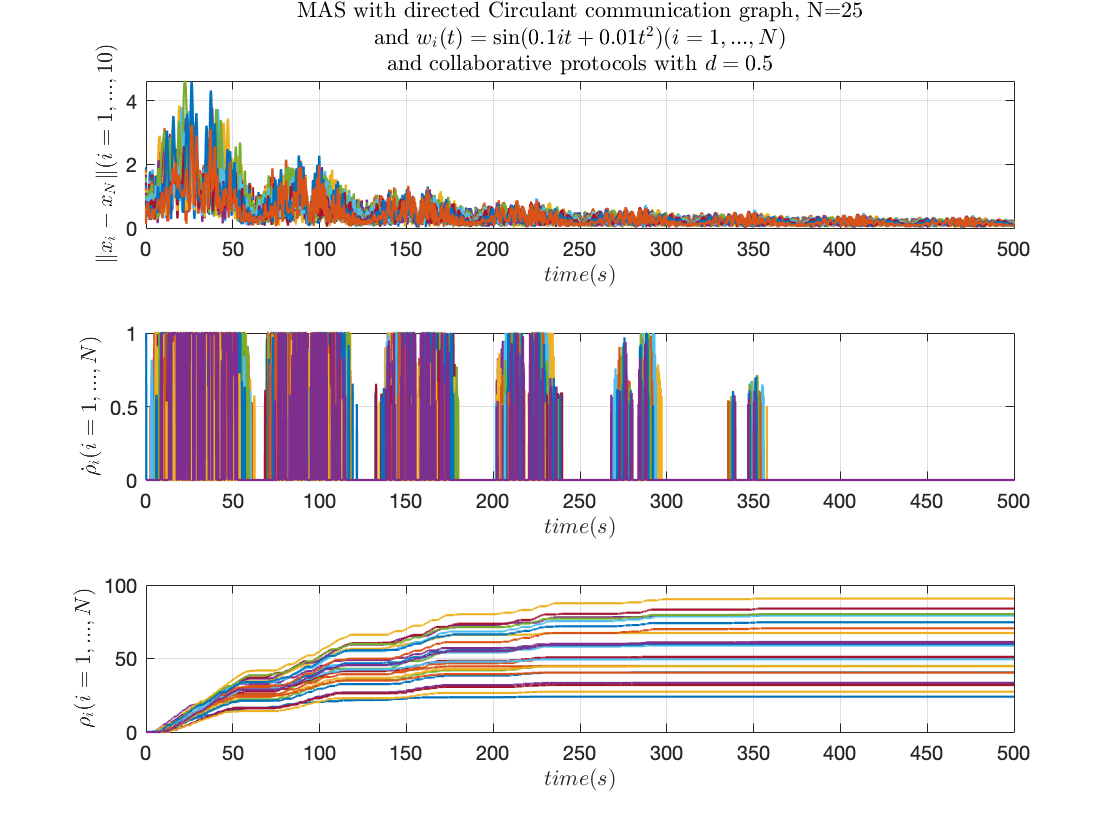}
  \vspace{-3mm}
  \caption[]{{Scale-free $\delta$-level-coherent output
      synchronization of MAS \eqref{agent-g-noise-Example2} with
      $N=25$, communicating over directed Circulant communication
      graphs in the presence of disturbances \eqref{w_i}, via
      collaborative protocol \eqref{protocol-example2} with
      $d=0.5$}} \label{Col_Cir_N25}
\end{figure}
\begin{figure}[p!]
  \centering \includegraphics[width=0.75\textwidth]{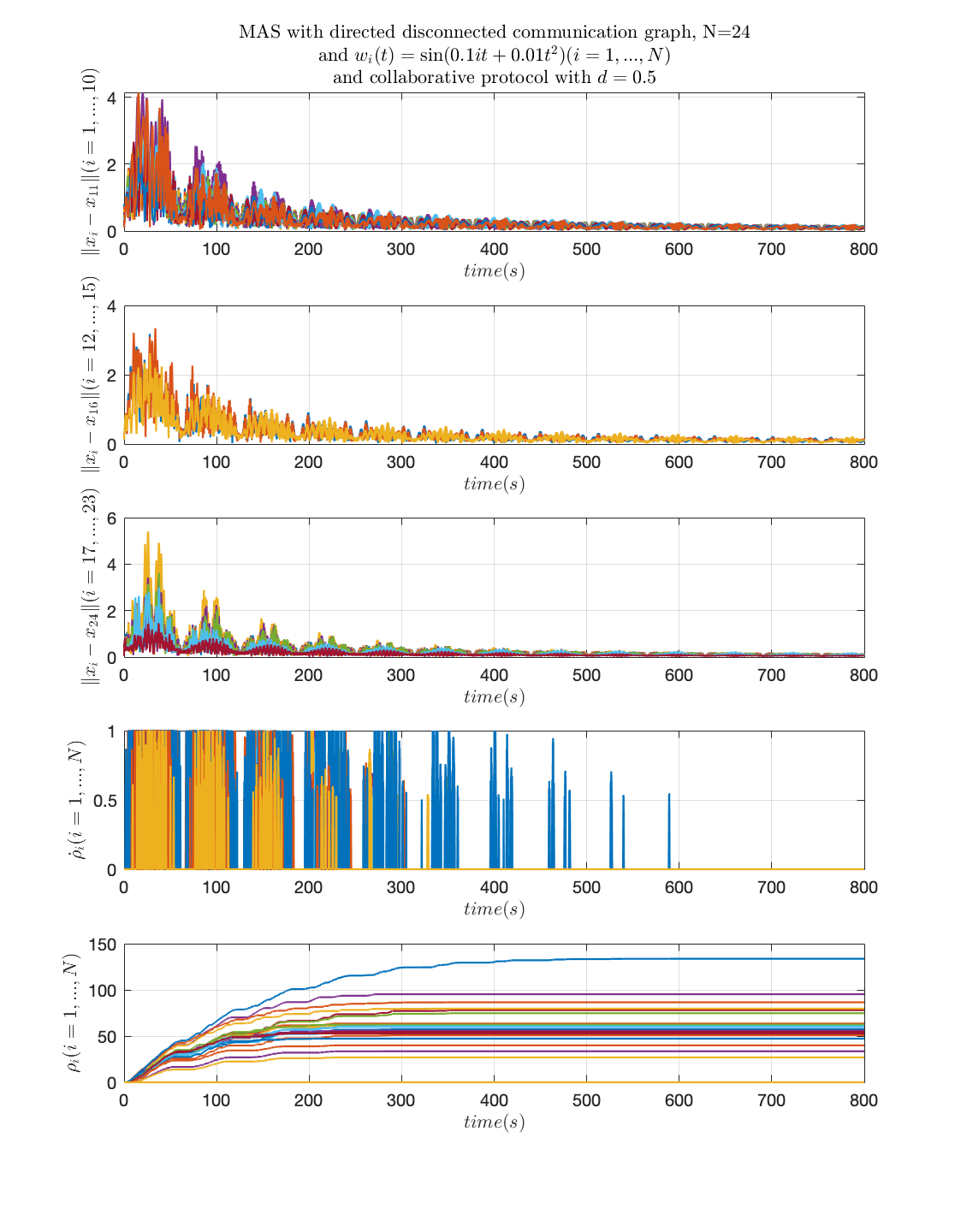}
  \vspace{-3mm}
  \caption[]{{Scale-free $\delta$-level-coherent output
      synchronization of MAS \eqref{agent-g-noise-Example2} with
      $N=24$, communicating over directed disconnected communication
      graphs in the presence of disturbances \eqref{w_i}, via
      collaborative protocol \eqref{protocol-example2} with
      $d=0.5$}} \label{Col_dis_N24}
\end{figure}

\subsubsection{Independence to the directedness of the communication
  network}

In this example, we show the feasibility of our protocols when the
agents are communicating over undirected Vicsek fractal graphs. We
consider MAS with agent models
\eqref{agent-g-noise-Example2} and $N=25$ agents where
the agents are subject to noise \eqref{w_i}. We consider $d=0.5$ in
our protocols.  The simulation results presented in
Figure~\ref{Col_Vicsek_UN_N25} show that the one-shot designed
protocol \eqref{protocol-example2}, achieves $\delta-$level coherent
output synchronization regardless of the directedness of the
communication graphs.

\subsubsection{\textbf{Effectiveness with different types of
    communication graphs}}\label{Graph2}

In this example, we illustrate that the collaborative protocol that we designed also achieves synchronization for different
types of communication graphs.  We consider MAS
\eqref{agent-g-noise-Example2} with $N=25$ where the agents are
subject to noise \eqref{w_i}. In this example, the agents are
communicating through directed Circulant graphs shown in
Figure~\ref{Circulant_Graph_UD}.  The effectiveness of our designed collaborative
protocol \eqref{protocol-example2} for MAS with directed
Circulant communication graphs is shown in Figure \ref{Col_Cir_N25}.

\subsubsection{\textbf{Effectiveness with disconnected
    graphs}} \label{connectivity2}

In this section, we consider MAS communicating through
directed disconnected graphs shown in Figure
\ref{dis-directed-net}. We consider MAS \eqref{agent-g-noise-Example2}
with $N=24$ agents where the agents are
subject to noise \eqref{w_i}. We consider $d=0.5$ in our protocols.\\
The simulation results are presented in Figure
\ref{Col_dis_N24}. In the first three sub-figures, we showed
$\delta$-level-coherent output synchronization for the bi-components
of the disconnected graph. We also showed the convergence of
$\rho_i(t)$ to constants.

\subsubsection{\textbf{Robustness to different noise patterns}}\label{Noise2}

In this example, we analyze the robustness of our collaborative
protocols to different noise patterns. We consider MAS
\eqref{agent-g-noise-Example2} with $N=25$ communicating through a
directed Vicsek fractal graph. In this example, we assume that agents
are subject to \eqref{w_i2}. Figure \ref{Col_Noise_Vicsek_N25} shows
that our designed collaborative protocol is robust even in the presence of noises
with different patterns.

\begin{figure}[th!]
  \centering
  \includegraphics[width=0.75\textwidth]{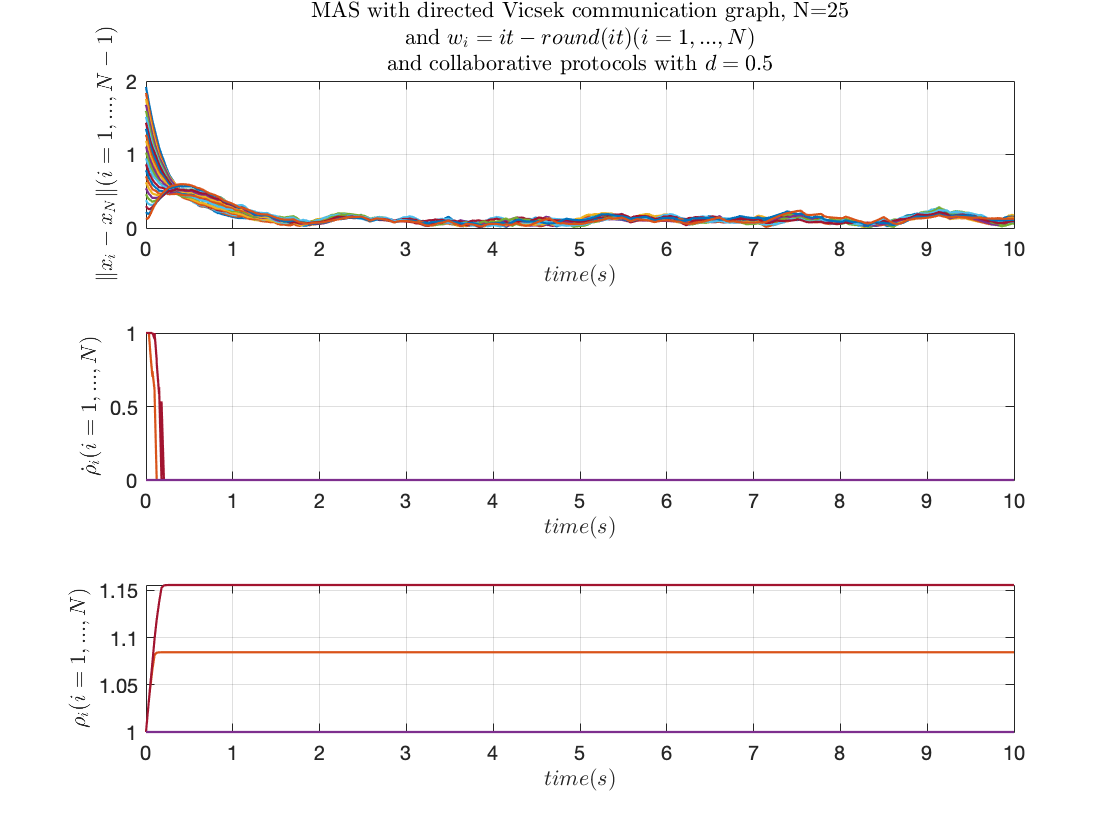}
  \vspace{-3mm}
  \caption[]{{Scale-free $\delta$-level-coherent output
      synchronization of MAS \eqref{agent-g-noise-Example2} with
      $N=25$, communicating over directed Vicsek fractal communication
      graphs in the presence of disturbances \eqref{w_i2}, via
      collaborative protocol \eqref{protocol-example2} with
      $d=0.5$}} \label{Col_Noise_Vicsek_N25}
\end{figure}
\begin{figure}[th!]
  \centering
  \includegraphics[width=0.75\textwidth]{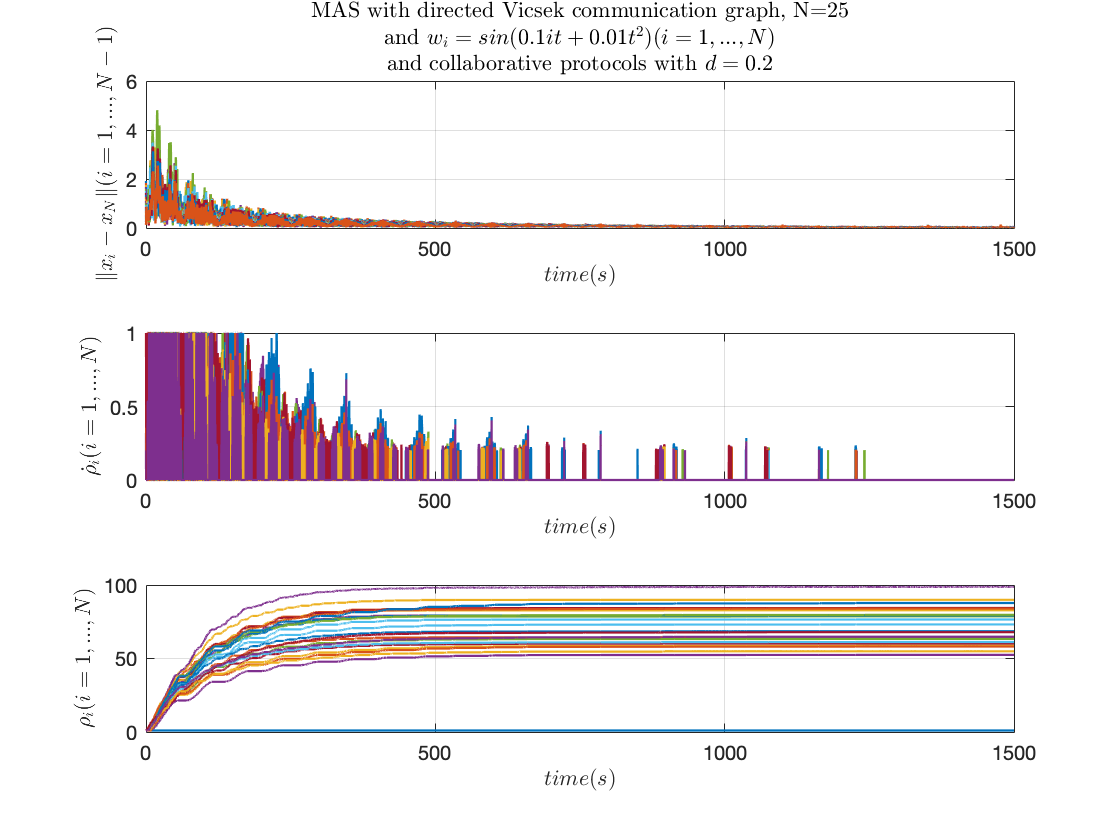}
  \vspace{-3mm}
  \caption[]{{Scale-free $\delta$-level-coherent output
      synchronization of MAS \eqref{agent-g-noise-Example2} with
      $N=25$, communicating over directed Vicsek fractal communication
      graphs in the presence of disturbances \eqref{w_i}, via
      collaborative protocol \eqref{protocol-example2} with
      $d=0.2$}} \label{Col_Vicsek_d02_N25}
\end{figure}

\subsubsection{\textbf{Effectiveness for different values of
    \texorpdfstring{$\delta$}{delta}}}\label{DValue2} 

Finally, in this section, we show the effectiveness of the proposed
collaborative protocol for different values of $\delta$ (or, equivalently, different
values of $d$). Similarly to the previous examples, we consider the MAS \eqref{agent-g-noise-Example2} with $N=25$, communicating through directed
Vicsek fractal graphs as in section \ref{Size2}, where the agents are subject to noise
\eqref{w_i}. In this example, we choose $d=0.2$. The simulation
results presented in Figure \ref{Col_Vicsek_d02_N25} show the
effectiveness of our collaborative protocol independent of the value of
$d$.
\FloatBarrier

\newcounter{equation2}
\setcounter{equation2}{\value{equation}}
\appendix
\setcounter{equation}{\value{equation2}}%

We recall the following lemmas from
\cite{donya-liu-saberi-stoorvogel-arxiv-delta} (with minor modifications):

\begin{lemma}\label{2.8}
  Consider a directed graph with Laplacian matrix $L$ which is strongly
  connected. Then, there exists $h_1,\ldots,h_N>0$ such that
  \begin{equation}\label{Hlyap}
    H^NL +L\T H^N \geq 2\gamma L\T L, 
  \end{equation}
  for some $\gamma >0$ with $H^N$ given by \eqref{HN} for $k=N$.
\end{lemma}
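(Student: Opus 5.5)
The plan is to obtain $H^N$ from the left Perron eigenvector of $L$ and then convert positive semidefiniteness on the complement of $\mathbf{1}$ into the bound $H^NL+L\T H^N\ge 2\gamma L\T L$.

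\emph{Choice of $H^N$.} Since the graph is strongly connected, $L\T$ has a simple eigenvalue $0$. The associated left eigenvector $h=(h_1,\ldots,h_N)\T$, with $h\T L=0$, can be chosen with all entries strictly positive. This is the standard Perron--Frobenius fact for irreducible Metzler matrices, applied to $-L$, and it is the same fact behind \cite[Theorem 4.25]{qu-book-2009}. We take $H^N=\operatorname{diag}(h_1,\ldots,h_N)$ and set $M=H^NL+L\T H^N$.

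\emph{Properties of $M$.} First, $M$ is symmetric. Second, its off-diagonal entries are $-(h_ia_{ij}+h_ja_{ji})\le 0$. Third, its row sums vanish: $M\mathbf{1}=H^NL\mathbf{1}+L\T h=0$. Hence $M$ is the Laplacian of the undirected graph with weights $h_ia_{ij}+h_ja_{ji}$. That graph is connected because the original digraph is strongly connected. Consequently $M\ge 0$ and $\Ker M=\Span\{\mathbf{1}\}$.

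\emph{Comparison with $L\T L$.} We also have $L\T L\ge 0$, and $\Ker L\T L=\Ker L=\Span\{\mathbf{1}\}$, again by strong connectivity. Both forms vanish on $\mathbf{1}$, so it suffices to compare them on the orthogonal complement $\mathbf{1}^\perp$. There $M$ is positive definite, with smallest eigenvalue $\lambda_2(M)>0$. Moreover $x\T L\T Lx\le \|L\|^2\|x\|^2$. Now take an arbitrary $x=c\mathbf{1}+x_\perp$ with $x_\perp\in\mathbf{1}^\perp$. Since $M\mathbf{1}=0$ and $L\mathbf{1}=0$, we get
\[
x\T Mx=x_\perp\T Mx_\perp\ge \lambda_2(M)\|x_\perp\|^2\ge \tfrac{\lambda_2(M)}{\|L\|^2}\,x_\perp\T L\T Lx_\perp= \tfrac{\lambda_2(M)}{\|L\|^2}\,x\T L\T Lx.
\]
Choosing $\gamma=\lambda_2(M)/(2\|L\|^2)$ then gives \eqref{Hlyap}. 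If $L=0$, which can only happen for $N=1$, the inequality holds trivially.

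\emph{Main obstacle.} The only nontrivial step is identifying $\Ker M$ exactly. This requires both positivity of $h$ and the fact that symmetrizing a strongly connected digraph yields a connected undirected graph. Once that is verified, the comparison of the two quadratic forms is routine linear algebra.
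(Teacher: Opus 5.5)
Your argument is correct. The positive left null vector of $L$ gives a diagonal $H^N$ for which $H^NL+L\T H^N$ is the Laplacian of a connected undirected graph, so it is positive semidefinite with kernel $\Span\{\mathbf{1}\}=\Ker L\T L$, and comparing the two forms on $\mathbf{1}^\perp$ yields $\gamma$. The paper does not reprove this lemma here (it recalls it from \cite{donya-liu-saberi-stoorvogel-arxiv-delta}), but your proof is the same standard route it relies on elsewhere: in the proof of Lemma~\ref{lem2acol} it uses the diagonal-scaling fact from \cite[Theorem 4.25]{qu-book-2009} followed by a kernel comparison for $L_{11}$.
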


\begin{lemma}\label{2.9}
  The quadratic form
  \[
    V=z\T Q_{\rho} z
  \]
  with $Q_{\rho}$ given by \eqref{Qrho} is decreasing in $\rho_i$ for
  $i=1,\ldots, N$.
\end{lemma}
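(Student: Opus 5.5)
The plan is to rewrite $Q_{\rho}$ so that its quadratic form becomes a weighted variance, and then to use a simple monotonicity argument on that variance. Introduce the weights $g_i=h_i\rho_i^{-1}>0$ for $i=1,\ldots,N$, and let $\textbf{g}$ be the vector with entries $g_i$. Since $\rho$ and $H^N$ are diagonal, expanding \eqref{Qrho} gives
\[
  Q_{\rho} = H^N\rho^{-1} - \mu_N \rho^{-1}\textbf{h}_N\textbf{h}_N\T\rho^{-1}
  = \operatorname{diag}(g_1,\ldots,g_N) - \frac{\textbf{g}\textbf{g}\T}{\sum_{i=1}^N g_i},
\]
because $\mu_N=1/\sum_i g_i$. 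Hence
\[
  V=z\T Q_{\rho} z = \sum_{i=1}^N g_i z_i^2 - \frac{\left(\sum_{i=1}^N g_i z_i\right)^2}{\sum_{i=1}^N g_i}.
\]

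The key step is the variational identity
\[
  V=\min_{c\in\R}\ \sum_{i=1}^N g_i (z_i-c)^2 .
\]
To verify it, note that the right-hand side is a convex quadratic in $c$. Its minimizer is the weighted mean $c^*=\sum_i g_iz_i/\sum_i g_i$, and substituting $c^*$ reproduces the expression for $V$ above. The same argument works if each $z_i$ is a vector, as happens when $Q_{\rho}$ appears Kronecker-multiplied with $Q^{-1}$ in \eqref{VN}. In that case replace $(z_i-c)^2$ by $(z_i-c)\T Q^{-1}(z_i-c)$ with $c$ a vector, and the minimizer is again the weighted mean.

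Monotonicity then follows directly. For each fixed $c$, the function $\sum_i g_i(z_i-c)^2$ is nondecreasing in every $g_i$, since the coefficients $(z_i-c)^2$ are nonnegative. A pointwise minimum of functions that are each nondecreasing in $g_i$ is itself nondecreasing in $g_i$. Finally, $g_i=h_i/\rho_i$ is decreasing in $\rho_i$ and does not depend on $\rho_j$ for $j\neq i$. Therefore $V$ is nonincreasing in each $\rho_i$, which is the claim.

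For use in the Lyapunov derivative, the differential form can also be read off directly. By the envelope theorem,
\[
  \frac{\partial V}{\partial\rho_i} = -\frac{h_i}{\rho_i^2}(z_i-c^*)^2 \leq 0 ,
\]
which justifies dropping the $\dot{\rho}$ terms in \eqref{eqref}. There is no real obstacle here. The only step that needs care is the algebraic reduction of \eqref{Qrho} to the form $\operatorname{diag}(\textbf{g})-\textbf{g}\textbf{g}\T/\sum_i g_i$, in particular checking that the factor $\rho^{-1}$ on both sides turns $\textbf{h}_N$ into $\textbf{g}$.
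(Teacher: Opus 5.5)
Your argument is correct and complete. The paper itself gives no proof of this lemma. It is recalled, with minor modifications, from the authors' earlier work on $\delta$-level coherent state synchronization, so there is no in-text proof to match step by step.

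Your reduction $Q_\rho=\operatorname{diag}(g_1,\ldots,g_N)-\textbf{g}\textbf{g}\T/\sum_i g_i$ with $g_i=h_i/\rho_i$ is right. The weighted-variance identity $z\T Q_\rho z=\min_{c}\sum_i g_i(z_i-c)^2$ then gives the monotonicity immediately.

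The more routine alternative is to differentiate \eqref{Qrho} directly. With $\textbf{u}_i$ the $i$th standard basis vector, this gives
\[
  \frac{\partial Q_\rho}{\partial \rho_i} = -\frac{h_i}{\rho_i^2}\,\bigl(\textbf{u}_i-\mu_N\rho^{-1}\textbf{h}_N\bigr)\bigl(\textbf{u}_i-\mu_N\rho^{-1}\textbf{h}_N\bigr)\T \leq 0 .
\]
This is your envelope formula written as a matrix inequality. It is exactly what is used in \eqref{eqref}, where the term $e\T(\dot{Q}_\rho\otimes Q^{-1})e$ is dropped because $\dot\rho_i\geq 0$. If only that sign is needed, the derivative computation is shorter.

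Your variational route gives somewhat more:
\begin{itemize}
\item It is derivative-free.
\item It carries over verbatim to $Q_\rho\otimes Q^{-1}$, or to any positive definite weight.
\item It shows at once that $Q_\rho\geq 0$ with $\ker Q_\rho=\Span\{\textbf{1}\}$.
\end{itemize}
The paper uses positive semidefiniteness of $Q_\rho$ without stating it, for instance when it treats $V_N$ as a Lyapunov function and takes the square root in \eqref{checkv}.
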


\begin{lemma}\label{2.10}
  Consider $\tilde{V}_j$ as defined in \eqref{tildeVj} for $j<N$ and
  in \eqref{tildeVN} for $j=N$. There exists $M$ such that for
  $j=2,\ldots ,k$ we have
  \begin{equation}\label{Mbound}
    \tilde{V}_{j-1} \leq M \tilde{V}_j
  \end{equation}
  provided $\tilde{\rho}_j \leq 2\tilde{\rho}_i$ for $i=1,\ldots,j-1$.
\end{lemma}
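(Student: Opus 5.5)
The plan is to express $\check{e}_s^{j-1}$ as an explicit linear function of $\check{e}_s^{j}$ and then compare the two quadratic forms block by block. The hypothesis $\tilde{\rho}_j\leq 2\tilde{\rho}_i$ for $i<j$ is only needed to control the last block.

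\textbf{Step 1: a linear relation between the two vectors.} Split $\check{e}_s^j$ into $\check{e}_a$, consisting of the first $j-1$ agent blocks, and $\check{e}_b$, the $j$-th block. Split the matrix correspondingly:
\[
L_{s,11}^j=\begin{pmatrix} L_{s,11}^{j-1} & \ell \\ \star & \star\end{pmatrix},
\]
with $\ell$ a column. By \eqref{checkx}, $(L_{s,11}^j\otimes I)\check{e}_s^j$ is exactly the first $j$ block rows of $(L_{s,11}\otimes I)\hat{e}_s^k$. Likewise, $(L_{s,11}^{j-1}\otimes I)\check{e}_s^{j-1}$ is the first $j-1$ of those block rows. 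Comparing the first $j-1$ block rows therefore gives
\[
(L^{j-1}_{s,11}\otimes I)\check{e}_s^{j-1}=(L^{j-1}_{s,11}\otimes I)\check{e}_a+(\ell\otimes I)\check{e}_b ,
\]
and hence
\[
\check{e}_s^{j-1}=\check{e}_a+(K\otimes I)\check{e}_b,\qquad K=(L^{j-1}_{s,11})^{-1}\ell .
\]
Invertibility of the principal submatrices holds for the same reason as for $L_{11}$ in Lemma \ref{lem2acol}: they are principal submatrices of a grounded Laplacian, or of a strongly connected Laplacian with at least one agent removed. Only finitely many permutations occur, so $\|K\|$ is bounded by a constant independent of $t$.

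\textbf{Step 2: comparing the quadratic forms.} Use $(x+y)\T W(x+y)\leq 2x\T Wx+2y\T Wy$ with $W=H_s^{j-1}\rho_s^{-(j-1)}\otimes Q^{-1}$. This gives
\[
\tilde{V}_{j-1}\leq 2\,\check{e}_a\T W\check{e}_a+2\,\check{e}_b\T (K\T\otimes I)W(K\otimes I)\check{e}_b .
\]
The first term is bounded by $2\tilde{V}_j$, because $\check{e}_a\T W\check{e}_a$ is precisely the part of $\tilde{V}_j$ coming from the first $j-1$ blocks. In the second term, note that $W\leq \max_i h_i \max_{i<j}\tilde{\rho}_i^{-1}(I\otimes Q^{-1})$. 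The hypothesis gives $\tilde{\rho}_i^{-1}\leq 2\tilde{\rho}_j^{-1}$ for $i<j$, so the second term is at most
\[
4\max_i h_i\,\|K\|^2\,\tilde{\rho}_j^{-1}\,\check{e}_b\T Q^{-1}\check{e}_b\leq \frac{4\max_i h_i\|K\|^2}{\min_i h_i}\,\tilde{V}_j .
\]
The last inequality holds because $h_{s,j}\tilde{\rho}_j^{-1}\check{e}_b\T Q^{-1}\check{e}_b$ is the $j$-th block's contribution to $\tilde{V}_j$. Together these give \eqref{Mbound} with $M=2+4\|K\|^2\max h_i/\min h_i$, maximized over the finitely many permutations.

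\textbf{Step 3: the case $j=N$.} Here $\tilde{V}_N$ is given by \eqref{tildeVN} with $Q_{s,\rho}$, and this is where I expect the main obstacle. I would first show that $Q_{s,\rho}$ restricted to the relevant subspace is bounded below by $c\,\tilde{\rho}_N^{-1}$ times a weighted form. Concretely, using $\ker Q_\rho=\Span\{\mathbf 1\}$ and the explicit rank-one correction $\mu_N\mathbf h_N\mathbf h_N\T$, I would bound $Q_{s,\rho}\geq c\,\rho_s^{-N}H_s^N\rho_s^{-N}\rho_s^N$ on vectors $(I-\mathbf 1\mathbf h\T/\ldots)z$. The point is that the Schur-complement structure makes the rank-one term harmless under the comparability $\tilde{\rho}_N\leq 2\tilde{\rho}_i$. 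With that lower bound in hand, the same block decomposition as in Steps 1 and 2 applies.
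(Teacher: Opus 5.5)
Your Steps 1 and 2 are correct and complete whenever both forms are of type \eqref{tildeVj}. That covers all $j\le k$ when $k<N$, and all $j<N$ when $k=N$. The paper only imports this lemma from earlier work, so there is no in-paper proof to compare against.

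The genuine gap is Step 3: the pair $\tilde V_{N-1},\tilde V_N$ when $k=N$. There you only announce a plan, and the plan as phrased would not work, for three reasons.
\begin{enumerate}
\item A lower bound of $Q_{s,\rho}$ by $c\,\tilde\rho_N^{-1}$ times a $\rho$-independent form is false under the hypothesis, which allows $\tilde\rho_i\gg\tilde\rho_N$ for $i<N$. For $N=2$, $h_1=h_2=1$, $\tilde\rho_1=R$, $\tilde\rho_2=1$ one gets $Q_{s,\rho}=\tfrac{1}{R+1}\left(\begin{smallmatrix}1&-1\\-1&1\end{smallmatrix}\right)$.
\item $Q_{s,\rho}\ge c\,H_s^N\rho_s^{-N}$ fails on vectors centered with the weights $\mathbf h$. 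In the same example, $z=(1,-1)\T$ gives $4/(R+1)$ against $1+1/R$.
\item Feeding a centered vector into Steps 1--2 requires that $\check e_s^{N-1}$ be unchanged by the centering, and you never establish this.
\end{enumerate}

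What is needed is an exact identity rather than an estimate. Because $L\mathbf 1=0$, you have $L^{N-1}_{s,12}=-L^{N-1}_{s,11}\mathbf 1$. So your $K$ equals $-\mathbf 1$ and $\check e^{N-1}_{s,i}=e_{s,i}-e_{s,N}$. In particular, $\check e_s^{N-1}$ is invariant under $e_s\mapsto e_s-\mathbf 1\otimes c$.

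Write $g_i=h_{s,i}/\tilde\rho_i$. Then $Q_{s,\rho}=\mathrm{diag}(g)-gg\T/(\mathbf 1\T g)$, hence
\[
\tilde V_N=\sum_{i=1}^N g_i\,(e_{s,i}-\bar e)\T Q^{-1}(e_{s,i}-\bar e),\qquad
\bar e=\Big(\sum_{i=1}^N g_i\Big)^{-1}\sum_{i=1}^N g_i\, e_{s,i}.
\]
This is exactly the expression \eqref{tildeVj} with $j=N$, evaluated at the centered vector $e_s-\mathbf 1\otimes\bar e$. Note that the centering must use the weights $h_{s,i}/\tilde\rho_i$, not $h_{s,i}$.

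Your Step 2, applied to this centered vector with $K=-\mathbf 1$, then gives \eqref{Mbound} with $M=2+4(N-1)\max_i h_i/\min_i h_i$. The rank-one part of $Q_{s,\rho}$ needs no comparability argument at all. The hypothesis $\tilde\rho_N\le 2\tilde\rho_i$ enters only as in your Step 2, to dominate $\sum_{i<N}g_i$ by $2(N-1)(\max_i h_i/\min_i h_i)\,g_N$.
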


We need in this paper, a detailled analysis of the parameter
dependency of algebraic Riccati equations. 

Based on the special coordinate basis introduced in
\cite{sannuti-saberi} we can obtain the following decomposition of a
system $(A,B,C,0)$ which is uniform rank, minimum-phase and
right-invertible.

\begin{lemma}\label{scb}
  Consider a linear system $(A,B,C,0)$ which is uniform rank, minimum-phase and
  right-invertible. In that case there exists a basis transformation
  $\Gamma_s$, unitary input and output transformation
  $\Gamma_i$ and $\Gamma_o$, and state feedback $\tilde{F}$ and output
  injection $\tilde{K}$ such that: 
  \[
    \Gamma_s^{-1}A\Gamma_s=\tilde{A}-\tilde{K}\tilde{C}-\tilde{B}\tilde{F},\quad
    \tilde{A}=\begin{pmatrix} A_0 & 0 \\ 0 & A_1
    \end{pmatrix},\quad \tilde{B}=\Gamma_s^{-1}B\Gamma_i = \begin{pmatrix} B_0 & 0 \\
      0 & B_1 \end{pmatrix},\quad \tilde{C}=\Gamma_o C\Gamma_s = \begin{pmatrix} 0 &
      C_1 \end{pmatrix} 
  \]
  where the subsystem $(A_1,B_1,C_1,0)$ based on the SCB has the
  following structure:
  \[
    A_1= \begin{pmatrix} 
      0          & I_{m}      & 0      & \cdots & 0      \\
      \vdots     & \ddots & \ddots & \ddots & \vdots \\
      \vdots     &        & \ddots & I_{m}     & 0      \\
      0          & \cdots & \cdots & 0      & I_{m}     \\
      0          & \cdots & \cdots & 0      & 0
    \end{pmatrix}.
  \]
  Moreover,
  \[
    B_1= \begin{pmatrix} 
      0 \\ \vdots \\ 0 \\ 0 \\ I_{m} 
    \end{pmatrix}
  \]
  and
  \[
    C_1  =\begin{pmatrix} 
      I_{m} & 0 & \cdots & \cdots & 0 & 0
    \end{pmatrix}.
  \]
  Note that $\tilde{F}$ and $\tilde{K}$ have the following structure:
  \[
    \tilde{F}=\begin{pmatrix} 0 & 0 \\ \tilde{F}_{21} & \tilde{F}_{22}
    \end{pmatrix},\qquad \tilde{K}= \begin{pmatrix} \tilde{K}_1 \\
      \tilde{K}_2
    \end{pmatrix}
  \]
  and the unobservable dynamics of $(C,A)$ is equal to the
  unobservable dynamics of $(\tilde{F}_{21},A_0)$. Finally,
  $(A_0,B_0)$ is stabilizable.
\end{lemma}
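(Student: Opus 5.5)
The plan is to obtain the decomposition directly from the special coordinate basis (SCB) of \cite{sannuti-saberi}, specialized to a strictly proper system that is right-invertible, minimum-phase and of uniform rank. After that, the feedback $\tilde{F}$ and the output injection $\tilde{K}$ are read off as the terms that spoil the block-diagonal chain form. Apply the SCB to $(A,B,C,0)$. It gives a state transformation $\Gamma_s$ and input/output transformations $\Gamma_i,\Gamma_o$; I take the input/output transformations unitary, which I assume is permitted by a QR-type normalization. The state splits as $(x_a,x_b,x_c,x_d)$. Right-invertibility means the $x_b$ component is absent. The inputs split as $(u_c,u_d)$ and the outputs as $y=y_d$, with $\dim u_d=\dim y$.

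Uniform rank means all infinite zeros have the same order $\rho$. Then $x_d$ consists of $\dim y$ identical integrator chains of length $\rho$, and after reordering $x_d$ by position in the chain we get exactly the block companion form $(A_1,B_1,C_1)$ of the statement (the identity blocks have the dimension of $y$, which plays the role of $I_m$ there). The SCB equations then read
\[
  \dot{x}_a = A_{aa}x_a + L_{ad}y,\qquad
  \dot{x}_c = A_{cc}x_c + B_c(u_c+E_{ca}x_a) + L_{cd}y,
\]
\[
  \dot{x}_d = A_1x_d + B_1\bigl(u_d + E_{da}x_a + E_{dc}x_c + E_{dd}x_d\bigr),
\]
where $y=C_1x_d$ and all output-dependent couplings are collected in the $L$ terms.

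Next, set $x_0=(x_a,x_c)$ and $x_1=x_d$. Define
\[
  A_0=\begin{pmatrix} A_{aa} & 0\\ B_cE_{ca} & A_{cc}\end{pmatrix},\qquad
  B_0=\begin{pmatrix}0\\ B_c\end{pmatrix}.
\]
The coupling terms through $y=\tilde{C}x$ are exactly of the form $-\tilde{K}\tilde{C}$, which fixes $\tilde{K}=(\tilde{K}_1;\tilde{K}_2)$, with $\tilde K_1$ built from $L_{ad},L_{cd}$. The remaining coupling enters only through the $u_d$-channel, i.e. through $B_1$. It is therefore of the form $-\tilde{B}\tilde{F}$ with the first block row of $\tilde{F}$ equal to zero, $\tilde{F}_{21}=-(E_{da}\;E_{dc})$ and $\tilde{F}_{22}=-E_{dd}$. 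Here I keep $E_{ca}$ inside $A_0$ rather than in $\tilde F$, precisely so that the first block row of $\tilde F$ stays zero as the statement requires. Stabilizability of $(A_0,B_0)$ follows because $A_{aa}$ carries the invariant zeros, which are stable by minimum-phase, and $(A_{cc},B_c)$ is controllable by the SCB. The triangular structure of $A_0$ then gives stabilizability.

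The main obstacle is the claim on unobservable dynamics. First, output injection does not change the unobservable subspace, so $(C,A)$ has the same unobservable subspace as $(\tilde C,\tilde A-\tilde B\tilde F)$. Now take an initial state that is unobservable for this pair, with zero input; then $y\equiv0$. Differentiating along the chain forces $x_1\equiv0$, and the last chain equation then forces $\tilde F_{21}x_0\equiv0$, while $\dot x_0=A_0x_0$. Conversely, if $x_0$ lies in the unobservable subspace of $(\tilde F_{21},A_0)$ and $x_1=0$, then $x_1$ stays zero and $y\equiv0$. This identifies the unobservable dynamics with those of $(\tilde F_{21},A_0)$. The care needed is in verifying that no output-injection term reintroduces $x_0$ into the chain, which is why all $y$-dependent terms were placed in $\tilde K$.
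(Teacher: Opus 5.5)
Your route is the one the paper intends; it gives no argument beyond invoking the SCB of \cite{sannuti-saberi}. Most of what you do is correct:
\begin{itemize}
\item keeping $B_cE_{ca}$ inside $A_0$ so that the first block row of $\tilde F$ vanishes;
\item the PBH argument for stabilizability of $(A_0,B_0)$, using the stable eigenvalues of $A_{aa}$ and controllability of $(A_{cc},B_c)$;
\item the chain argument for the unobservable dynamics. Since $\tilde K\tilde C$ vanishes on the unobservable subspace, the restricted dynamics agree there, not just the subspaces.
\end{itemize}

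The step that fails is the one you flagged as an assumption. $\Gamma_i$ and $\Gamma_o$ cannot in general be made unitary, by a QR-type normalization or any other means, while $B_1$ and $C_1$ keep their identity blocks. The obstruction is coordinate-free. Let $\rho$ be the common infinite-zero order, as in your proposal. Then $\tilde C\tilde A^{k}\tilde B=\begin{pmatrix}0 & C_1A_1^{k}B_1\end{pmatrix}$ vanishes for $k<\rho-1$. State feedback and output injection leave the first nonvanishing Markov parameter unchanged. Hence the identity $\Gamma_s^{-1}A\Gamma_s=\tilde A-\tilde K\tilde C-\tilde B\tilde F$ forces
\[
  \Gamma_o CA^{\rho-1}B\,\Gamma_i=\tilde C\tilde A^{\rho-1}\tilde B=\begin{pmatrix} 0 & I\end{pmatrix}.
\]
If $\Gamma_o$ and $\Gamma_i$ were unitary, every nonzero singular value of $CA^{\rho-1}B$ would have to equal $1$. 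The scalar system $A=0$, $B=2$, $C=1$ is uniform rank, minimum-phase and right-invertible, yet it violates this, since it would need $2\Gamma_o\Gamma_i=1$. So the unitarity part of the statement is false as written.

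What your SCB argument actually delivers is the decomposition with nonsingular $\Gamma_i,\Gamma_o$. That weaker version is all the paper uses. The proof of Lemma \ref{asympric} only needs constants $m_\ell,m_u>0$ with $m_\ell I\le\Gamma_o\T\Gamma_o\le m_uI$ and $m_\ell I\le\Gamma_i\T\Gamma_i\le m_uI$, and these exist for any nonsingular transformations. Replace unitarity by nonsingularity instead of trying to justify it; with that change your argument is complete.
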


The above allows us a detailled analysis of the $\alpha$ dependency
of the algebraic Riccati equation:
\begin{equation}\label{are2}
  A\T P_{\alpha} + P_{\alpha} A - \alpha P_{\alpha} BB\T
  P_{\alpha} + 2\eps P_{\alpha} +C\T C = 0
\end{equation}
as presented in the following lemma:

\begin{lemma}\label{asympric}
  Consider a linear system $(A,B,C,0)$ which is uniform rank,
  minimum-phase and right-invertible. Consider the structure as
  presented in Lemma \ref{scb} with $A_1\in \R^{n_1m\times n_1m}$. Define
  \begin{equation}\label{betam}
    \beta_0=\alpha^{-1/2},\qquad \beta_1=\alpha^{-1/(4n_1)}.
  \end{equation}
  and
  \[
    D_{\alpha}=\begin{pmatrix} 
      D_0 & 0      \\
      0   & D_1
    \end{pmatrix},
  \]
  where $D_0=\beta_0 I$ and
  \[
    D_1 = \begin{pmatrix} 
      \beta_1 I & 0      & \cdots & 0      \\
      0         & \beta_1^3 I & \ddots & \vdots \\
      \vdots    & \ddots & \ddots & 0      \\
      0         & \cdots & 0      & \beta_1^{2n_1-1} I
    \end{pmatrix}.
  \]
  Note that $\beta_1^{2n_1}=\alpha^{-1/2}$. In that case, there exist
  matrices $P_{\ell}\geq 0$ and $P_{u} \geq 0$ independent of $\alpha$
  such that
  \[
    \Gamma_s^{-T} D_{\alpha} P_{\ell} D_{\alpha}\Gamma_s^{-1} \leq 
    P_{\alpha} \leq \Gamma_s^{-T} D_{\alpha} P_{u} D_{\alpha}\Gamma_s^{-1} 
  \]
  for $\alpha>1$ with
  $\ker P_{\ell}\Gamma_s^{-1} =\ker{P}_{\alpha}=\ker P_{u}\Gamma_s^{-1}$.
\end{lemma}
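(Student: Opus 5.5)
The plan is to work in the coordinates of Lemma \ref{scb} and rescale the state by $D_\alpha$. With this scaling, the $\alpha$-dependence of \eqref{are2} should reduce to a bounded, regular perturbation of a fixed limiting Riccati equation. Write $\bar P_\alpha = D_\alpha^{-1}\Gamma_s\T P_\alpha \Gamma_s D_\alpha^{-1}$. The claim is then equivalent to $P_\ell\le \bar P_\alpha\le P_u$ for $\alpha>1$, with the kernels matching. We transform \eqref{are2} by $\Gamma_s$ and use the output-injection and feedback structure $\tilde A-\tilde K\tilde C-\tilde B\tilde F$. We then conjugate with $D_\alpha$ and multiply by a suitable power of $\alpha$. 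The key scaling identities are $D_1^{-1}A_1D_1=\beta_1^{-2}A_1$, $\alpha^{1/2}D_1B_1=\alpha^{1/2}\beta_1^{2n_1-1}B_1=\beta_1^{-1}B_1$ and $C_1D_1=\beta_1C_1$. The $A_0$-block is scaled by $\beta_0=\alpha^{-1/2}$, which matches the high-gain term $\alpha BB\T$ on the $B_0$ channel.

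With these identities, the $A_1$ chain part of the equation becomes, after multiplying by $\beta_1^{2}$, an equation of the form
\[
A_1\T \bar P+\bar P A_1-\bar P B_1B_1\T \bar P + (\text{terms of order } \beta_1^{k},\ k\ge 1) + \beta_1^{4}\ldots C_1\T C_1=0 .
\]
I will have to check the exact balance by dividing by the right power. The cross terms coming from $\tilde K\tilde C$, $\tilde B\tilde F$ and $2\eps I$ pick up positive powers of $\beta_1$ or $\beta_0$ relative to $\beta_1^{-2}$. This is where uniform rank is used: all chains have the same length $n_1$, so a single scaling works.

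For the upper bound $P_u$, I would use a comparison argument. $P_\alpha$ is the value function $\inf_u\int_0^\infty e^{2\eps t}(\|Cx\|^2+\alpha^{-1}\|u\|^2)\,dt$. So it suffices to exhibit one explicit feedback whose cost has the scaled form $x\T\Gamma_s^{-T}D_\alpha P_uD_\alpha\Gamma_s^{-1}x$. I take a low-gain/high-gain composite: a high-gain chain feedback $u_1=-F_1(\beta_1)$ stabilizing the $A_1$ block with bandwidth $\beta_1^{-2}$, combined with cancellation of $\tilde F$ and $\tilde K\tilde C$ and a stabilizing feedback on $(A_0,B_0)$. This is available by stabilizability of $(A_0,B_0)$ and minimum-phase-ness, and $\eps$ is chosen so that $(A+\eps I,B,C)$ stays minimum-phase. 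A Lyapunov equation in scaled coordinates gives the bound.

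For the lower bound $P_\ell$, I would use a dual argument: minimizing cost is bounded below by the scaled observability of $C_1$ through the chain over a time window of length $\beta_1^{2}$. Equivalently, I would look at the limit equation as $\beta\to 0$ and argue that $\bar P_\alpha$ converges to its stabilizing solution, which is positive on the observable part. The kernel statement follows because $\ker P_\alpha$ is the unobservable subspace of $(C,A+\eps I)$, which is independent of $\alpha$. By Lemma \ref{scb} this is the unobservable subspace of $(\tilde F_{21},A_0)$, and the scaling preserves it.

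I expect the main obstacle to be the uniform control of the coupling terms between the $A_0$ block and the chain under two different scalings $\beta_0,\beta_1$. The difficulty is ensuring they vanish as $\alpha\to\infty$ rather than blow up. I would try first to choose the exponents so that every cross term carries a factor $\beta_1^{j}$ with $j\ge1$, and then invoke continuity of stabilizing Riccati solutions.
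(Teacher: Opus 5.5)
The upper-bound half of your plan can be made to work, but the lower bound has a genuine gap. That includes the requirement that $P_\ell$ be positive off $\ker P_\alpha$, in particular on the observable part of the $x_0$ coordinates. You base it on the claim that the $D_\alpha$-scaled ARE is a regular perturbation of a fixed limiting ARE, and that claim is false.

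Write $\bar P=D_\alpha^{-1}\Gamma_s\T P_\alpha\Gamma_s D_\alpha^{-1}=\begin{pmatrix}X&Y\\ Y\T&Z\end{pmatrix}$. Since $\Gamma_i,\Gamma_o$ are unitary, the scaled equation involves $D_\alpha\Gamma_s^{-1}A\Gamma_sD_\alpha^{-1}$, whose blocks are $A_0$, $-\beta_1^{2n_1-1}\tilde K_1C_1$, $-\beta_1^{-1}B_1\tilde F_{21}$ and $\beta_1^{-2}A_1+O(1)$. It also involves $\alpha D_\alpha\tilde B\tilde B\T D_\alpha=\operatorname{diag}(B_0B_0\T,\beta_1^{-2}B_1B_1\T)$ and $D_\alpha^{-1}\tilde C\T\tilde CD_\alpha^{-1}=\operatorname{diag}(0,\beta_1^{-2}C_1\T C_1)$. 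The $(0,0)$ block therefore reads
\[
A_0\T X+XA_0-XB_0B_0\T X+2\eps X-\beta_1^{-1}\bigl(\tilde F_{21}\T B_1\T Y\T+YB_1\tilde F_{21}\bigr)-\beta_1^{-2}YB_1B_1\T Y\T=0.
\]
Multiplying by $\beta_1^2$ does make every coupling term small, as you say. But it also kills every term containing $X$. At $\beta_1=0$ the limit system only gives $Z=Z_*$ (the chain ARE) and $Y=0$, leaves $X$ undetermined, and has zero Jacobian in $X$. So there is no limiting stabilizing solution in the $x_0$ directions, and continuity or the implicit function theorem tells you nothing there. Without that multiplication, the $(0,0)$ block contains $\beta_1^{-1}$ and $\beta_1^{-2}$ terms.

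No choice of exponents avoids this. $D_0=\alpha^{-1/2}I$ is forced, because the $x_0$ block of $P_\alpha$ really is of order $\alpha^{-1}$, and this fixes $D_1B_1\tilde F_{21}D_0^{-1}=\beta_1^{-1}B_1\tilde F_{21}$. Your other route, observability over a window of length $\beta_1^2$, does not close the gap either. On that window $x_0$ is essentially frozen and enters only through the input channel $B_1$, so it yields the chain block only.

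What you are missing is a mechanism for the slow block. The paper's is elementary:
\begin{itemize}
\item For $\alpha>1$, comparing the costs gives $P_\alpha\ge\alpha^{-1}P_1$. This is the order-$\alpha^{-1}$ bound with the right kernel.
\item The chain block comes from a reduced cheap-control problem started at $x_0(0)=0$. Your window argument would serve here, after absorbing $B_1\tilde F_{21}x_0$ into the input.
\item The two block bounds must then be combined via a Schur complement, using the upper bound to control $Y$.
\end{itemize}

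Alternatively, your limit approach can be rescued. Set $Y=\beta_1\tilde Y$ and multiply the $(0,1)$ and $(1,1)$ blocks by $\beta_1$ and $\beta_1^2$; the resulting system is regular. At $\beta_1=0$ it gives $Z_*$, then $\tilde Y_*(A_1-B_1B_1\T Z_*)=\tilde F_{21}\T B_1\T Z_*$. Since $B_1\T Z_*(A_1-B_1B_1\T Z_*)^{-1}B_1=-I$ (because $\im A_1\cap\im B_1=\{0\}$), it finally gives the ARE $A_0\T X+XA_0-XB_0B_0\T X+2\eps X+\tilde F_{21}\T\tilde F_{21}=0$ for $X_*$. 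So the fast loop exactly absorbs the blowing-up coupling and leaves the weight $\tilde F_{21}\T\tilde F_{21}$, the same weight the paper builds into $P_{u,0}$. You would still have to identify this branch with $P_\alpha$ and handle bounded $\alpha$ by continuity.

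Your upper bound via an explicit composite feedback is a legitimate alternative to the paper's supersolution-plus-comparison argument, with two corrections. First, $\tilde K\tilde C$ is an output injection and cannot be cancelled by feedback. It has to be dominated instead, which works because it is $O(1)$ against the $\beta_1^{-2}$ loop and $O(\beta_1^{2n_1-1})$ into $x_0$. Second, the scaling identity is $D_1A_1D_1^{-1}=\beta_1^{-2}A_1$, not $D_1^{-1}A_1D_1=\beta_1^{-2}A_1$.
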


\begin{proof}
  Choose $m_{\ell},m_u>0$ such that
  \[
    m_{\ell}I \leq \Gamma_o\T \Gamma_o \leq m_u I,\qquad
    m_{\ell}I \leq \Gamma_i\T \Gamma_i \leq m_u I.
  \]
  Let $P_{u,1}$ be the unique solution of the
  algebraic Riccati equation:
  \[
    A_1\T P_{u,1} + P_{u,1} A_1 - m_{\ell} P_{u,1} B_1B_1\T
    P_{u,1}  +2m_{\ell}^{-1} C_1\T C_1 + P_{u,1}+I =0.
  \]
  while $P_{u,0}$ satisfies
  \[
    A_0\T P_{u,0} + P_{u,0} A_0 - \tfrac{1}{2} m_{\ell} P_{u,0} B_0B_0\T
    P_{u,0} + \eps P_{u,0}+2m_{\ell}^{-1} \tilde{F}_{21}\T \tilde{F}_{21} =0.
  \]
  Note that the kernel of $P_{\alpha}$ is related to the unobservable
  dynamics of $(C,A)$. Note that the unobservable dynamics of $(C,A)$
  in the SCB is given by the unobservable dynamics of
  $(\tilde{F}_{21},A_0)$. To be more specific, the kernel of $P_{u,0}$
  equals the unobservable dynamics of $(\tilde{F}_{21},A_0)$ while
  $P_{u,1}$ is invertible. Choose:
  \[
    P_{u} =\begin{pmatrix} 
      P_{u,0}  & 0      \\
      0       & P_{u,1} 
    \end{pmatrix}.
  \]
  Next, using some straightforward algebra, we can show that 
  $\tilde{P}_{u,\alpha} = \Gamma_s^{-T} D_1 P_{u} D_1
  \Gamma_s^{-1}$ satisfies:
  \begin{equation}\label{are2u}
    A\T \tilde{P}_{u,\alpha} + \tilde{P}_{u,\alpha} A - \alpha \tilde{P}_{u,\alpha} BB\T
    \tilde{P}_{u,\alpha} + \eps \tilde{P}_{u,\alpha} +C\T C  \leq 0
  \end{equation}
  for $\alpha$ sufficiently large, where we used that:
  \[
    \tilde{A} D_{\alpha} =\tilde{D}_{\alpha}^2 D_{\alpha} \tilde{A},\qquad
    \tilde{D}_{\alpha} D_{\alpha} \tilde{B} = \alpha^{-1/2} \tilde{B},\qquad
    \tilde{C} D_{\alpha} = C \tilde{D}_{\alpha}
  \]
  with
  \[
    \tilde{D}_{\alpha}=\begin{pmatrix} 
      I         & 0          \\
      0         & \beta_1 I 
    \end{pmatrix}.
  \]
  Given that $P_{\alpha}$ satisfies \eqref{are2} we then immediately find
  that $P_{\alpha} \leq \tilde{P}_{u,\alpha}$ as required.

  Next, we focus on the lower bound. Note:
  \[
    x_0\T P_\alpha x_0 = \inf_u \left\{ \| y \|_{\eps}^2 + \alpha^{-1}
      \| u \|_{\eps}^2\, \mid x(0)=x_0 \, \right\}
  \]
  where
  \[
    \| u \|_{\eps}^2 = \int_0^{\infty} e^{\eps t} \| u(t) \|^2\,
    \textrm{d} t.
  \]
  We denote by $L_{2,\eps}$ the class of all $L_2$ signals for which
  the above norm is finite. Note, if we define:
  \[
    x_0\T P_{\ell,1} x_0 = \inf_u \left\{ \| y
    \|_{\eps}^2 + \| u \|_{\eps}^2\, \mid x(0)=x_0 \,
  \right\}
  \]
  then it is easy to see that $P_{\alpha} \geq \alpha^{-1} P_{\ell,
    1}$.

  Next, we convert the system into SCB according to Lemma
  \ref{scb} and obtain:
  \[
    \tilde{x}_0\T \tilde{P}_\alpha \tilde{x}_0 \geq \inf_{\tilde{u}}
    \left\{ m_u^{-1} \|
      C_1\tilde{x}_2 \|_{\eps}^2 + \alpha^{-1} m_{\ell} \|
      \tilde{u}_1
      \|_{\eps}^2 + \alpha^{-1} m_{\ell} \| \tilde{u}_2 \|_{\eps}^2 \, \mid
      \tilde{x}(0)=\tilde{x}_0 \, \right\} 
  \]
  where $P_\alpha=\Gamma_s\T \tilde{P}_\alpha \Gamma_s$ and
  \[
    \tilde{x}=\begin{pmatrix} \tilde{x}_0
      \\ \tilde{x}_1 \end{pmatrix} =\Gamma_s x,\qquad u=\Gamma_i \begin{pmatrix} \tilde{u}_1
      \\ \tilde{u}_2 \end{pmatrix},\qquad \tilde{y} = \Gamma_0 y
  \]
  with the decomposition of the input into $\tilde{u}_1$ and
  $\tilde{u}_2$ is compatible with the decomposition of $\tilde{B}$
  presented in Lemma \ref{scb}. Choose $\tilde{\eps}<\eps$ such that
  $A+\tfrac{1}{2} \tilde{\eps}$ has no eigenvalues on the imaginary
  axis. If we decompose $\tilde{P}_\alpha$:
  \[
    \tilde{P}_\alpha = \begin{pmatrix} \tilde{P}_{11,\alpha} &
      \tilde{P}_{12,\alpha} \\ \tilde{P}_{21,\alpha} &
      \tilde{P}_{22,\alpha} \end{pmatrix}
  \]
  then we obtain:
  \begin{equation}\label{costt}
    \tilde{x}_{1,0}\T \tilde{P}_{22,\alpha} \tilde{x}_{1,0} \geq \inf_{\tilde{u}}
    \left\{ m_u^{-1} \|
      C_1\tilde{x}_1 \|_{\tilde{\eps}}^2 + \alpha^{-1} m_{\ell} \|
      \tilde{u}_1
      \|_{\tilde{\eps}}^2 + \alpha^{-1} m_{\ell} \| \tilde{u}_2 \|_{\tilde{\eps}}^2 \, \mid
      \tilde{x}_0(0)=0, \tilde{x}_1(0)=\tilde{x}_{1,0} \, \right\} 
  \end{equation}
  Note that we have:
  \[
    \tilde{x}_0=A_0 \tilde{x}_1+B_0\tilde{u}_1-\tilde{K}_1 C_1
    \tilde{x}_1,\qquad \tilde{x}_0(0)=0.
  \]
  Choose $F_0$ such that $A+B_0F_0+\tfrac{1}{2}\tilde{\eps} I$ is
  asymptotically stable then we have
  $\tilde{x}_0=\tilde{x}_{0,y}+\tilde{x}_{0,v}$
  provided $\tilde{u}_0=F_0\tilde{x}_{0,y}+v_0$ where:
  \[
    \tilde{x}_{0,y}=(A_0+B_0F_0)\tilde{x}_{0,y}-\tilde{K}_1 C_1
    \tilde{x}_1,\qquad \tilde{x}_{0,y}(0)=0.
  \]
  and
  \begin{equation}\label{dyn00}
    \tilde{x}_{0,v}=A_0 \tilde{x}_{0,v}+B_0 v_0 ,\qquad
    \tilde{x}_{0,v}(0)=0.
  \end{equation}
  Note that there exists $\gamma$ such that
  \[
    \| \tilde{F}_{21} \tilde{x}_{0,y} \|_{\tilde{\eps}} \leq \gamma_1 \|
    \tilde{y} \|_{\tilde{\eps}},\qquad
    \| F_{0} \tilde{x}_{0,y} \|_{\tilde{\eps}} \leq \gamma_1 \|
    \tilde{y} \|_{\tilde{\eps}}
  \]
  Next, note that in order to obtain a finite cost in \eqref{costt} we need
  $\tilde{y}$, $\tilde{u}_1$ and $\tilde{u}_2$ in
  $L_{2,\tilde{\eps}}$. This guarantees that we need that $\tilde{x}_0$
  needs to be in $L_{2,\tilde{\eps}}$ as well since the system is
  detectable. Hence if we choose
  \[
    \tilde{u}_0=F_0\tilde{x}_{0,y}+v_0,\qquad
    \tilde{u}_1=-\tilde{F}_{21}(\tilde{x}_{0,y}+\tilde{x}_{0,v}) +v_1
  \]
  then $v_1$ must be such that $\tilde{x}_{0,v}\in L_{2,\tilde{\eps}}$
  But then \eqref{dyn00} implies there exists $\gamma_2>1$ such that
  \begin{equation}\label{boundd21}
    \| \tilde{F}_{21} \tilde{x}_{0,v} \|_{\tilde{\eps}} \leq \gamma_2 \|
    v_0 \|_{\tilde{\eps}}
  \end{equation}
  (recall that $A_0+\tfrac{1}{2}\tilde{\eps}$ does not have any
  imaginary axis eigenvalues) and hence this is a standard bound using
  $L_\infty$ instead of $H_\infty$ which yields:
  \[
    \tilde{x}_{1,0}\T \tilde{P}_{22,\alpha} \tilde{x}_{1,0} \geq \inf_{v_0,v_1}
    \left\{ (m_u^{-1}-\gamma_1\alpha^{-1}m_{\ell}-\gamma_1\alpha^{-1}m_{\ell}) \|
      C_1\tilde{x}_1 \|_{\tilde{\eps}}^2 + \alpha^{-1} m_{\ell} \|
      v_0 \|_{\tilde{\eps}}^2 + \alpha^{-1} m_{\ell} \|
      \tilde{F}_{21}\tilde{x}_{0,v}
      +v_2 \|_{\tilde{\eps}}^2 \, \mid
      \tilde{x}_0(0)=0, \tilde{x}_1(0)=\tilde{x}_{1,0} \, \right\} 
  \]
  This gives us:
  \begin{equation}\label{costt2}
    \tilde{x}_{1,0}\T \tilde{P}_{22,\alpha} \tilde{x}_{1,0} \geq \inf_{v_0,v_1}
    \left\{ \tfrac{1}{2} m_u^{-1} \|
      C_1\tilde{x}_1 \|_{\tilde{\eps}}^2 + \alpha^{-1} m_{\ell} \|
      v_0 \|_{\tilde{\eps}}^2 + \tfrac{1}{\gamma_2} \alpha^{-1} m_{\ell} \|
      \tilde{F}_{21}\tilde{x}_{0,v}
      +v_2 \|_{\tilde{\eps}}^2 \, \mid
      \tilde{x}_0(0)=0, \tilde{x}_1(0)=\tilde{x}_{1,0} \, \right\} 
  \end{equation}
  for $\alpha$ sufficiently large. After another bounding, we get
  \begin{equation}\label{boundd22}
    \tilde{x}_{1,0}\T \tilde{P}_{22,\alpha} \tilde{x}_{1,0} \geq \inf_{v_0,v_1}
    \left\{ \tfrac{1}{2} m_u^{-1} \|
      C_1\tilde{x}_1 \|_{\tilde{\eps}}^2  + \tfrac{1}{\gamma_2} \alpha^{-1} m_{\ell} \|
      v_2 \|_{\tilde{\eps}}^2 \, \mid
      \tilde{x}_0(0)=0, \tilde{x}_1(0)=\tilde{x}_{1,0} \, \right\}
  \end{equation}
  subject to the dynamics:
  \begin{equation}\label{dyn22}
    \dot{\tilde{x}}_1 = (A_1-B_1F_{22}-\tilde{K}_1)\tilde{x}_1 +
    B_1v_1,\qquad \tilde{y}_1 = C_1 \tilde{x}_1
  \end{equation}
  Note that the optimization in \eqref{boundd22} does not depend on
  $v_0$ or $\tilde{x}_0$. Hence if we define:
  \[
    \tilde{x}_{1,0}\T \tilde{P}_{\ell,\alpha} \tilde{x}_{1,0} \geq \inf_{v_1}
    \left\{ \tfrac{1}{2} m_u^{-1} \|
      C_1\tilde{x}_1 \|_{\tilde{\eps}}^2  + \tfrac{1}{\gamma_2} \alpha^{-1} m_{\ell} \|
      v_2 \|_{\tilde{\eps}}^2 \, \mid  \tilde{x}_1(0)=\tilde{x}_{1,0} \, \right\}
  \]
  subject to \eqref{dyn22} then we obtain:
  \[
    \tilde{P}_{22,\alpha} \geq \tilde{P}_{\ell,\alpha}
  \]
  We have:
  \[
    (A_1-\tilde{K}_2C_1-B_1\tilde{F}_{22})\T \tilde{P}_{\ell,\alpha}+
    \tilde{P}_{\ell,\alpha}
    (A_1-\tilde{K}_2C_1-B_1\tilde{F}_{22})-\alpha \tfrac{m_{\ell}}{\gamma_2}
    \tilde{P}_{\ell,\alpha}B_1 B_1\T \tilde{P}_{\ell,\alpha} +
    \tfrac{1}{2} C_1\T
    C_1 +\tilde{\eps} \tilde{P}_{\ell,\alpha} = 0
  \]
  Clearly
  \[
    \tilde{P}_{\ell,\alpha}\tilde{K}_2C_1 +
    C_1\T \tilde{K}_2\T \leq 8 m_u
    \tilde{P}_{\ell,\alpha}\tilde{K}_2\tilde{K}_2\T
    \tilde{P}_{\ell,\alpha} + \tfrac{1}{8} m_u^{-1} C_1\T C_1 \leq
    \tfrac{1}{4}\tilde{\eps} \tilde{P}_{\ell,\alpha} + \tfrac{1}{8} m_u^{-1} C_1\T C_1
  \]
  for large enough $\alpha$ since $ \tilde{P}_{\ell,\alpha}\rightarrow
  0$. Moreover.
  \[
    \tilde{P}_{\ell,\alpha}B_1\tilde{F}_{22} + \tilde{F}_{22}\T B_1\T \tilde{P}_{\ell,\alpha} \leq
    \tfrac{\alpha m_{\ell}}{2\gamma_2} \tilde{P}_{\ell,\alpha} B_1 B_1\T  \tilde{P}_{\ell,\alpha} 
    +  \tfrac{2\gamma_2}{\alpha m_{\ell}} \tilde{F}_{22}\T
    \tilde{F}_{22}
  \]
  We claim
  \[
    \tfrac{2\gamma_2}{\alpha m_{\ell}} \tilde{F}_{22}\T
    \tilde{F}_{22} \leq \tfrac{1}{4} \tilde{\eps}
    \tilde{P}_{\ell,\alpha}
  \]
  for $\alpha$ large enough. Assume not, then there exists for some
  $\mu>0$ a sequence $\alpha_n\rightarrow \infty$ such that
  \[
    \tilde{x}_{1,0}\T \tilde{P}_{\ell,\alpha_n}\tilde{x}_{1,0} \geq \mu \alpha_n^{-1}
  \]
  But given the definition of $\tilde{P}_{\ell,\alpha_n}$ this implies
  there exists a sequence $v_{1,n}\in L_{2,\tilde{\eps}}$ resulting in a state
  $\tilde{x}_{1,n}$ and output $\tilde{y}$ such that
  \[
    \| v_{1,n} \|_{\tilde{\eps}} \leq \mu, \qquad \| \tilde{y}_n
    \|_{\tilde{\eps}} \rightarrow 0 
  \]
  but then the sequence $\{v_{1,n}\}$ has a weakly convergent
  subsequence with limit $v_1^{*}$ with corresponding output
  $\tilde{y}^{*}=0$ and a nonzero initial condition. This yields a
  contradiction since by construction the system with state
  $\tilde{x}_1$ is invertible and has no zero-dynamics.

  Using the above bounds, we find that $\tilde{P}_{\ell,\alpha}$
  satisfies:
  \[
    A_1\T \tilde{P}_{\ell,\alpha}+
    \tilde{P}_{\ell,\alpha} A_1-\alpha \tfrac{m_{\ell}}{2\gamma_2}
    \tilde{P}_{\ell,\alpha}B_1 B_1\T \tilde{P}_{\ell,\alpha} +
    \tfrac{1}{4}m_u^{-1} C_1\T
    C_1 +\tfrac{1}{2} \tilde{\eps} \tilde{P}_{\ell,\alpha} \leq 0
  \]
  But then
  \[
    \tilde{P}^{*}_{\ell,\alpha} \leq \tilde{P}_{\ell,\alpha} \leq
    \tilde{P}_{22,\alpha}
  \]
  where $\tilde{P}^{*}_{\ell,\alpha}$
  satisfies:
  \[
    A_1\T \tilde{P}^{*}_{\ell,\alpha}+
    \tilde{P}^{*}_{\ell,\alpha} A_1-\alpha \tfrac{m_{\ell}}{2\gamma_2}
    \tilde{P}^{*}_{\ell,\alpha}B_1 B_1\T \tilde{P}^{*}_{\ell,\alpha} +
    \tfrac{1}{4} C_1\T
    C_1 +\tfrac{1}{2} \tilde{\eps} \tilde{P}^{*}_{\ell,\alpha} =  0
  \]
  It is now easy to check that $\tilde{P}^{*}_{\ell,\alpha}$ has a
  nice diagonal structure. Let $P_{\ell,1}$ satisfy
  \[
    A_{1}\T P_{\ell,22} + P_{u,1} A_{1} -
    \tfrac{m_{\ell}}{2\gamma_2} P_{\ell,22} B_1B_1\T
    P_{\ell,22}  + \tfrac{1}{2} m_{u}^{-1} C_1\T C_1 + \tfrac{1}{2}
    \eps P_{\ell,22} =0.    
  \]
  Next, using some straightforward algebra, we can show that 
  \[
    \tilde{P}^{*}_{\ell,\alpha} = D_1 P_{\ell,22}
    D_1.
  \]
  We have now shown that:
  \[
    \tilde{P}_{\alpha}  \geq \alpha^{-1} \tilde{P}_{\ell,1} \geq 0
  \]
  where $P_{\ell,1}=\Gamma_s\T P_{\ell,1} \Gamma_s$ and
  \begin{equation}\label{fgbound}
    \tilde{P}_{22,\alpha} \geq D_1 P_{\ell,22}
    D_1
  \end{equation}
  Let
  \[
    \tilde{P}_{\ell,1} = \begin{pmatrix}
      \tilde{P}_{11,\ell,1} & \tilde{P}_{12,\ell,1} \\
      \tilde{P}_{21,\ell,1} & \tilde{P}_{22,\ell,1} 
    \end{pmatrix} \geq 0
  \]
  Since $\dim \ker \tilde{P}_{\ell,1}= \dim \ker
  \tilde{P}_{11,\ell,1}$ we find that there exists $\nu$ such that
  \[
    \tilde{P}_{\alpha}  \geq  \alpha^{-1} \tilde{P}_{\ell,1}  \geq  \begin{pmatrix}
      \nu \alpha^{-1} \tilde{P}_{11,\ell,1} & 0 \\
      0                         & 0
    \end{pmatrix}
  \]
  Combining with \eqref{fgbound} this implies
  \[
    \tilde{P}_{\alpha}  \geq \begin{pmatrix}
      \nu\alpha^{-1} \tilde{P}_{11,\ell,1} &  0 \\ 0 &  D_1 P_{\ell,22}
      D_1
    \end{pmatrix}.
  \]
  Choosing
  \[
    P_\ell = \begin{pmatrix}
      \nu \tilde{P}_{11,\ell,1} &  0 \\ 0 &  P_{\ell,22}
    \end{pmatrix}  
  \]
  then gives the required lower bound in the lemma.
\end{proof}

\begin{lemma}\label{tty5}
  Consider a linear system
  \[\
    \begin{system*}{ccl}
      \dot{x}_1 &=& A_1 x_1 + B_1u_1 + E_1w_1 \\
      y_1 &=& C_{1} x_1
    \end{system*}
  \]
  with the structure outlined Lemma \ref{scb}. Assume $w_1$ is bounded
  and $\displaystyle\lim_{t\rightarrow \infty} \alpha = \infty$. Then
  for any $\eps>0$ there exists $T$ and $\delta>0$ such that for
  $t_0>T$,
  \[
    \int_{t_0}^{t_0+1}  \alpha^{-1} u_1\T u_1\, \textrm{d}t
    <\delta\qquad \text{and}\qquad
    \int_{t_0}^{t_0+1}  y_1\T y_1 \,\textrm{d}t
    <\delta
  \]
  that
  \begin{equation}\label{VVb}
    \int_{t_0}^{t_0+1} V_{\alpha}(t)\, \textrm{d}t \leq \eps
  \end{equation}
  where
  \begin{equation}\label{VCCV}
    V_{\alpha} = x_1\T \begin{pmatrix} 
      I & 0      & \cdots & 0      \\
      0         & \beta^4 I & \ddots & \vdots \\
      \vdots    & \ddots & \ddots & 0      \\
      0         & \cdots & 0      & \beta^{4n_1-4} I
    \end{pmatrix} x_1,
  \end{equation}
  with $\beta=\alpha^{-1/(4n_1)}$. 
\end{lemma}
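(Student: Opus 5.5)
The plan is to rescale the chain-of-integrators structure of $(A_1,B_1,C_1)$ from Lemma \ref{scb} so that the weighted form \eqref{VCCV} becomes an unweighted energy of intermediate derivatives of the output. The tool is an interpolation (Gagliardo--Nirenberg / Landau--Kolmogorov) inequality on the fixed interval $[t_0,t_0+1]$.

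\textbf{Notation and splitting.} Write $x_1=(x_{1,1},\ldots,x_{1,n_1})$ according to the blocks of $A_1$. Then $\dot{x}_{1,k}=x_{1,k+1}+(E_1w_1)_k$ for $k<n_1$, $\dot{x}_{1,n_1}=u_1+(E_1w_1)_{n_1}$, and $y_1=x_{1,1}$. Since $\beta^{4n_1}=\alpha^{-1}$, we have
\[
V_\alpha=\sum_{k=1}^{n_1}\beta^{4(k-1)}\|x_{1,k}\|^2 ,
\]
and the hypothesis on $u_1$ reads $\int_{t_0}^{t_0+1}\beta^{4n_1}u_1\T u_1\,\textrm{d}t<\delta$. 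On each interval I split $x_1=x^h+x^w$. Here $x^w$ solves the chain driven only by $E_1w_1$ with $x^w(t_0)=0$, and $x^h$ solves the chain driven only by $u_1$ and starts at $x_1(t_0)$. Because $w_1$ is bounded and the interval has length one, $\|x^w(t)\|\le c$ on $[t_0,t_0+1]$ with $c$ independent of $t_0$. Hence $\beta^{4(k-1)}\int\|x^w_k\|^2\le c^2\beta^{4}$ for every $k\ge 2$, and this tends to zero as $\alpha\to\infty$.

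\textbf{The $k=1$ term.} The first block is $\int\|y_1\|^2<\delta$ directly, with no weighting needed.

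\textbf{The terms $k\ge2$.} For the homogeneous part, $x^h_k=(y^h)^{(k-1)}$ and $(y^h)^{(n_1)}=u_1$, where $y^h=y_1-x^w_1$ satisfies $\|y^h\|_{L_2[t_0,t_0+1]}\le\sqrt{\delta}+c$. Set $\theta=(k-1)/n_1$. The interpolation inequality on a unit interval gives
\[
\|(y^h)^{(k-1)}\|^2\le C\bigl(\|y^h\|^{2(1-\theta)}\|u_1\|^{2\theta}+\|y^h\|^2\bigr),
\]
with $C$ depending only on $n_1$. Multiplying by $\beta^{4(k-1)}=(\beta^{4n_1})^{\theta}$ yields
\[
\beta^{4(k-1)}\|x^h_k\|^2\le C(\sqrt{\delta}+c)^{2(1-\theta)}\bigl(\alpha^{-1}\|u_1\|^2\bigr)^{\theta}+C\beta^{4(k-1)}(\sqrt\delta+c)^2
\le C'\delta^{\theta}+C'\beta^4 .
\]
Summing over $k$ and adding the $x^w$ contributions and the $k=1$ term gives $\int_{t_0}^{t_0+1}V_\alpha\le \delta+C''(\delta^{1/n_1}+\beta^4)$. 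So I first choose $\delta$ small enough that $\delta+C''\delta^{1/n_1}<\eps/2$. Then I choose $T$ so that $C''\beta(t)^4<\eps/2$ for $t>T$, which is possible since $\alpha\to\infty$. This gives \eqref{VVb}.

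\textbf{Main obstacle.} The main difficulty is the time-varying $\alpha$ inside the interval. On $[t_0,t_0+1]$ the gain $\alpha$ grows at most linearly (by \eqref{protocol2.3}, $\dot\alpha_i\le1$). So for large $t_0$ the ratio $\alpha(t)/\alpha(t_0)$ is within a factor of $2$ there, and $\beta$ can be frozen at $\beta(t_0)$ at the cost of constants.

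The second point to check is the interpolation inequality with a constant uniform in $t_0$. It holds on a fixed-length interval by translation invariance. The extra lower-order term $\|y^h\|^2$ is exactly what the bounded bound on $y^h$ absorbs.
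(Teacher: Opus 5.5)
Your proposal is correct and follows essentially the same route as the paper: you split $x_1$ into a $w_1$-driven part and a part driven by $u_1$ and the initial state, apply the Gagliardo--Nirenberg interpolation inequality to the latter's output (with $(y^h)^{(n_1)}=u_1$), and use $\beta^{4(k-1)}=(\alpha^{-1})^{(k-1)/n_1}$ to turn the bound on $\int\alpha^{-1}u_1\T u_1$ into smallness of each weighted block. You are in fact more careful than the paper's sketch on three points it glosses over. You restart the disturbance part at $t_0$, so it is uniformly bounded on each window; a chain of integrators driven by a bounded $w_1$ from time zero need not have bounded state. You bound the disturbance part's contribution to the blocks $k\ge 2$ by $O(\beta^4)$. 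And you freeze $\alpha$ on the window via $0\le\dot\alpha\le 1$ from \eqref{protocol2.3}; this is not among the lemma's stated hypotheses, but some such control is genuinely needed, since large jumps of $\alpha$ inside a window can break the conclusion.
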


\begin{proof}
  This is a consequence of the Gagliardo-Nirenberg inequality. Denote
  by $x_{1,0}$ the value of the state for given $u_1$ and
  given initial conditions if we set $w_1=0$. while $x_{1,w_1}$ denotes the value of the
  state for given $w_1$ if we set $u_1=0$ and the initial conditions
  to zero. Let $y_{1,0}=Cx_{1,0}$ and $y_{1,w_1}=Cx_{1,w_1}$. We then have:
  \[
    x_1 =x_{1,w_1}+x_{1,0}.
  \]
  Since $w_1$ is bounded we find that $x_{1,w_1}$ is bounded. This
  implies that $y_{1,0}$ is bounded given that $y_1$ is small and
  $y_{1,w_1}$ is bounded. The  Gagliardo-Nirenberg inequality (see
  \cite{nirenberg}) implies there exits a constants $C_1, C_2>0$ (independent of
  initial conditions, $w_i$ and $u_i$) such that 
  \[
    \| y^{(j)}_{1,0} \|_2 \leq C_1 \| y_{1,0}^{(m)} \|_2^{j/m} \| y_{1,0}
    \|_2^{1-j/m} + C_2 \| y_{1,0} \|_2,
  \]
  which yields:
  \[
    \beta_i^{2j} \| y^{(j)}_{1,0} \|_2 \leq C_1 \delta^{j/2m} \| y_{1,0}
    \|_2^{1-j/m} + C_2 \beta_i^{2j} \| y_{1,0} \|_2.
  \]  
  For $\alpha$ large enough (i.e.\ $T$ large enough) we find that
  $\beta$ is arbitrarily small and hence for $\delta$ small enough, we
  have that \eqref{VVb} is satisfied.
\end{proof}

\bibliographystyle{plainnat}

\end{document}